\theoremstyle{plain}
\newtheorem{proposition}{Proposition}
\newtheorem{lemma}{Lemma}
\newtheorem{theorem}{Theorem}
\newtheorem{corollary}{Corollary}
\newtheorem{conjecture}{Conjecture}
\theoremstyle{definition}
\newtheorem{definition}{Definition}
\theoremstyle{remark}
\title{General Relativistic Shock Waves that Exhibit an Accelerated Expansion}
\author{ \href{https://orcid.org/0000-0001-9255-6281}{\includegraphics[scale=0.06]{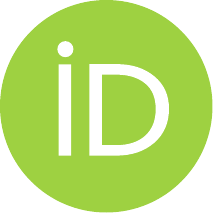}\hspace{1mm}Christopher Alexander}\\
Department of Mathematics\\
Rutgers University\\
Piscataway, NJ 08854\\
\texttt{christopher.alexander@rutgers.edu}\\
}
\begin{document}

\maketitle

\begin{abstract}

This paper concerns the construction and analysis of a new family of exact general relativistic shock waves. The construction resolves the long-standing open problem of determining the expanding waves created behind a shock-wave explosion within a static isothermal sphere, akin to stellar ignition. In particular, these shock-wave spacetimes are formed through matching a family of self-similar asymptotically Friedmann spacetimes to the family of Tolman--Oppenheimer--Volkoff spacetimes. The matching is accomplished in Schwarzschild coordinates where the shock waves appear one derivative less regular than they actually are. Separately, both families contain singularities, but as matched shock-wave spacetimes, they are singularity free. It was previously unknown whether the matching of the two families could be achieved within a region where both families are nonsingular. Indeed, for a pure radiation equation of state, the matching occurs near the sonic point of the interior expanding wave and this makes the analysis quite delicate, both numerically and formally. It is for this reason the construction is accompanied by a rigorous existence proof in the pure radiation case, which forms the main result of this paper. The analysis extends to consider the matching of any self-similar expanding wave with a Tolman--Oppenheimer--Volkoff spacetime, with potentially distinct equations of state each side of the shock. The asymptotically Friedmann shock-waves exhibit an accelerated expansion, analogous to the acceleration modelled by the cosmological constant in the Standard Model of Cosmology. However, unlike in the Standard Model, these shock-wave spacetimes solve the Einstein field equations in the absence of a cosmological constant, opening up the question of whether a purely mathematical mechanism could account for the accelerated expansion observed today, rather than dark energy.

\end{abstract}

\keywords{General Relativity \and Self Similar \and Shock Wave \and Cosmology \and Dark Energy}

This material is based upon work supported by the National Science Foundation under Grant No. 1809311.

\vfill

\pagebreak

\tableofcontents

\vfill

\pagebreak

\section{Introduction}\label{Section1}

Consider a static isothermal sphere with an inverse square density profile held in gravitational equilibrium by its own pressure. These spherically symmetric solutions of the Einstein--Euler equations form a subset of the Tolman--Oppenheimer--Volkoff (TOV) spacetimes, some of which are considered as idealised models for stars. The spacetimes with an inverse square density profile are physically unstable due to the singularity in pressure and density at the radial centre. In this paper, the singular core is removed and replaced by an expanding wave, separated from the rest of the static spacetime by a spherical shock surface. The resulting spacetime is the general relativistic analogue of a shock-wave explosion within a static isothermal sphere. In particular, what is determined is the physically relevant family of interior expanding waves that can be matched to an exterior static sphere to form a global general relativistic shock wave. Most importantly, these waves are determined in the case of pure radiation, the state of matter in the Radiation Dominated Epoch of the Early Universe.

These shock waves conserve mass-energy and momentum across the shock surface and produce no delta function sources in Schwarzschild coordinates. Moreover, there exists a local coordinate transformation for which the solution has optimal metric regularity, which is $C^{1,1}$. Such properties make these shock waves true weak solutions of the Einstein--Euler equations, with the expanding waves given as exact, but not explicit, solutions of a system of ordinary differential equations (ODE). For a more in-depth consideration of optimal metric regularity, see \cite{RT2020A,RT2020B,RT2020C,RT2020D} and the subsequent papers of Reintjes and Temple.

The physically relevant family of expanding waves is a subset of a larger family of spherically symmetric self-similar spacetimes first considered by Cahill and Taub in 1971 \cite{CT1971}, but the problem of determining whether members of this family are the actual expanding waves of outgoing shock waves has remained open. This paper resolves this open problem for shock waves expanding outward into a static singular isothermal sphere such that the global solution regularises the singularity at the radial centre. The physically relevant family consists of spacetimes that are asymptotically Friedmann approaching the radial origin at some fixed time, or approaching the distant future at some fixed radius. This family contains the flat Friedmann--Lemaître--Robertson--Walker (FLRW) spacetime, representing the current Standard Model of Cosmology, with all other members corresponding to self-similar perturbations of this spacetime. This family was first identified by Carr and Yahil \cite{CY1990}, and later independently identified by Smoller and Temple \cite{TS2009}. Carr and Yahil interpret these perturbations as density perturbations, whereas Smoller and Temple interpret them as perturbations to the magnitude of accelerated expansion of the spacetime.

Smoller and Temple demonstrate in \cite{ST2012A} that the parameter corresponding to the magnitude of accelerated expansion, referred to as the \emph{acceleration parameter}, mimics the acceleration of a flat FLRW spacetime with positive cosmological constant to leading order. Given that the inclusion of a shock wave into a cosmological model is a simple and natural way to bound the total mass of a big bang, it is conjectured that the anomalous acceleration incurred in the expansion wave created behind a big bang shock wave is responsible for the observed accelerated expansion, rather than an acceleration induced by some form of dark energy. A dual-state model in the form of a general relativistic shock wave is essential for this conjectured mechanism of accelerated expansion, noting that the limited parameter freedom in spherically symmetric self-similar perfect fluid spacetimes determines the acceleration parameter upon fixing the equation of state each side of the shock. It would be remarkable if the accelerations specified by a shock wave model matched the accelerations currently observed. One caveat is that such a model requires our galaxy to be close to the centre of expansion, as this is the only region where the spacetime is almost flat. This breaks the Copernican principle in a spatial sense, by placing our galaxy in a special place in the observable universe. However, it does not place our galaxy in a special place in time, as is the case for the Standard Model of Cosmology, which assumes our galaxy exists during a time when the mass-energy contribution of dark energy is on the same order of magnitude as matter.

Such a \emph{Big Wave} or \emph{Big Shock} cosmological model places the present observable Universe within the expanding shock surface, which, if subluminal, will eventually come into view \cite{ST2012B}. This paper works towards such a model by constructing an analogous shock wave for which the shock surface lies within the Hubble radius, that is, the shock surface would already be observable if taken as a cosmological model. The motivation for considering shock waves in the Radiation Dominated Epoch is that the accelerated expansion exhibited in this epoch influences the accelerated expansion found in the Matter Dominated Epoch that we inhabit today. This connection is considered by Alexander, Temple and Vogler in \cite{ATV2026}, with an interesting consequence of this paper being the realisation that the flat FLRW spacetime in the Matter Dominated Epoch is unstable to spherical perturbations. This opens up the question of what alternative spacetimes in the Matter Dominated Epoch we are more likely to find ourselves in, and in particular, what spacetimes in the Radiation Dominated Epoch give rise to them.

The construction of a shock wave in the Radiation Dominated Epoch with an asymptotically Friedmann interior and a shock surface beyond the Hubble radius, along with formalising the details of the transition into the Matter Dominated Epoch, are topics of current research for Alexander, Temple and Vogler. It is useful to note that Smoller and Temple have already identified a method for matching an FLRW and a static spacetime beyond the Hubble radius in \cite{ST2003}, with the details provided in \cite{ST2004}. This method involves placing the whole Universe, including the shock surface, within the Schwarzschild radius of a time reversed black hole. However, Smoller and Temple did not fully resolve the expansion wave behind the shock for pure radiation equations of state each side of the shock surface.

In this paper, solutions of the Einstein--Euler equations are assumed to be spherically symmetric and self-similar in the variable $\xi=\frac{r}{t}$, that is, self-similarity of the first kind. These two assumptions reduce the Einstein--Euler equations, a system of nonlinear partial differential equations, to a system of nonlinear ordinary differential equations in the single variable $\xi$. It is under these assumptions that Cahill and Taub in \cite{CT1971} have established criteria for the uniqueness of solutions. The flat FLRW and TOV spacetimes can be written as explicit solutions to these equations when coupled with barotropic equations of state. For self-similar solutions of the first kind, barotropic equations of state are restricted to the form $p=\sigma\rho$ for some constant $\sigma$, which is referred to as the \emph{equation of state parameter}. It is also shown in \cite{CT1971} that the TOV spacetimes form the unique family of static perfect fluid spacetimes that are spherically symmetric and self-similar of the first kind. This family forms the exterior of all general relativistic shock waves considered in this paper.

Cahill and Taub were also the first to construct an explicit general relativistic shock wave by matching the explicitly known pure radiation ($\sigma=\frac{1}{3}$) flat FLRW spacetime to an explicitly known TOV spacetime. In addition, they claimed the existence of a two-parameter family of pure radiation self-similar spacetimes that could be matched to a TOV spacetime to form a shock wave in a subsequent paper that was not published and possibly never completed. This paper resolves this open problem for the one-parameter subfamily of pure radiation asymptotically Friedmann spacetimes.

Subsequent to Cahill and Taub's pioneering paper, many substantial advances in the field of general relativistic shock waves came from Smoller and Temple throughout the 1990s. In \cite{ST1994}, a number of theorems concerning the regularity of spherically symmetric general relativistic shock waves are given, and in \cite{ST1995}, a criteria for determining the Lax characteristic conditions of these shock waves is introduced. Furthermore, Smoller and Temple in \cite{ST1995} generalise Cahill and Taub's shock wave to a one-parameter family of shock waves, with the free parameter corresponding to either the interior equation of state parameter or the exterior equation of state parameter, but not both. This is an important detail as it implies an additional parameter is needed to construct a general relativistic shock wave in the Radiation Dominated Epoch, that is, a shock wave with a pure radiation equation of state on both the interior and exterior side of the shock surface. The statement of these results are provided in the next section.

This additional parameter is identified by Smoller and Temple in \cite{ST2012A}, which derives the family of asymptotically Friedmann spacetimes in self-similar Schwarzschild coordinates local to the radial centre. This family is a one-parameter or two-parameter family depending on whether the equation of state parameter is included in the count. It is common in the literature to exclude this parameter from the count but this paper does not follow this convention and instead makes clear whether the equation of state parameter is included in each instance. This additional parameter is the aforementioned acceleration parameter and it is in \cite{ST2012A} that the comparison between the acceleration parameter and the cosmological constant is formalised.

Unbeknown to Smoller and Temple at the time, Carr and Yahil already identified the family of asymptotically Friedmann spacetimes in self-similar comoving coordinates, using the same convention employed by Cahill and Taub. However, unlike the exact local form derived by Smoller and Temple, Carr and Yahil instead derived an asymptotic local form. The complete classification of spherically symmetric self-similar solutions of the first kind was later completed by Carr and Coley in 2000 \cite{CC2000}. In addition to determining the number of free parameters present in each family of spacetimes, they also provide a detailed discussion of their physical relevance.

There is a fair amount of preliminary theory that needs to be introduced prior to the construction and analysis of the new family of general relativistic shock waves, and this is the purpose of Section \ref{Section2}. The system of ODE introduced by Cahill and Taub, and used by Carr and others, are derived using self-similar comoving coordinates, whereas the ODE introduced by Smoller and Temple \cite{ST2012A} are derived using self-similar Schwarzschild coordinates. There are advantages to both approaches, but the latter approach is more useful in the construction of shock-wave solutions, so this approach is adopted. Once Smoller and Temple's system of ODE are introduced, the TOV, FLRW, and asymptotically FLRW solutions are derived in self-similar Schwarzschild coordinates. Proceeding these derivations, the shock wave construction process is introduced, followed by important results related to regularity and the Lax characteristic conditions.

Section \ref{Section3} begins with an alternative derivation of the explicit one-parameter family of general relativistic shock waves originally derived in \cite{ST1995}. This warm-up derivation introduces Lemma \ref{L2}, which is central to the construction of the more general two-parameter family of shock waves, with both parameter counts including $\sigma$. Since the asymptotically Friedmann spacetimes are not known explicitly, numerical approximations are used to construct the two-parameter family of shock waves, with Lemma \ref{L5} from Section \ref{Section5} justifying these approximations. The acceleration parameter and shock position are then approximated in the pure radiation case.

Section \ref{Section4} begins with Lemma \ref{L3}, which generalises the Lax characteristic conditions to a broad family of outgoing general relativistic shock waves. This is then applied to flat FLRW spacetimes as part of Theorem \ref{T4}, which is a refinement of an equivalent theorem from \cite{ST1995}. Finally, an analysis of the Rankine--Hugoniot jump conditions results in Theorem \ref{T6}, which establishes the Lax characteristic conditions of all outgoing spherically symmetric self-similar general relativistic shock waves with a TOV exterior.

Section \ref{Section5} begins by introducing Lemma \ref{L5}, a monotonicity lemma that bounds the family of asymptotically Friedmann spacetimes within a broad region of the ODE system phase space. The remainder of this section is dedicated to the proof of Theorem \ref{T7}, which is a rigorous existence proof of the unique shock wave formed from matching an asymptotically Friedmann spacetime to a TOV spacetime, each with a pure radiation equation of state. Theorem \ref{T7} is the main result of this paper and the first result to establish the existence of a non-explicit global weak solution of the ODE system derived by Smoller and Temple. The proof is complicated by the fact that the matching of the interior and exterior spacetimes occur near the sonic surface of the ODE system phase space.

Finally, Section \ref{Section6} introduces a conjecture regarding the rigorous existence of the full family of shock waves and discusses some open problems that remain.

\section{Preliminaries}\label{Section2}

\subsection{The Spherically Symmetric Self-Similar Field Equations in Schwarzschild Coordinates}\label{Subsection2.1}

Consider first the Einstein field equations
\begin{align}
	G = \kappa T,\label{2.1}
\end{align}
where $G$ is the Einstein curvature tensor, $T$ the stress-energy-momentum tensor and
\begin{align*}
	\kappa = \frac{8\pi\mathcal{G}}{c^4},
\end{align*}
the coupling constant. The coupling constant is comprised of $c$, the speed of light, and $\mathcal{G}$, the gravitational constant. When modelling a perfect fluid, that is, one without shear stresses, viscosity or heat conduction, the stress-energy-momentum tensor takes the form
\begin{align}
	T = \left(\rho+\frac{p}{c^2}\right)\vec{u}\otimes\vec{u} + pg,\label{2.2}
\end{align}
where $g$ is the metric tensor, $\rho$ the fluid mass-energy density, $p$ the fluid pressure and $\vec{u}$ the fluid four-velocity. By construction, the Einstein field equations with a perfect fluid source become the compressible Euler equations in the Newtonian limit $c\to\infty$, and as such, these equations are referred to as the Einstein--Euler equations.
\begin{definition}
	\label{D1}
	A \emph{similarity solution} is a solution that is spherically symmetric and self-similar of the first kind, that is, the self-similar variable takes the form
	\begin{align*}
		\xi = \frac{r}{t},
	\end{align*}
	where $t$ and $r$ are the temporal and radial coordinates respectively.
\end{definition}
From now onwards, the constants $c$ and $\mathcal{G}$ are set to one and all solutions are assumed to be similarity solutions of the Einstein--Euler equations. Any metric of a similarity solution may be written, without loss of generality, in the self-similar Schwarzschild coordinate form
\begin{align}
	ds^2 = -B(\xi)dt^2 + \frac{1}{A(\xi)}dr^2 + r^2d\Omega^2,\label{2.3}
\end{align}
where $A,B>0$ and $d\Omega^2$ denotes the standard metric on the unit two-sphere, that is
\begin{align*}
	d\Omega^2 = d\theta^2 + \sin^2(\theta)d\phi^2.
\end{align*}
Under the assumption of spherical symmetry the fluid four-velocity may also be written without loss of generality as
\begin{align*}
	\vec{u} = (u^0,u^1,0,0).
\end{align*}
Furthermore, under the normalisation condition
\begin{align*}
	g(\vec{u},\vec{u}) = -1,
\end{align*}
the fluid four-velocity has only one independent component. This normalisation condition implies that the fluid four-velocity can be fully specified through the following single variable.
\begin{definition}
	\label{D2}
	The \emph{Schwarzschild coordinate velocity} is defined as
	\begin{align*}
		v = \frac{1}{\sqrt{AB}}\frac{u^1}{u^0}.
	\end{align*}
\end{definition}
Together with $A$, $B$, $\rho$ and $p$, the Schwarzschild coordinate velocity $v$ is one of five unknown variables that completely specify a similarity solution. As there are only four independent equations under all aforementioned assumptions, an equation of state is required to close the system. In this light, we assume that all solutions have a barotropic equation of state, that is, one of the form $p=p(\rho)$. Due to symmetry constraints, it is demonstrated in \cite{CT1971} that all similarity solutions with a barotropic equation of state must take the more restricted linear form
\begin{align}
	p = \sigma\rho\label{2.4}
\end{align}
for some constant $\sigma$. Note that this only applies to spherically symmetric solutions of the self-similar variable $\xi$. Self-similar solutions of the second kind, which include those that admit the self-similar variable
\begin{align*}
	\xi_\lambda = \frac{r}{t^{\lambda}}
\end{align*}
for some non-zero constant $\lambda$, have different restricted forms when admitting a barotropic equation of state. Physically, $\sigma$ represents the square root of the sound speed in the fluid, so for a strictly positive pressure and subluminal sound speed, we require $0<\sigma<1$.
\begin{definition}
	\label{D3}
	The special case $\sigma=\frac{1}{3}$ corresponds to the extreme relativistic limit of free particles and the state of matter known as \emph{pure radiation}.
\end{definition}
A perfect fluid stress-energy-momentum tensor with a pure radiation equation of state is a common model for matter during the Radiation Dominated Epoch of the Early Universe. A pure radiation equation of state is not only physically significant, but also mathematically, as it results in a trace-free stress-energy-momentum tensor.

The similarity assumption decreases the complexity of the Einstein--Euler equations considerably, by reducing them to a system of three nonlinear ODE. Thus by substituting (\ref{2.2}) and (\ref{2.3}) into (\ref{2.1}), Smoller and Temple \cite{ST2012A} demonstrate that the Einstein--Euler equations take the form:
\begin{align}
	\xi\frac{dA}{d\xi} &= -\frac{(3+3\sigma)(1-A)v}{\{\cdot\}_S},\label{2.5}\\
	\xi\frac{dG}{d\xi} &= -G\left[\left(\frac{1-A}{A}\right)\frac{(3+3\sigma)[(1+v^2)G-2v]}{2\{\cdot\}_S}-1\right],\label{2.6}\\
	\xi\frac{dv}{d\xi} &= -\left(\frac{1-v^2}{2\{\cdot\}_D}\right)\left[3\sigma\{\cdot\}_S+\left(\frac{1-A}{A}\right)\frac{(3+3\sigma)^2\{\cdot\}_N}{4\{\cdot\}_S}\right],\label{2.7}
\end{align}
in addition to the constraint
\begin{align}
	\rho = \frac{3(1-v^2)(1-A)G}{\kappa r^2\{\cdot\}_S}.\label{2.8}
\end{align}
The variable $G$, not to be confused with the Einstein tensor, is defined as
\begin{align}
	G = \frac{\xi}{\sqrt{AB}},\label{2.9}
\end{align}
with the bracketed terms defined by:
\begin{align*}
	\{\cdot\}_S &= 3(G-v) - 3\sigma v(1-Gv),\\
	\{\cdot\}_N &= -3(G-v)^2 + 3\sigma v^2(1-Gv)^2,\\
	\{\cdot\}_D &= \frac{3}{4}(3+3\sigma)\left[(G-v)^2-\sigma(1-Gv)^2\right].
\end{align*}
Note that the original derivation, provided in \cite{ST2012A}, is only completed in the pure radiation case. It is not difficult to modify this derivation to yield equations (\ref{2.5})--(\ref{2.8}) for general $\sigma$, but this derivation will not be given here. Note also that under the change of variable
\begin{align}
	\xi = e^s,\label{2.10}
\end{align}
the equations become explicitly autonomous, since
\begin{align}
	\xi\frac{d}{d\xi} = \frac{d}{ds}.\label{2.11}
\end{align}
The autonomous nature of these equations distinguish them from the self-similar ODE derived by Cahill and Taub \cite{CT1971}, which are derived using self-similar comoving coordinates. It is worth noting that in self-similar comoving coordinates, the variable $G$ is the Schwarzschild coordinate analogue of the variable $V$ defined in \cite{CT1971}, which represents surfaces of constant $\xi$ relative to the fluid.

\subsection{Tolman--Oppenheimer--Volkoff Spacetimes}\label{Subsection2.2}

The Tolman--Oppenheimer--Volkoff (TOV) spacetimes are the subfamily of spherically symmetric perfect fluid spacetimes that are static. It is demonstrated in \cite{CT1971} that the self-similar subset of TOV spacetimes which solve the Einstein--Euler equations with a barotropic equation of state form the unique family of static similarity spacetimes. In the context of equations (\ref{2.5})--(\ref{2.8}), these spacetimes are distinguished by having a Schwarzschild coordinate velocity that is identically zero.
\begin{proposition}
	\label{P1}
	Similarity solutions of the Einstein--Euler equations are static if and only if the Schwarzschild coordinate velocity is identically zero.
\end{proposition}
\begin{proof}
	The proof is given in Appendix \ref{A.1}.
\end{proof}
The TOV spacetimes are a remarkably convenient and simple set of solutions to the Einstein--Euler equations because they can be placed in a coordinate system which is comoving, explicitly self-similar and in Schwarzschild form simultaneously.
\begin{definition}
	\label{D4}
	A \emph{scale transformation} is a coordinate transformation of the form:
	\begin{align*}
		\bar{t} &= \mathscr{T}_0t,\\
		\bar{r} &= \mathscr{R}_0r,
	\end{align*}
	where $\mathscr{T}_0$ and $\mathscr{R}_0$ are constants. A parameter that appears in a similarity solution is \emph{essential} if it cannot be removed by a scale transformation and \emph{inessential} if it can.
\end{definition}
Essential and inessential parameters are discussed in more detail in \cite{CT1971}. It is worth noting that when counting the number of essential parameters in a solution, typically a value for $\sigma$ is first fixed and the remaining essential parameters are counted. This convention is not adopted in this paper for reasons that will become apparent later, instead, it is made clear whether the equation of state parameter is included in each instance.
\begin{proposition}
	\label{P2}
	The one-parameter family of TOV spacetimes, denoted by TOV$(\sigma)$, are given in self-similar comoving Schwarzschild coordinates as:
	\begin{align*}
		ds^2 &= -\alpha^2\xi^{\frac{4\sigma}{1+\sigma}}dt^2 + \frac{1}{1-2M(\sigma)}dr^2 + r^2d\Omega^2,\\
		\rho &= \frac{2M(\sigma)}{\kappa r^2},\\
		p &= \sigma\rho,
	\end{align*}
	where $\alpha$ is an inessential parameter and
	\begin{align*}
		M(\sigma) = \frac{2\sigma}{1+6\sigma+\sigma^2}.
	\end{align*}
\end{proposition}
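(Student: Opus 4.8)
The plan is to obtain the metric, density, and pressure by simply integrating the ODE system (\ref{2.5})--(\ref{2.8}) under the hypothesis $v \equiv 0$, which by Proposition \ref{P1} characterises the static similarity spacetimes, and then to verify that the single surviving constant $\alpha$ is inessential in the sense of Definition \ref{D4}. First I would set $v \equiv 0$ in (\ref{2.5}), which forces $A \equiv A_0$ constant; imposing $v' \equiv 0$ in (\ref{2.7}) (valid since $\{\cdot\}_D \neq 0$ off the sonic locus) yields the algebraic relation $9\sigma = \tfrac14(3+3\sigma)^2\tfrac{1-A_0}{A_0}$, which solves to $A_0 = 1 - 2M(\sigma)$ with $M(\sigma) = 2\sigma/(1+6\sigma+\sigma^2)$; note $A_0 = (1+\sigma)^2/(1+6\sigma+\sigma^2) > 0$, consistent with the ansatz (\ref{2.3}). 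Substituting $v\equiv0$, $A\equiv A_0$ into (\ref{2.6}) gives $\xi\,dG/d\xi = \tfrac{1-\sigma}{1+\sigma}G$, hence $G(\xi) = C_1\,\xi^{(1-\sigma)/(1+\sigma)}$ for a positive constant $C_1$; the definition (\ref{2.9}) of $G$ then forces $B(\xi) = \xi^2/(A_0 G^2) = (A_0 C_1^2)^{-1}\,\xi^{4\sigma/(1+\sigma)}$, so with $\alpha^2 := (A_0 C_1^2)^{-1}$ the metric is exactly the one in the statement. Finally the constraint (\ref{2.8}) with $v\equiv0$ (so $\{\cdot\}_S = 3G$) collapses to $\rho = (1-A_0)/(\kappa r^2) = 2M(\sigma)/(\kappa r^2)$, and $p = \sigma\rho$ by (\ref{2.4}); since $v\equiv0$ means $u^1\equiv0$ these coordinates are simultaneously comoving, explicitly self-similar in $\xi$, and of Schwarzschild form, as claimed.

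It then remains to show $\alpha$ is inessential. I would apply the scale transformation $\bar r = r$, $\bar t = \mathscr{T}_0 t$ of Definition \ref{D4}, under which $\xi = \mathscr{T}_0\,\bar\xi$ with $\bar\xi = \bar r/\bar t$. Substituting into the metric leaves the $d\bar r^2$ and $\bar r^2 d\Omega^2$ terms (hence also $\rho$ and $p$) unchanged and multiplies the coefficient of $d\bar t^2$ by $\mathscr{T}_0^{\,k-2}$, where $k = 4\sigma/(1+\sigma)$, so that $\alpha$ transforms into $\bar\alpha$ with $\bar\alpha^2 = \alpha^2\,\mathscr{T}_0^{\,k-2}$. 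Since $k - 2 = -2(1-\sigma)/(1+\sigma)$ is nonzero for every $\sigma \in (0,1)$, the map $\mathscr{T}_0 \mapsto \mathscr{T}_0^{\,k-2}$ is a bijection of $(0,\infty)$ onto itself, so $\mathscr{T}_0$ may be chosen to normalise $\alpha$ to any prescribed positive value. Hence $\alpha$ carries no geometric content, i.e. it is inessential, and together with Proposition \ref{P1} this exhibits TOV$(\sigma)$ as the one-parameter (in $\sigma$) family of static similarity spacetimes.

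There is no serious obstacle here: the content is essentially bookkeeping layered on the proof of Proposition \ref{P1}. The only point requiring a little care is the inessentiality claim, where one must check that the exponent $k-2$ does not degenerate to $0$ on the physical range $0<\sigma<1$ (it would at the excluded luminal limit $\sigma=1$) and confirm that no further scaling freedom survives — in particular that one cannot also rescale $r$ without altering both $g_{rr}$ and $\rho$, which is what pins the family down to a single essential parameter.
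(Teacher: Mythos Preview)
Your proposal is correct and follows essentially the same approach as the paper: the paper's proof of Proposition~\ref{P2} is simply ``This follows from the proof of Proposition~\ref{P1}'', and you have reproduced precisely the computations carried out there (setting $v\equiv 0$ in (\ref{2.5}), (\ref{2.7}), (\ref{2.6}) to obtain $A_0$, $M(\sigma)$, $G(\xi)$, and then reading off $B$ and $\rho$). Your explicit verification that $\alpha$ is inessential via the scale transformation of Definition~\ref{D4} is a detail the paper asserts but does not spell out, so your write-up is in fact slightly more complete.
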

\begin{proof}
	This follows from the proof of Proposition \ref{P1}.
\end{proof}

\subsection{Friedmann--Lemaître--Robertson--Walker Spacetimes}\label{Subsection2.3}

The Friedmann--Lemaître--Robertson--Walker (FLRW) spacetimes are the family of spherically symmetric spatially homogeneous spacetimes. Following the notation of \cite{CY1990}, the flat subset of FLRW spacetimes which solve the Einstein--Euler equations with a barotropic equation of state take the following form in self-similar comoving coordinates:
\begin{align*}
	d\hat{s}^2 &= -e^{2\varphi}d\hat{t}^2 + e^{2\psi}d\hat{r}^2 + \mathscr{R}^2\hat{r}^2d\Omega^2,\\
	\rho &= \frac{2\hat{\xi}^2}{\kappa\hat{r}^2},\\
	p &= \sigma\rho,
\end{align*}
where:
\begin{align*}
	e^{2\varphi} &= \beta^2, &
	e^{2\psi} &= \gamma^{-2}\hat{\xi}^{-\frac{4}{3+3\sigma}}, &
	\mathscr{R}^2 &= \hat{\xi}^{-\frac{4}{3+3\sigma}},
\end{align*}
and:
\begin{align*}
	\beta &= \frac{\sqrt{6}}{3+3\sigma}, & \gamma &= \frac{3+3\sigma}{1+3\sigma}.
\end{align*}
Note that the density is independent of $r$ and that the metric can also be put into an explicitly spatially homogeneous form through the purely radial transformation:
\begin{align*}
	\tilde{t} &= \hat{t},\\
	\tilde{r} &= \hat{r}^\frac{1+3\sigma}{3+3\sigma},
\end{align*}
to yield
\begin{align*}
	d\tilde{s}^2 = -\beta^2d\tilde{t}^2 + \tilde{t}^{\frac{4}{3+3\sigma}}\left(d\tilde{r}^2+\tilde{r}^2d\Omega^2\right).
\end{align*}
\begin{definition}
	\label{D5}
	A \emph{self-similar comoving coordinate system} is a comoving coordinate system in which the metric coefficients are written in terms of a single self-similar variable. Similarly, a \emph{self-similar Schwarzschild coordinate system} is a Schwarzschild coordinate system in which the metric coefficients are written in terms of a single self-similar variable. Note that in spherical symmetry, the $r^2$ part of the $r^2d\Omega^2$ two-form is not considered as part of its coefficient for the purpose of this definition.
\end{definition}
\begin{proposition}
	\label{P3}
	The one-parameter family of similarity perfect fluid FLRW spacetimes with barotropic equations of state, denoted by FLRW$(\sigma,1)$, are given in self-similar Schwarzschild coordinates as:
	\begin{align*}
		ds^2 &= -\delta^{-2}\left[1+\frac{1}{3}(1+3\sigma)\hat{\xi}^{\frac{2+6\sigma}{3+3\sigma}}\right]^{-\frac{1-3\sigma}{1+3\sigma}}\left[1-\frac{2}{3}\hat{\xi}^{\frac{2+6\sigma}{3+3\sigma}}\right]^{-1}dt^2 + \left[1-\frac{2}{3}\hat{\xi}^{\frac{2+6\sigma}{3+3\sigma}}\right]^{-1}dr^2 + r^2d\Omega^2,\\
		v &= \frac{2}{\sqrt{6}}\hat{\xi}^{\frac{1+3\sigma}{3+3\sigma}},\\
		\rho &= \frac{3v^2}{\kappa r^2},\\
		p &= \sigma\rho,
	\end{align*}
	where $\delta$ is an inessential parameter and
	\begin{align}
		\xi = \frac{1}{\sqrt{6}}\delta^{-1}(3+3\sigma)\hat{\xi}^{\frac{1+3\sigma}{3+3\sigma}}\left[1+\frac{1}{3}(1+3\sigma)\hat{\xi}^{\frac{2+6\sigma}{3+3\sigma}}\right]^{-\frac{3+3\sigma}{2+6\sigma}}.\label{2.12}
	\end{align}
\end{proposition}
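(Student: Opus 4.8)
The plan is to derive Proposition \ref{P3} by transforming the known self-similar comoving form of the FLRW$(0,\sigma,1)$ metric, stated just above, into self-similar Schwarzschild coordinate form \eqref{2.3}. The strategy mirrors the derivation in Smoller–Temple \cite{JSmoller2012} for pure radiation, but carried out for general $\sigma$. First I would start from the explicitly spatially homogeneous form $d\tilde{s}^2 = -\beta^2 d\tilde{t}^2 + \tilde{t}^{4/(3+3\sigma)}(d\tilde{r}^2 + \tilde{r}^2 d\Omega^2)$ and introduce the areal (Schwarzschild) radial coordinate $r = \tilde{t}^{2/(3+3\sigma)}\tilde{r}$, so that the angular part already takes the form $r^2 d\Omega^2$. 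Differentiating $r$ gives $dr$ as a linear combination of $d\tilde{t}$ and $d\tilde{r}$; solving for $d\tilde{r}$ and substituting back produces a metric with a $dt^2$ term, a $dr^2$ term, and a cross term $dt\,dr$.

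The second step is to eliminate the cross term by completing the square in $dt$, i.e.\ introducing a new time coordinate $t$ defined (up to an inessential multiplicative constant $\delta$) by an integrating factor so that $-B\,dt^2$ absorbs the cross term and the residual $dr^2$ coefficient becomes $1/A$. This is the standard passage from Lemaître-type to Schwarzschild-type coordinates for an FLRW patch; the key is that the $t=\mathrm{const}$ slices are chosen precisely so the metric is diagonal. At this stage I expect $A = 1 - \tfrac{2}{3}\hat{\xi}^{(2+6\sigma)/(3+3\sigma)}$ to emerge directly from the mass function $m(r)/r = \tfrac{4\pi}{3}\rho r^2$ using $\rho = 2\hat{\xi}^2/(\kappa\hat{r}^2)$, since for FLRW the Misner–Sharp mass gives $A = 1 - \kappa\rho r^2/3$, which one rewrites in terms of $\hat{\xi}$. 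The velocity $v$ is then read off from Definition \ref{D2}: the comoving fluid has $u^1/u^0$ determined by how the comoving radial coordinate moves relative to constant-$r$ surfaces, and a direct computation should yield $v = \tfrac{2}{\sqrt{6}}\hat{\xi}^{(1+3\sigma)/(3+3\sigma)}$, after which $\rho = 3v^2/(\kappa r^2)$ follows by combining the $\rho$ formula with the relation between $r$, $\hat r$, and $\hat\xi$.

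The third step is to express everything in terms of the self-similar Schwarzschild variable $\xi = r/t$. Since $r$ and $t$ are both known as functions of $\hat t, \hat r$ (equivalently of $\hat\xi$ and one scale), and the construction is self-similar, the ratio $\xi$ depends only on $\hat\xi$; carrying out the algebra should reproduce exactly relation \eqref{2.12}, with the inessential constant $\delta$ tracking the freedom in normalising $t$. Substituting \eqref{2.12} back into the expressions for $B$, $A$, and $v$ puts the metric in the claimed form. Finally I would verify consistency by checking that the resulting $A(\xi)$, $G(\xi) = \xi/\sqrt{AB}$, and $v(\xi)$ satisfy the autonomous ODE system \eqref{2.5}–\eqref{2.8} with $p = \sigma\rho$ — either by direct substitution or, more economically, by invoking uniqueness of the similarity solution with the prescribed asymptotics, since the comoving FLRW form is already known to solve the Einstein equations and coordinate changes preserve solutions.

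The main obstacle will be the bookkeeping in the second and third steps: the integrating factor that diagonalises the metric involves fractional powers of $\hat\xi$ with $\sigma$-dependent exponents, and one must be careful that the exponent $\tfrac{1-3\sigma}{1+3\sigma}$ appearing on the first bracket in $B$ and the exponent $\tfrac{3+3\sigma}{2+6\sigma}$ in \eqref{2.12} come out consistently — in particular the case $\sigma = \tfrac{1}{3}$ should degenerate correctly to the pure-radiation formulas of \cite{JSmoller2012}, which gives a useful check. There is also a subtlety in confirming that $A > 0$ and $B > 0$ on the relevant range of $\hat\xi$ (i.e.\ $0 < \hat{\xi}^{(2+6\sigma)/(3+3\sigma)} < \tfrac{3}{2}$), which is needed for \eqref{2.3} to be a valid Schwarzschild-form metric; this is really the statement that the FLRW patch is being covered up to its Hubble radius, beyond which the Schwarzschild chart breaks down.
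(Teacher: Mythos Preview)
Your proposal is correct and follows essentially the same route as the paper, though the packaging differs. The paper starts from the self-similar comoving form (with $e^{2\varphi}$, $e^{2\psi}$, $\mathscr{R}$ as functions of $\hat{\xi}$) and applies the Cahill--Taub transformation formulas: a hyperbolic rotation by angle $\omega$ with $\tanh\omega = e^{\psi-\varphi}\partial_{\hat{t}}(\mathscr{R}\hat{r})/\partial_{\hat{r}}(\mathscr{R}\hat{r})$, followed by an integrating factor $\mu$ chosen so that $dt$ is exact. You instead start one step later, from the explicitly spatially homogeneous form, set $r$ equal to the areal radius, and eliminate the resulting cross term by completing the square; this is the same diagonalisation written in different language. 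Your appeal to the Misner--Sharp mass to read off $A = 1 - \kappa\rho r^2/3$ is a shortcut the paper does not make explicit (it computes $e^{-2\nu}$ directly), and is a clean way to get the $dr^2$ coefficient. Both approaches require solving the same integrating-factor ODE to produce the $[1+\tfrac{1}{3}(1+3\sigma)\hat{\xi}^{(2+6\sigma)/(3+3\sigma)}]$ factor with its $\sigma$-dependent exponent, and both then recover $\xi(\hat{\xi})$ from $t = \mathscr{T}(\hat{\xi})\hat{t}$, $r = \mathscr{R}(\hat{\xi})\hat{r}$.
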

The one in the second argument of FLRW$(\sigma,1)$ corresponds to the subfamily of unperturbed FLRW spacetimes, with the perturbed FLRW spacetimes defined in the next subsection. Note that the $k<0$ and $k>0$ FLRW spacetimes are not self-similar, so only a self-similar perturbation of the $k=0$ FLRW spacetimes will result in a self-similar spacetime.
\begin{proof}
	The proof is given in Appendix \ref{A.1}.
\end{proof}
Spacetimes that solve equations (\ref{2.5})--(\ref{2.7}) can be denoted by the triple $(A,G,v)$, which specifies the metric through $A$ and $G$, the fluid four-velocity through $v$ and the density through constraint (\ref{2.8}).
\begin{proposition}
	\label{P4}
	FLRW$(\sigma,1)$ is given implicitly by:
	\begin{align}
		A &= 1 - v^2,\label{2.17}\\
		G &= \frac{1}{2}(3+3\sigma)v\left(1+\frac{1}{2}(1+3\sigma)v^2\right)^{-1},\label{2.18}\\
		v &= \frac{2}{\sqrt{6}}\hat{\xi}^{\frac{1+3\sigma}{3+3\sigma}}.\label{2.19}
	\end{align}
\end{proposition}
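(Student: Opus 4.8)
The plan is to derive all three relations directly from Proposition~\ref{P3}, where FLRW$(0,\sigma,1)$ is already written in the self-similar Schwarzschild form~(\ref{2.3}); only algebraic manipulation is required. Comparing the metric of Proposition~\ref{P3} with~(\ref{2.3}) one reads off
\begin{align*}
	A &= 1 - \tfrac{2}{3}\hat{\xi}^{\frac{2+6\sigma}{3+3\sigma}}, \\
	B &= \delta^{-2}\left[1+\tfrac{1}{3}(1+3\sigma)\hat{\xi}^{\frac{2+6\sigma}{3+3\sigma}}\right]^{-\frac{1-3\sigma}{1+3\sigma}}\left[1-\tfrac{2}{3}\hat{\xi}^{\frac{2+6\sigma}{3+3\sigma}}\right]^{-1},
\end{align*}
while $v = \frac{2}{\sqrt{6}}\hat{\xi}^{\frac{1+3\sigma}{3+3\sigma}}$ is given verbatim, which is~(\ref{2.19}). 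Squaring this last expression yields the single identity $v^2 = \frac{2}{3}\hat{\xi}^{\frac{2+6\sigma}{3+3\sigma}}$, which I would use to eliminate $\hat{\xi}$ in favour of $v$ throughout.

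Relation~(\ref{2.17}) is then immediate, since substituting $v^2 = \frac{2}{3}\hat{\xi}^{\frac{2+6\sigma}{3+3\sigma}}$ into the formula for $A$ gives $A = 1 - v^2$. For~(\ref{2.18}) I would first form the product $AB$, in which the two factors $\left[1-\tfrac{2}{3}\hat{\xi}^{\frac{2+6\sigma}{3+3\sigma}}\right]^{\pm 1}$ cancel, leaving
\begin{align*}
	AB = \delta^{-2}\left[1+\tfrac{1}{3}(1+3\sigma)\hat{\xi}^{\frac{2+6\sigma}{3+3\sigma}}\right]^{-\frac{1-3\sigma}{1+3\sigma}},
\end{align*}
so that $\sqrt{AB} = \delta^{-1}\left[1+\tfrac{1}{3}(1+3\sigma)\hat{\xi}^{\frac{2+6\sigma}{3+3\sigma}}\right]^{-\frac{1-3\sigma}{2+6\sigma}}$. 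Dividing the expression~(\ref{2.12}) for $\xi$ by this, via the defining relation~(\ref{2.9}), the inessential constant $\delta^{-1}$ cancels and the two bracket exponents combine through $-\frac{3+3\sigma}{2+6\sigma}+\frac{1-3\sigma}{2+6\sigma} = -1$, giving
\begin{align*}
	G = \frac{\xi}{\sqrt{AB}} = \frac{1}{\sqrt{6}}(3+3\sigma)\,\hat{\xi}^{\frac{1+3\sigma}{3+3\sigma}}\left[1+\tfrac{1}{3}(1+3\sigma)\hat{\xi}^{\frac{2+6\sigma}{3+3\sigma}}\right]^{-1}.
\end{align*}
Finally I would substitute $\frac{1}{\sqrt{6}}\hat{\xi}^{\frac{1+3\sigma}{3+3\sigma}} = \frac{v}{2}$ and $\hat{\xi}^{\frac{2+6\sigma}{3+3\sigma}} = \frac{3}{2}v^2$ into this expression to obtain $G = \frac{1}{2}(3+3\sigma)v\left[1+\frac{1}{2}(1+3\sigma)v^2\right]^{-1}$, which is~(\ref{2.18}).

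There is no conceptual obstacle here; the only thing requiring care is the bookkeeping of the rational powers of $\hat{\xi}$ and of the bracketed binomial, in particular the exponent cancellation $-\frac{3+3\sigma}{2+6\sigma}+\frac{1-3\sigma}{2+6\sigma}=-1$ used to simplify $G$, together with the fact that forming $AB$ removes all explicit $\hat{\xi}$-dependence outside the surviving binomial. As a consistency check one could alternatively verify directly that the curve defined by~(\ref{2.17})--(\ref{2.19}) satisfies the autonomous system~(\ref{2.5})--(\ref{2.7}) and that the constraint~(\ref{2.8}) reproduces $\rho = 3v^2/(\kappa r^2)$, but routing the argument through Proposition~\ref{P3} avoids repeating that computation.
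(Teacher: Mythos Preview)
Your proof is correct and follows essentially the same route as the paper: both read off $A$ and $v$ directly from Proposition~\ref{P3}, compute $G$ via the definition $G=\xi/\sqrt{AB}$, and then eliminate $\hat{\xi}$ in favour of $v$ using $v^2=\tfrac{2}{3}\hat{\xi}^{(2+6\sigma)/(3+3\sigma)}$. The only difference is that the paper states the formula for $G=\xi e^{\nu-\mu}$ by directly quoting the intermediate quantities $e^{-2\nu}$, $e^{-2\mu}$ from the proof of Proposition~\ref{P3}, whereas you reconstruct $\sqrt{AB}$ from the final metric and track the exponent cancellation explicitly; the computation is the same either way.
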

\begin{proof}
	The proof is given in Appendix \ref{A.1}.
\end{proof}
\begin{corollary}
	\label{C1}
	FLRW$(\frac{1}{3},1)$ is given in self-similar Schwarzschild coordinates as:
	\begin{align*}
		ds^2 &= \frac{1}{1-v^2}\left(-\delta^{-2}dt^2+dr^2\right) + r^2d\Omega^2,\\
		v &= \frac{1-\sqrt{1-\delta^2\xi^2}}{\delta\xi},\\
		\rho &= \frac{3v^2}{\kappa r^2},\\
		p &= \sigma\rho,
	\end{align*}
	where $\delta$ is an inessential parameter.
\end{corollary}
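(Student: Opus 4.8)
The plan is to specialise Proposition \ref{P3} (equivalently Proposition \ref{P4}) to $\sigma = \tfrac13$ and then eliminate the comoving self-similar variable $\hat\xi$ in favour of $\xi$ using (\ref{2.12}). First I would evaluate the $\sigma$-dependent exponents appearing in Proposition \ref{P3} at $\sigma = \tfrac13$: one finds $\tfrac{1+3\sigma}{3+3\sigma} = \tfrac12$, $\tfrac{2+6\sigma}{3+3\sigma} = 1$, $\tfrac{3+3\sigma}{2+6\sigma} = 1$, and crucially $\tfrac{1-3\sigma}{1+3\sigma} = 0$, so the bracketed factor raised to the power $-\tfrac{1-3\sigma}{1+3\sigma}$ collapses to $1$. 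The $dt^2$ coefficient therefore reduces to $-\delta^{-2}\bigl[1-\tfrac23\hat\xi\bigr]^{-1}$, the $dr^2$ coefficient is $\bigl[1-\tfrac23\hat\xi\bigr]^{-1}$, and $v = \tfrac{2}{\sqrt6}\hat\xi^{1/2}$, whence $v^2 = \tfrac23\hat\xi$. Substituting, one reads off $A = 1 - \tfrac23\hat\xi = 1 - v^2$ in agreement with (\ref{2.17}), so the metric takes the stated form $\tfrac{1}{1-v^2}\bigl(-\delta^{-2}dt^2 + dr^2\bigr) + r^2 d\Omega^2$. The $\rho$ and $p$ lines are inherited verbatim from Proposition \ref{P3}.

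It then remains to express $v$ as an explicit function of $\xi$ rather than $\hat\xi$. Setting $\sigma = \tfrac13$ in (\ref{2.12}) and using $\hat\xi^{1/2} = \tfrac{\sqrt6}{2}v$ together with $\tfrac23\hat\xi = v^2$, the relation collapses to the algebraic identity
\begin{align*}
	\delta\xi = \frac{2v}{1+v^2},
\end{align*}
that is, $\delta\xi\,v^2 - 2v + \delta\xi = 0$. Solving this quadratic in $v$ gives
\begin{align*}
	v = \frac{1 \pm \sqrt{1-\delta^2\xi^2}}{\delta\xi}.
\end{align*}

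The only genuine decision is the branch selection. Since a flat FLRW spacetime must have $v \to 0$ as $\xi \to 0$ — the Schwarzschild coordinate velocity of the fluid vanishes at the centre at fixed time — I would compare the two roots near $\xi = 0$: a Taylor expansion shows the $+$ root blows up like $2/(\delta\xi)$, whereas the $-$ root behaves like $1 - \sqrt{1-\delta^2\xi^2} \sim \tfrac12\delta^2\xi^2$, so that $v \sim \tfrac12\delta\xi \to 0$. Hence the minus sign is forced, yielding $v = \dfrac{1-\sqrt{1-\delta^2\xi^2}}{\delta\xi}$ and completing the identification of FLRW$(0,\tfrac13,1)$ in self-similar Schwarzschild coordinates. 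Everything beyond this branch choice is direct substitution, so no estimates are required; the main point worth flagging is simply that at $\sigma = \tfrac13$ the transcendental relation (\ref{2.12}) degenerates to an explicitly solvable quadratic, which is what makes the closed form possible.
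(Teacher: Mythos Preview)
Your proposal is correct and is essentially the same approach the paper takes: specialise Proposition~\ref{P3} to $\sigma=\tfrac13$, then invert (\ref{2.12}) (which at $\sigma=\tfrac13$ collapses to the quadratic $\delta\xi\,v^2 - 2v + \delta\xi = 0$) and pick the branch with $v\to 0$ as $\xi\to 0$. The paper's proof is a two-line sketch of exactly this; your write-up simply fills in the explicit exponent evaluations and the branch-selection argument that the paper leaves to the reader.
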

\begin{proof}
	Using Proposition \ref{P3} in the case $\sigma=\frac{1}{3}$, relation (\ref{2.12}) can be inverted to yield $v$. The metric then follows from using this inversion and some algebraic manipulation.
\end{proof}
\begin{corollary}
	\label{C2}
	FLRW$(\frac{1}{3},1)$ is given implicitly by:
	\begin{align}
		A &= 1 - v^2,\label{2.20}\\
		G &= \frac{2v}{1+v^2},\label{2.21}\\
		G &= \delta\xi.\label{2.22}
	\end{align}
\end{corollary}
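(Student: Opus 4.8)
The plan is to obtain all three relations directly from the explicit self-similar Schwarzschild form of FLRW$(0,\tfrac{1}{3},1)$ recorded in Corollary \ref{C1}, together with Proposition \ref{P4} specialised to $\sigma=\tfrac{1}{3}$.

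Relations (\ref{2.20}) and (\ref{2.21}) are immediate. Setting $\sigma=\tfrac{1}{3}$ in (\ref{2.17}) gives $A=1-v^2$, which is (\ref{2.20}). Setting $\sigma=\tfrac{1}{3}$ in (\ref{2.18}) and using $3+3\sigma=4$ and $1+3\sigma=2$ gives
\begin{align*}
	G=\tfrac{1}{2}(3+3\sigma)v\bigl(1+\tfrac{1}{2}(1+3\sigma)v^2\bigr)^{-1}=\frac{2v}{1+v^2},
\end{align*}
which is (\ref{2.21}). So the only real content is (\ref{2.22}).

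For (\ref{2.22}) I would read the metric coefficients off Corollary \ref{C1}: in the notation of the ansatz (\ref{2.3}) the coefficient of $dr^2$ is $(1-v^2)^{-1}=A^{-1}$, so $A=1-v^2$, and the coefficient of $dt^2$ is $-\delta^{-2}(1-v^2)^{-1}=-B$, so $B=\delta^{-2}(1-v^2)^{-1}$. Hence $AB=\delta^{-2}$ and, since $\delta>0$, $\sqrt{AB}=\delta^{-1}$. Substituting this into the definition (\ref{2.9}) gives $G=\xi/\sqrt{AB}=\delta\xi$, which is exactly (\ref{2.22}). As an internal consistency check, one can instead plug the explicit $v=(1-\sqrt{1-\delta^2\xi^2})/(\delta\xi)$ from Corollary \ref{C1} into the right-hand side of (\ref{2.21}): writing $x=\delta\xi$ and using $1+v^2=2(1-\sqrt{1-x^2})/x^2$ and $2v=2(1-\sqrt{1-x^2})/x$, the quotient $2v/(1+v^2)$ collapses to $x=\delta\xi$, so the two expressions for $G$ agree.

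I do not expect a genuine obstacle; the statement is a direct corollary of Proposition \ref{P3}, Proposition \ref{P4}, and Corollary \ref{C1}. The one point worth a remark is that (\ref{2.22}) is a legitimate closing relation for the implicit system: along the solution $v$ ranges over $(0,1)$, and $v\mapsto 2v/(1+v^2)$ is a strictly increasing bijection of $(0,1)$ onto itself, so (\ref{2.22}) determines $\xi$ from $v$ (equivalently $v$ from $\xi$, via the branch fixed in Corollary \ref{C1}) unambiguously; together, (\ref{2.20})--(\ref{2.22}) recover precisely the one-parameter family FLRW$(0,\tfrac{1}{3},1)$.
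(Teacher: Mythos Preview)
Your proof is correct and follows essentially the same route as the paper: (\ref{2.20}) and (\ref{2.21}) come from Proposition~\ref{P4} at $\sigma=\tfrac13$, and (\ref{2.22}) comes from the identity $B=\delta^{-2}A^{-1}$ read off Corollary~\ref{C1} together with the definition (\ref{2.9}). Your added consistency check and the remark on invertibility of $v\mapsto 2v/(1+v^2)$ are nice but not needed.
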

\begin{proof}
	Relations (\ref{2.20}) and (\ref{2.21}) follow immediately from Proposition \ref{P4}. Relation (\ref{2.22}) follows from (\ref{2.9}) and the identity $AB=\delta^{-2}$ from Corollary \ref{C1}.
\end{proof}

\subsection{Asymptotically Friedmann--Lemaître--Robertson--Walker Spacetimes}\label{Subsection2.4}

Let $(A,G,v)$ denote a solution of equations (\ref{2.5})--(\ref{2.7}). Since these equations are autonomous, solutions can be represented by non-intersecting trajectories in $(A,G,v)$ space. The FLRW$(\sigma,1)$ spacetimes solve equations (\ref{2.5})--(\ref{2.7}) and constraint (\ref{2.8}) with their trajectories satisfying
\begin{align*}
	\lim_{\xi\to0}(A,G,v) = (1,0,0).
\end{align*}
The nature of equations (\ref{2.5})--(\ref{2.7}) suggests that to analyse this point, it is helpful to rewrite these equations as functions of $v$, $A$ and $H$, with $H$ defined as the ratio
\begin{align*}
	H = \frac{G}{v}.
\end{align*}
This is completed in \cite{ST2012A} for $\sigma=\frac{1}{3}$, however it is not difficult to reproduce these equations for general $\sigma$, especially when working from equations (\ref{2.5})--(\ref{2.7}), although this will not be done here. Recalling (\ref{2.10}) and (\ref{2.11}), equations (\ref{2.5})--(\ref{2.8}) are given in autonomous form as functions of $v$, $A$ and $H$ as so:
\begin{align}
	\frac{dv}{ds} &= -v\left(\frac{1-v^2}{2\{\cdot\}_D}\right)\left[3\sigma\{\cdot\}_S^*+\left(\frac{1-A}{A}\right)\frac{(3+3\sigma)^2\{\cdot\}_N^*}{4\{\cdot\}_S^*}\right]\label{2.23},\\
	\frac{dA}{ds} &= -\frac{(3+3\sigma)(1-A)}{\{\cdot\}_S^*}\label{2.24},\\
	\frac{dH}{ds} &= -H\left[\left(\frac{1-A}{A}\right)\frac{(3+3\sigma)[(1+v^2)H-2]}{2\{\cdot\}_S^*}-1\right] - \frac{H}{v}\frac{dv}{ds}\label{2.25},
\end{align}
with
\begin{align}
	\kappa\rho r^2 = \frac{3(1-v^2)(1-A)H}{\{\cdot\}_S^*},\label{2.26}
\end{align}
and where:
\begin{align*}
	\{\cdot\}_S^* &= -(3+3\sigma) + (3+3\sigma v^2)H,\\
	\{\cdot\}_N^* &= -(3-3\sigma) + 2(3-3\sigma v^2)H - (3-3\sigma v^4)H^2,\\
	\{\cdot\}_D &= -\frac{1}{4}(3+3\sigma)(3\sigma-3v^2) - \frac{3}{2}(3-3\sigma^2)Hv^2 + \frac{1}{4}(3+3\sigma)(3-3\sigma v^2)H^2v^2.
\end{align*}
In variables $v$, $A$ and $H$, the FLRW$(\sigma,1)$ spacetimes satisfy
\begin{align*}
	\lim_{\xi\to0}(v,A,H) = \left(0,1,\frac{1}{2}(3+3\sigma)\right),
\end{align*}
and it is straightforward to check this is a fixed point of the system (\ref{2.23})--(\ref{2.25}). Following \cite{ST2012A}, a linear analysis of this fixed point is achieved by first representing equations (\ref{2.23})--(\ref{2.25}) as:
\begin{align*}
	v' &= F_1(v,A,H),\\
	A' &= F_2(v,A,H),\\
	H' &= F_3(v,A,H),\\
\end{align*}
which take the vector form
\begin{align*}
	\boldsymbol{U}' = \boldsymbol{F}(\boldsymbol{U}),
\end{align*}
with:
\begin{align*}
	\boldsymbol{U} &= (v,A,H)^T, &
	\boldsymbol{F} &= \big(F_1(\boldsymbol{U}),F_2(\boldsymbol{U}),F_3(\boldsymbol{U})\big)^T.
\end{align*}
Next, the Jacobian of $\boldsymbol{F}$ at the fixed point is calculated. Note that the Jacobian calculated in \cite{ST2012A} contains an error, so a brief new derivation will be produced. To begin, denote the fixed point by $\boldsymbol{U}_0$ and note
\begin{align*}
	dF_2(\boldsymbol{U}_0) = \left(\frac{\partial F_2}{\partial v},\frac{\partial F_2}{\partial A},\frac{\partial F_2}{\partial H}\right)\bigg\vert_{\boldsymbol{U}_0} = (0,2,0).
\end{align*}
Neglecting terms second order in $v$ and second order in terms that vanish at $\boldsymbol{U}_0$ on the right hand side of (\ref{2.23}) gives
\begin{align*}
	dF_1(\boldsymbol{U}_0) = d\left[-v\left(-\frac{2}{3\sigma(3+3\sigma)}\right)\big[9\sigma H-3\sigma(3+3\sigma)\big]\right]_{\boldsymbol{U}_0} = (1,0,0),
\end{align*}
and similarly for (\ref{2.25}),
\begin{align*}
	dF_3(\boldsymbol{U}_0) &= d\left[H-H\left(\frac{1-A}{A}\right)\frac{(3+3\sigma)(H-2)}{2[3H-(3+3\sigma)]}\right]_{\boldsymbol{U}_0}\\
	&+ d\left[H\left(-\frac{2}{3\sigma(3+3\sigma)}\right)\left[9\sigma H-3\sigma(3+3\sigma)+\left(\frac{1-A}{A}\right)\frac{(3+3\sigma)^2[(3\sigma-3)+6H-3H^2]}{4[3H-(3+3\sigma)]}\right]\right]_{\boldsymbol{U}_0}\\
	&= d\left[3H-\frac{6H^2}{3+3\sigma}+\left(\frac{1-A}{A}\right)\frac{(3+3\sigma)(1-H)H}{6\sigma}\right]_{\boldsymbol{U}_0}\\
	&= \left(0,\left(-\frac{1}{A^2}\right)\frac{(3+3\sigma)(1-H)H}{6\sigma},3-\frac{12H}{3+3\sigma}\right)\bigg\vert_{\boldsymbol{U}_0}\\
	&= \left(0,-\frac{(1+3\sigma)(3+3\sigma)^2}{24\sigma},-3\right).
\end{align*}
Thus the Jacobian is given by
\begin{align*}
	d\boldsymbol{F}(\boldsymbol{U}_0) = \left(\begin{array}{c}
		dF_1(\boldsymbol{U}_0)\\
		dF_2(\boldsymbol{U}_0)\\
		dF_3(\boldsymbol{U}_0)
	\end{array}\right)
	= \left(\begin{array}{ccc}
		1 & 0 & 0\\
		0 & 2 & 0\\
		0 & N(\sigma) & -3
	\end{array}\right),
\end{align*}
where
\begin{align*}
	N(\sigma) = -\frac{(1+3\sigma)(3+3\sigma)^2}{24\sigma}.
\end{align*}
This means $\boldsymbol{U}_0$ is a hyperbolic rest point of the system (\ref{2.23})--(\ref{2.25}) with eigenvalues:
\begin{align*}
	\lambda_1 &= 1, & \lambda_2 &= 2, & \lambda_3 &= -3.
\end{align*}
Therefore solutions
\begin{align*}
	\boldsymbol{U}(s) = \boldsymbol{U}_0 + \boldsymbol{V}(s),
\end{align*}
where $\boldsymbol{V}(s)$ solves the linearised equations
\begin{align*}
	\boldsymbol{V}' = d\boldsymbol{F}(\boldsymbol{U}_0)\cdot\boldsymbol{V},
\end{align*}
lie in the two-dimensional unstable manifold $\mathcal{M}_0$ of $\boldsymbol{U}_0$, given by
\begin{align*}
	\mathcal{M}_0 = \left(\begin{array}{c}
		0\\
		1\\
		\frac{1}{2}(3+3\sigma)\\
	\end{array}\right) + \text{Span}\left\lbrace\left(\begin{array}{c}
		1\\
		0\\
		0\\
	\end{array}\right)e^s + \left(\begin{array}{c}
		0\\
		1\\
		0\\
	\end{array}\right)e^{2s}\right\rbrace.
\end{align*}
In particular,
\begin{align*}
	\boldsymbol{U}(s) = \left(\begin{array}{c}
		C_4e^s\\
		1 + C_5e^{2s}\\
		\frac{1}{2}(3+3\sigma)\\
	\end{array}\right)
\end{align*}
for arbitrary constants $C_4$ and $C_5$. In the variable $\xi$, the solutions are given by:
\begin{align}
	A_1(\xi) &= 1 + C_5\xi^2,\label{2.27}\\
	G_1(\xi) &= \frac{1}{2}(3+3\sigma)C_4\xi,\label{2.28}\\
	v_1(\xi) &= C_4\xi,\label{2.29}
\end{align}
with the subscript denoting the fact that $(A_1,G_1,v_1)$ represents a solution to the linearised version of equations (\ref{2.5})--(\ref{2.7}). Now functions $A$, $G$ and $v$ of FLRW$(\sigma,1)$ are given to leading order as:
\begin{align}
	A(\xi) &= 1 - \frac{4}{(3+3\sigma)^2}\delta^2\xi^2 + O_{\xi\to0}(\xi^4),\label{2.30}\\
	G(\xi) &= \delta\xi + O_{\xi\to0}(\xi^3),\label{2.31}\\
	v(\xi) &= \frac{2}{3+3\sigma}\delta\xi + O_{\xi\to0}(\xi^3),\label{2.32}
\end{align}
where we note from Corollary \ref{C2} that all higher order coefficients of $G$ are zero for $\sigma=\frac{1}{3}$. Comparing (\ref{2.27})--(\ref{2.29}) to (\ref{2.30})--(\ref{2.32}) suggests setting $C_4$ and $C_5$, without loss of generality, as
\begin{align*}
	C_4 &= \frac{2}{3+3\sigma}\delta, &
	C_5 &= -\frac{4}{(3+3\sigma)^2}\delta^2a^2,
\end{align*}
where $a$ is an essential parameter. Including $\sigma$, we have that (\ref{2.27})--(\ref{2.29}) is a two-parameter family of solutions originating from the fixed point $\boldsymbol{U}_0$, with the leading order approximations of FLRW$(\sigma,1)$ as a one-parameter subset. The FLRW$(\sigma,1)$ spacetimes correspond to $a=1$ and any other value of $a$ represents a self-similar perturbation from FLRW$(\sigma,1)$. From this point onwards the value of $\delta$ is fixed as
\begin{align}
	\delta = \frac{1}{4}(3+3\sigma)\label{2.33}
\end{align}
so as to simplify calculations and match the notation used in \cite{ST2012A}.
\begin{definition}
	\label{D6}
	The \emph{asymptotically Friedmann} spacetimes, denoted by FLRW$(\sigma,a)$, are defined as the two-parameter family of solutions to (\ref{2.5})--(\ref{2.8}) with the leading order form:
	\begin{align*}
		A(\xi) &= 1 - \frac{1}{4}a^2\xi^2 + O_{\xi\to0}(\xi^4),\\
		G(\xi) &= \frac{1}{4}(3+3\sigma)\xi + O_{\xi\to0}(\xi^3),\\
		v(\xi) &= \frac{1}{2}\xi + O_{\xi\to0}(\xi^3),
	\end{align*}
	where we note that the third-order coefficient of $G$ is zero for $\sigma=\frac{1}{3}$. Furthermore, $a$ is referred to as the \emph{acceleration parameter}.
\end{definition}
The asymptotic form of the FLRW$(\sigma,a)$ spacetimes was first found by Carr and Yahil \cite{CY1990}, with this asymptotic form given in comoving coordinates. The FLRW$(\sigma,a)$ spacetimes are exact solutions of Einstein's field equations, even though they are not known explicitly. Despite this, we can still provide the leading order approximation of FLRW$(\sigma,a)$ solutions local to the centre of expansion.
\begin{proposition}
	\label{P5}
	The FLRW$(\sigma,a)$ spacetimes are given in self-similar Schwarzschild coordinates to leading order as so:
	\begin{align*}
		ds^2 &= -\frac{16}{(3+3\sigma)^2}\left(1+\frac{1}{4}a^2\xi^2\right)dt^2 + \left(1+\frac{1}{4}a^2\xi^2\right)dr^2 + r^2d\Omega^2 + O_{\xi\to0}(\xi^4),\\
		v &= \frac{1}{2}\xi + O_{\xi\to0}(\xi^3),\\
		\rho &= \frac{3a^2\xi^2}{4\kappa r^2} + O_{\xi\to0}(\xi^4),\\
		p &= \sigma\rho.
	\end{align*}
\end{proposition}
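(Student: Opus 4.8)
The plan is to obtain the claimed leading-order data by substituting the defining asymptotics of Definition \ref{D5} --- namely $A(\xi)\approx 1-\frac{1}{4}a^2\xi^2$, $G(\xi)\approx\frac{1}{4}(3+3\sigma)\xi$ and $v(\xi)\approx\frac{1}{2}\xi$ as $\xi\to 0$ --- into the exact relations (\ref{2.3}), (\ref{2.9}) and the constraint (\ref{2.8}), and then expanding each resulting expression to the order at which the acceleration parameter first appears. For the spatial part, the coefficient of $dr^2$ in the metric (\ref{2.3}) is $1/A$, so expanding $(1-\frac{1}{4}a^2\xi^2)^{-1}$ gives $1/A\approx 1+\frac{1}{4}a^2\xi^2$; the component $v\approx\frac{1}{2}\xi$ and the equation of state $p=\sigma\rho$ from (\ref{2.4}) require no computation. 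For the temporal part I would use (\ref{2.9}) to write $B=\xi^2/(AG^2)$, substitute the leading forms of $A$ and $G$, cancel the factors $\xi^2$, and expand, obtaining $B\approx\frac{16}{(3+3\sigma)^2}(1-\frac{1}{4}a^2\xi^2)^{-1}\approx\frac{16}{(3+3\sigma)^2}(1+\frac{1}{4}a^2\xi^2)$, which is the stated coefficient of $dt^2$.

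For the density I would feed the same asymptotics into the constraint (\ref{2.8}). The only step needing a little care is the expansion of the factor $\{\cdot\}_S=3(G-v)-3\sigma v(1-Gv)$: Definition \ref{D5} gives $G-v\approx\frac{1}{4}(1+3\sigma)\xi$ and $1-Gv\approx 1$, so that $\{\cdot\}_S\approx\frac{3+3\sigma}{4}\xi$, and one must note that this leading $O(\xi)$ term is nonzero. Since also $1-v^2\approx 1$ and $1-A\approx\frac{1}{4}a^2\xi^2$, the numerator $3(1-v^2)(1-A)G$ of (\ref{2.8}) is $\approx\frac{3(3+3\sigma)}{16}a^2\xi^3$, and division by $\kappa r^2\{\cdot\}_S$ collapses everything to $\rho\approx\frac{3a^2\xi^2}{4\kappa r^2}$. (The same value follows from (\ref{2.26}) with $H=G/v\to\frac{1}{2}(3+3\sigma)$ and $\{\cdot\}_S^*\to\frac{1}{2}(3+3\sigma)$.) Collecting the four pieces gives Proposition \ref{P5}.

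I do not anticipate a real obstacle here: the statement is a bookkeeping computation obtained by pushing the asymptotics of Definition \ref{D5} through (\ref{2.3}), (\ref{2.8}) and (\ref{2.9}). The two points to keep in mind are order-consistency --- in particular that the leading $O(\xi)$ contributions to $G-v$ and to $\{\cdot\}_S$ do not cancel, since otherwise the leading balance in $\rho$ would occur at a different power of $\xi$ --- and the fact that the $\xi^2$ coefficient recorded for $B$ is exactly the one inherited from the leading-order expressions for $A$ and $G$ furnished by Definition \ref{D5}, which is the sense in which the statement holds ``to leading order''.
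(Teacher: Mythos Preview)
Your proposal is correct and follows essentially the same route as the paper: both rely on Definition~\ref{D5} together with the identity $B=\xi^2/(AG^2)$ from (\ref{2.9}), and then read off the metric, velocity and density to leading order. The paper's proof is simply terser --- it cites Proposition~\ref{P3} and Definition~\ref{D5} and records the $B$ formula --- whereas you spell out the $\{\cdot\}_S$ and $\rho$ expansions via (\ref{2.8}) explicitly.
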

\begin{proof}
	This follows from Proposition \ref{P3} and Definition \ref{D6} by noting that $ABG^2=\xi^2$.
\end{proof}

The closer the acceleration parameter is to one, and the closer an observer is to the centre of expansion, the closer the spacetime is to spatial homogeneity. Regardless, any FLRW$(\sigma,a)$ spacetime with $a\neq1$ is still inhomogeneous, and thus placing our galaxy anywhere within this spacetime would violate the Copernican principle. As the name suggests, the acceleration parameter is also responsible for the rate of accelerated expansion of the spacetime. For an observer near the centre of expansion, and an acceleration parameter only slightly larger than one, an FLRW$(\frac{1}{3},a)$ universe would appear similar to the Standard Model of Cosmology in the Radiation Dominated Epoch, that is, it would appear flat and with an accelerated expansion similar to one modelled by a small positive cosmological constant. Smoller and Temple give a more in-depth consideration of the cosmological implications of FLRW$(\sigma,a)$ spacetimes in \cite{ST2012A}.

\subsection{Shock Wave Construction}\label{Subsection2.5}

Suppose that we have two spherically symmetric, although not necessarily self-similar, solutions to the Einstein--Euler equations. Let us denote these solutions by the triples $(g,\rho,\vec{u})$ and $(\bar{g},\bar{\rho},\vec{v})$, and assume that these solutions have barotropic equations of state $p=p(\rho)$ and $\bar{p}=\bar{p}(\bar{\rho})$ respectively. Since we are assuming spherical symmetry, when specifying a set of coordinates $(t,r,\theta,\phi)$, it is sufficient to only consider the coordinates $(t,r)$. In this light, let metrics $g$ and $\bar{g}$ be given in Schwarzschild coordinates $(t,r)$ and $(\bar{t},\bar{r})$ respectively as so:
\begin{align*}
	ds^2 &= -B(t,r)dt^2 + \frac{1}{A(t,r)}dr^2 + r^2d\Omega^2,\\
	d\bar{s}^2 &= -\bar{B}(\bar{t},\bar{r})d\bar{t}^2 + \frac{1}{\bar{A}(\bar{t},\bar{r})}d\bar{r}^2 + \bar{r}^2d\Omega^2,
\end{align*}
where the angular coordinates $(\theta,\phi)$ and $(\bar{\theta},\bar{\phi})$ have been identified.
\begin{definition}
	\label{D7}
	We say that two metrics can be \emph{matched} on a spherical surface $\tilde{r}=\Phi(\tilde{t})$ if there exists a common set of coordinates $(\tilde{t},\tilde{r})$ such that the coefficients of the metrics agree on this surface when written in these coordinates.
\end{definition}
It is not required that the metrics be given in Schwarzschild coordinates in order to be matched, but it does provide a convenient set of coordinates from which the metrics can be compared. For metrics $g$ and $\bar{g}$, we may simply take $(t,r)$ as our common set of coordinates and ask which transformation of the form:
\begin{align*}
	\bar{t} &= \bar{t}(t,r),\\
	\bar{r} &= \bar{r}(t,r),
\end{align*}
is required in order to match these metrics. The reason Schwarzschild coordinates are preferred is because the $d\Omega^2$ coefficients are automatically matched through the identification $\bar{r}=r$. In order to avoid introducing $dtdr$ terms, the identification implies the most general transformation that can be applied takes the form $\bar{t}=\bar{t}(t)$. Thus for two metrics given in Schwarzschild coordinates, the process of matching these metrics reduces to the existence of a spherical surface $r=\Phi(t)$ and a coordinate transformation $\bar{t}=\bar{t}(t)$ that satisfy the algebraic-differential equations:
\begin{align*}
	B(t,\Phi(t)) &= \bar{B}(\bar{t}(t),\Phi(t))[\bar{t}'(t)]^2,\\
	A(t,\Phi(t)) &= \bar{A}(\bar{t}(t),\Phi(t)).
\end{align*}
If these equations are soluble, then metrics $g$ and $\bar{g}$ can be matched on the surface $r=\Phi(t)$. However, such a matching does not automatically imply that mass-energy and momentum are conserved across the surface. With this in mind, let us assume that there exists a set of coordinates $(t,r)$ for which the metrics match on the spherical surface $r=\Phi(t)$ and define
\begin{align*}
	\Sigma = \{(t,r,\theta,\phi):r=\Phi(t),t>0\}.
\end{align*}
\begin{definition}
	\label{D8}
	We say that the spacetime given by the matched metric $g\cup\bar{g}$, along with the associated hydrodynamic variables, forms a \emph{shock-wave solution} of the Einstein--Euler equations if the Rankine--Hugoniot jump conditions hold across the surface $\Sigma$. Furthermore, the spherical surface $\Sigma$ is known as the \emph{shock surface} or simply the \emph{shock}.
\end{definition}
As like in classical shock-wave theory, the Rankine--Hugoniot jump conditions express the weak form of the conservation of mass-energy and momentum across the shock-surface.
\begin{proposition}
	\label{P6}
	Let $\vec{p}\in\Sigma$ and $U$ be a neighbourhood of $\vec{p}$, then the weak form of the conservation of mass-energy and momentum across $\Sigma\cap U$ is given by:
	\begin{align}
		\int_U T^{\mu\nu}\nabla_\nu\varphi\, d\vec{x} = 0\quad \forall\ \varphi \in C^\infty_c(U),\label{2.34}
	\end{align}
	or equivalently:
	\begin{align}
		\int_U G^{\mu\nu}\nabla_\nu\varphi\, d\vec{x} = 0\quad \forall\ \varphi \in C^\infty_c(U).\label{2.35}
	\end{align}
\end{proposition}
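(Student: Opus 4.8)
The plan is to exhibit (\ref{2.34}) as the weak (distributional) rendering of the classical local conservation law $\nabla_\nu T^{\mu\nu} = 0$, and then to read off (\ref{2.35}) from the Einstein equations together with the Bianchi identities. First I would record that on each side of $\Sigma$ the matched object is an honest smooth perfect-fluid solution, so $G = \kappa T$ holds classically there and the twice-contracted Bianchi identity $\nabla_\nu G^{\mu\nu}\equiv 0$ forces $\nabla_\nu T^{\mu\nu} = 0$ pointwise off $\Sigma$ — the classical conservation of mass--energy and momentum in the smooth region. What is then genuinely at stake is conservation \emph{across} $\Sigma$: whether the matched spacetime hides a source or sink of energy--momentum concentrated on the shock. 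Any reasonable weak formulation must reduce to the pointwise law wherever the solution is smooth and must detect exactly such a concentrated source, and the canonical choice — multiply $\nabla_\nu T^{\mu\nu}=0$ by a test function $\varphi\in C^\infty_c(U)$, integrate over $U$, and move the derivative from $T$ onto $\varphi$, the boundary terms vanishing since $\varphi$ has compact support in $U$ — produces precisely (\ref{2.34}).

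For the equivalence of (\ref{2.34}) and (\ref{2.35}) I would use that, off the measure-zero surface $\Sigma$, the matched solution satisfies $G^{\mu\nu} = \kappa T^{\mu\nu}$ classically, so $T^{\mu\nu} = \tfrac{1}{\kappa}G^{\mu\nu}$ as locally integrable functions on $U$; the two integrals then differ only by the overall factor $\tfrac{1}{\kappa}$, by linearity, so one vanishes for every $\varphi$ exactly when the other does. That the integrand built from $G$ is itself an honest function here, with no concentrated part of its own, is part of what it means for $g\cup\bar g$ to be a shock-wave solution, and is consistent with rather than circular relative to the statement being proved.

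The step I expect to be the main obstacle is making the integration by parts rigorous and, more to the point, certifying that it really captures conservation across $\Sigma$ rather than a coordinate artefact. The difficulty is that $\nabla_\nu T^{\mu\nu}$ is not literally a coordinate divergence — it carries Christoffel terms, $\nabla_\nu T^{\mu\nu} = \tfrac{1}{\sqrt{-g}}\,\partial_\nu\!\big(\sqrt{-g}\,T^{\mu\nu}\big) + \Gamma^\mu_{\nu\lambda}T^{\nu\lambda}$ — and the matched metric is only Lipschitz across $\Sigma$, since in Schwarzschild coordinates the shock appears one derivative less regular than it actually is. I would therefore argue: (a) the integration by parts is legitimate because $\varphi$ is smooth with compact support and $\sqrt{-g}\,T^{\mu\nu}\in L^1_{\mathrm{loc}}(U)$ (the matched metric is continuous, the fluid variables $\rho$, $p$, $\boldsymbol u$ bounded); and (b) the Christoffel term is inert on $\Sigma$ because Lipschitz regularity of the matched metric puts $\Gamma^\mu_{\nu\lambda}\in L^\infty_{\mathrm{loc}}(U)$, so $\Gamma^\mu_{\nu\lambda}T^{\nu\lambda}\in L^\infty_{\mathrm{loc}}(U)$ places no mass on the measure-zero surface and may be absorbed into the bulk terms. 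The conclusion is that the only possible obstruction to (\ref{2.34}) is a delta-function contribution of $\partial_\nu(\sqrt{-g}\,T^{\mu\nu})$ supported on $\Sigma$ — a nonzero jump in $\sqrt{-g}\,T^{\mu\nu}$ in the conormal direction — and requiring (\ref{2.34}) for all $\varphi$ is exactly the demand that this contribution be absent, i.e. that energy and momentum flux balance across the shock. Pinning down this reduction with only the regularity available in Schwarzschild coordinates is the delicate point, and for the underlying regularity and distributional bookkeeping I would lean on the framework of Smoller and Temple \cite{JSmoller1994A}.
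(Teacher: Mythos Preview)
Your proposal is correct and follows essentially the same route as the paper: derive (\ref{2.34}) by multiplying the classical law $\nabla_\nu T^{\mu\nu}=0$ by a compactly supported test function, integrating, and passing the derivative onto $\varphi$ via the product identity and the divergence theorem (the boundary term vanishing by compact support), then obtain (\ref{2.35}) directly from the field equations $G=\kappa T$. The paper's argument is purely formal and does not engage with the regularity issues you raise about Christoffel terms and Lipschitz metrics; your extra care there is sound but goes beyond what the paper actually proves.
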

\begin{proof}
	The proof is given in Appendix \ref{A.1}.
\end{proof}
The following proposition specifies the general relativistic form of the Rankine--Hugoniot jump conditions.
\begin{proposition}
	\label{P7}
	The Rankine--Hugoniot jump conditions are given by
	\begin{align*}
		[G^{\mu\nu}]n_\nu = 0,
	\end{align*}
	where $\vec{n}$ is the normal vector to $\Sigma$ and
	\begin{align*}
		[G^{\mu\nu}] = G^{\mu\nu}(g) - G^{\mu\nu}(\bar{g})
	\end{align*}
	is the jump in $G$ across $\Sigma$.
\end{proposition}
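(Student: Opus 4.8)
The plan is to localise near a point $\boldsymbol{p}\in\Sigma$, split the neighbourhood $U$ along the shock surface into the two open regions on which the matched metric $g\cup\bar{g}$ is smooth, apply the divergence-theorem computation from the proof of Proposition \ref{P6} on each piece, and read off the resulting surface contribution. Concretely, write $U = U^-\cup(\Sigma\cap U)\cup U^+$, where $U^-$ and $U^+$ are the connected components of $U\setminus\Sigma$ carrying the solutions $(g,\rho,\boldsymbol{u})$ and $(\bar{g},\bar{\rho},\bar{\boldsymbol{v}})$ respectively. On each of $U^\pm$ the metric is smooth and the Einstein field equations hold classically, so by the contracted Bianchi identity --- equivalently, by $\nabla_\nu T^{\mu\nu} = 0$ on each side --- we have $\nabla_\nu G^{\mu\nu} = 0$ there.

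Fix $\varphi\in C^\infty_c(U)$. Using the product rule $G^{\mu\nu}\nabla_\nu\varphi = \nabla_\nu(\varphi G^{\mu\nu}) - \varphi\nabla_\nu G^{\mu\nu}$ together with the vanishing of $\nabla_\nu G^{\mu\nu}$ on each side, the integral over $U$ splits as
\begin{align*}
	\int_U G^{\mu\nu}\nabla_\nu\varphi\ d\boldsymbol{x} = \int_{U^-}\nabla_\nu(\varphi G^{\mu\nu})\ d\boldsymbol{x} + \int_{U^+}\nabla_\nu(\varphi G^{\mu\nu})\ d\boldsymbol{x}
\end{align*}
Applying the divergence theorem on each region exactly as in the proof of Proposition \ref{P6}, and noting that $\varphi$ vanishes on $\partial U$ so only the common boundary $\Sigma\cap U$ contributes, the two terms become surface integrals over $\Sigma\cap U$ taken with opposite outward normals. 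Choosing $\boldsymbol{n}$ to be the normal pointing from the $g$-side into the $\bar{g}$-side, this gives
\begin{align*}
	\int_U G^{\mu\nu}\nabla_\nu\varphi\ d\boldsymbol{x} = \int_{\Sigma\cap U}\varphi\left(G^{\mu\nu}(g) - G^{\mu\nu}(\bar{g})\right)n_\nu\ d\boldsymbol{y} = \int_{\Sigma\cap U}\varphi\,[G^{\mu\nu}]n_\nu\ d\boldsymbol{y}
\end{align*}

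Hence the weak conservation statement (\ref{2.35}) of Proposition \ref{P6} holds for every $\varphi\in C^\infty_c(U)$ if and only if $\int_{\Sigma\cap U}\varphi\,[G^{\mu\nu}]n_\nu\ d\boldsymbol{y} = 0$ for every such $\varphi$. Since every smooth compactly supported function on $\Sigma\cap U$ is the restriction of some $\varphi\in C^\infty_c(U)$, the fundamental lemma of the calculus of variations forces $[G^{\mu\nu}]n_\nu = 0$ pointwise on $\Sigma$; conversely, this identity makes the surface term vanish for all $\varphi$. As $\boldsymbol{p}\in\Sigma$ was arbitrary, this establishes the stated form of the Rankine--Hugoniot conditions.

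The main obstacle is the regularity bookkeeping at the interface. The matched metric is only Lipschitz across $\Sigma$ in Schwarzschild coordinates, so a priori $G^{\mu\nu}$ may carry a distributional part supported on $\Sigma$, and the derivation of (\ref{2.35}) in Proposition \ref{P6} assumed differentiability of $T$. The splitting above is precisely the device that isolates this part: working on the two sides separately, where the metric is smooth, shows that the entire singular contribution of $G^{\mu\nu}$ to the weak form collapses to the surface integral of $\varphi\,[G^{\mu\nu}]n_\nu$, so that (\ref{2.35}) --- the absence of such sources --- is equivalent to the vanishing of the jump. Some care is also needed with the orientation of the normal and with the precise sense in which the divergence theorem is applied to the components of the tensor $G^{\mu\nu}$, but this is already used in Proposition \ref{P6} and no new analytic input is required.
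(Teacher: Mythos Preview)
Your proposal is correct and follows essentially the same route as the paper: split $U$ along $\Sigma$ into two regions where the metric is smooth, use the product rule together with $\nabla_\nu G^{\mu\nu}=0$ on each side, apply the divergence theorem to reduce to a surface integral of $\varphi\,[G^{\mu\nu}]n_\nu$ over $\Sigma\cap U$, and conclude by varying $\varphi$. The paper's proof is slightly terser (it does not spell out the fundamental-lemma step or the regularity discussion you include), but the argument is the same.
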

\begin{proof}
	The proof is given in Appendix \ref{A.1}.
\end{proof}

\subsection{Regularity}\label{Subsection2.6}

As like in Subsection \ref{Subsection2.5}, consider the solution triples $(g,\rho,\vec{u})$ and $(\bar{g},\bar{\rho},\vec{v})$. Assume that these solutions can be matched Lipschitz continuously on a spherical surface $\Sigma$ with a spacelike normal vector $\vec{n}$ to form the matched metric $g\cup\bar{g}$. Furthermore, let $g\cup\bar{g}$ satisfy the Rankine--Hugoniot jump condition across $\Sigma$ so that $g\cup\bar{g}$ forms a shock-wave solution. For the rest of this subsection, the matched metric is referred to simply as the metric.

It is reasonable to be concerned with the regularity of such a solution, since a Lipschitz continuous shock wave has discontinuous first-order derivatives, and thus delta function sources in the second-order derivatives. Given that the Einstein tensor comprises second-order derivatives of the metric, this too is expected to harbour delta function sources. On the right side of the Einstein field equations, the hydrodynamic variables $\rho$, $p$ and $\vec{u}$, along with the metric, form the stress-energy-momentum tensor, and since the hydrodynamic variables are expected to be at worst discontinuous across the shock, so too is the stress-energy-momentum tensor. This is problematic, since the Einstein field equations cannot have different levels of regularity on the left and right hand sides of the equations. However, it turns out that even though delta function sources may appear in the second-order derivatives of the metric at the shock, with these being coordinate dependent, the Einstein tensor does not have any delta function sources, that is, the delta function sources cancel in the Einstein tensor. This miraculous result, from \cite{ST1995}, is summarised in the following theorem.
\begin{theorem}
	\label{T1}
	Let $\Sigma$ denote a smooth, three-dimensional surface with a spacelike normal vector $\vec{n}$. Assume that the components of the metric are continuous on each side of $\Sigma$ and Lipschitz continuous across $\Sigma$ in some fixed coordinate system. Then the following statements are equivalent:
	\begin{enumerate}
		\item $[K]=0$ at each point of $\Sigma$, where $K$ is the second fundamental form of the metric.
		\item The Riemann curvature and Einstein tensors, viewed as second-order operators on the metric components, produce no delta function sources on $\Sigma$.
		\item For each point $\vec{p}\in\Sigma$ there exists a $C^{1,1}$ coordinate transformation defined in a neighbourhood of $\vec{p}$ such that in the new coordinates, which can be taken to be the Gaussian normal coordinates for the surface, the metric components are $C^{1,1}$ functions of these coordinates.
		\item For each point $\vec{p}\in\Sigma$ there exists a coordinate frame that is locally Lorentzian at $\vec{p}$ and can be reached from the original coordinates by a $C^{1,1}$ coordinate transformation.
	\end{enumerate}
	Moreover, if any one of these statements hold, then the Rankine--Hugoniot jump conditions
	\begin{align*}
		[G^{\mu\nu}]n_\nu = 0
	\end{align*}
	hold at each point of $\Sigma$.
\end{theorem}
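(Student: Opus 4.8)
The plan is to pass to \emph{Gaussian normal coordinates} adapted to $\Sigma$ and read off all four conditions there. Fix $\boldsymbol{p}\in\Sigma$, choose smooth coordinates $(x^1,x^2,x^3)$ on $\Sigma$ near $\boldsymbol{p}$, and propagate them off $\Sigma$ along the geodesics with initial velocity the unit (spacelike) normal; with $n$ the arclength parameter of these geodesics, one obtains coordinates $(n,x)$ in which $\Sigma=\{n=0\}$ and
\[
ds^2 = dn^2 + g_{ij}(n,x)\,dx^i dx^j ,
\]
$g_{ij}$ being the induced Lorentzian metric on the slices $\{n=\mathrm{const}\}$. The first point to establish is that, although the metric is only Lipschitz, the passage from the given frame to $(n,x)$ is a $C^{1,1}$ coordinate change, so each of the statements (1)--(4) may be tested in $(n,x)$ without loss of generality. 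In these coordinates the second fundamental form is $K_{ij}=\tfrac12\,\partial_n g_{ij}$ (up to sign convention), so $[K_{ij}]=\tfrac12\,[\partial_n g_{ij}]$; since $g_{ij}$ is continuous across $\Sigma$ and smooth on each side, its only possible first-derivative discontinuity is in the normal direction, whence $[K]=0$ is equivalent to $g_{ij}\in C^1$, equivalently $g_{ij}\in C^{1,1}$, in the frame $(n,x)$. This is $(1)\Leftrightarrow(3)$.

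For $(1)\Leftrightarrow(2)$ I would compute the distributional curvature in $(n,x)$ via the Gauss--Codazzi decomposition. The purely tangential components of $R$ are given by the Gauss equation (intrinsic curvature of the slices plus a term quadratic in $K$) and the mixed normal--tangential components by the Codazzi equation (tangential covariant derivatives of $K$); both involve only $K$ and its tangential derivatives, which are at worst jump-discontinuous across $\Sigma$ and contribute no $\delta_\Sigma$. The remaining components satisfy a Riccati-type relation in which $\partial_n K_{ij}$ appears, and this is the \emph{only} place a normal derivative acts on the jump-discontinuous quantity $K$: its distributional part across $\{n=0\}$ is $[K_{ij}]\,\delta_\Sigma$. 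Hence the distributional Riemann tensor --- and, by contraction and trace-reversal, the Einstein tensor --- carries a $\delta_\Sigma$ source precisely when $[K]\neq0$. Since under any $C^{1,1}$ change of frame $R$ transforms tensorially with coefficients equal to first derivatives of the coordinate map, which are Lipschitz hence $L^\infty$, the presence or absence of a $\delta_\Sigma$ source is a property of the $C^{1,1}$-equivalence class, so it is legitimate to test it in $(n,x)$.

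It remains to fit (4) into the cycle and to extract the Rankine--Hugoniot conditions. For $(3)\Rightarrow(4)$ I would compose the $C^{1,1}$ change to $(n,x)$ with the classical smooth polynomial map $y^\mu = x^\mu + \tfrac12\,\Gamma^\mu{}_{\alpha\beta}(\boldsymbol{p})\,x^\alpha x^\beta$, which kills the first derivatives of the ($C^{1,1}$) metric at $\boldsymbol{p}$ and yields a frame that is locally Lorentzian at $\boldsymbol{p}$ and reached from the original frame by a $C^{1,1}$ transformation. For $(4)\Rightarrow(1)$: in such a frame the metric is $C^1$ at $\boldsymbol{p}$, so its one-sided normal derivatives there agree, i.e.\ $[K](\boldsymbol{p})=0$; since $\boldsymbol{p}$ is arbitrary and $[K]=0$ is frame-independent, $[K]\equiv0$. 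Finally, granting $[K]=0$, the tensor $K$ is continuous across $\Sigma$, and the Hamiltonian and momentum constraint identities (the contracted Gauss and Codazzi equations) express the components $G^{\mu n}$ of the Einstein tensor algebraically through $K$ and its tangential derivatives, with no second normal derivative; hence $G^{\mu n}$ is continuous across $\{n=0\}$, and since the unit normal is $n_\nu\propto\delta^n_\nu$ here, $[G^{\mu\nu}]n_\nu\propto[G^{\mu n}]=0$, an identity that persists in the original frame. These are the Rankine--Hugoniot jump conditions.

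The step I expect to be the main obstacle is the low-regularity bookkeeping on which everything else rests: showing that the construction of Gaussian normal coordinates from a merely Lipschitz metric really produces a $C^{1,1}$ coordinate change (the normal-geodesic system has an $L^\infty$, possibly discontinuous, right-hand side, so both the flow and its dependence on the base point in $\Sigma$ must be shown to be $C^1$ with Lipschitz derivative), and justifying the distributional Gauss--Codazzi identities at this regularity --- in particular that the terms quadratic in $K$ are bona fide products of $L^\infty$ functions, that the lone normal derivative in the Riccati relation is the only distributional derivative landing on a jump, and hence that no ill-defined $\delta_\Sigma^2$ or $\delta_\Sigma'$ term can appear. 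With this secured, the equivalences and the jump conditions follow by the identifications sketched above.
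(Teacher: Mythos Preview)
The paper does not actually prove this theorem: after stating Theorems~\ref{T1} and~\ref{T2}, it simply writes ``The proofs of these theorems can be found in \cite{JSmoller1994A}.'' So there is no in-paper argument to compare against; the result is quoted as background from Smoller and Temple's earlier work.

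That said, your outline is the standard route to this result and is essentially what one finds in the cited reference: pass to Gaussian normal coordinates, identify $K_{ij}$ with $\tfrac12\partial_n g_{ij}$, use the Gauss--Codazzi decomposition to isolate the single normal derivative of $K$ as the sole possible $\delta_\Sigma$ source in the curvature, and read off the jump conditions from the constraint equations $G^{\mu n}$. You have also correctly flagged the genuine technical crux --- establishing that the Gaussian-normal construction from a merely Lipschitz metric yields a $C^{1,1}$ coordinate change and that the distributional Gauss--Codazzi identities are valid at this regularity --- which is precisely the content that the paper outsources to \cite{JSmoller1994A}. Your sketch is sound as a plan; the low-regularity geodesic-flow argument is where the real work lies, and you would need to supply it in full to have a self-contained proof.
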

This theorem provides a criterion for the removal of the delta function sources and also a coordinate system for which the shock-wave solution can achieve optimal regularity, that is, when the metric has a Lipschitz continuous derivative at the shock. The following theorem, also from \cite{ST1995}, provides convenient criteria for satisfying one of the equivalent statements of Theorem \ref{T1}.
\begin{theorem}
	\label{T2}
	Assume the following:
	\begin{enumerate}
		\item That $g$ and $\bar{g}$ are two spherically symmetric metrics that match across a three-dimensional surface $\Sigma$ to form the matched metric $g\cup\bar{g}$.
		\item The matched metric is Lipschitz continuous cross $\Sigma$.
		\item The normal $\vec{n}$ to $\Sigma$ is non-null.
	\end{enumerate}
	Then the following are equivalent:
	\begin{enumerate}
		\item $[G^{\mu\nu}]n_\nu=0$.
		\item $[G^{\mu\nu}]n_\mu n_\nu=0$.
		\item $[K]=0$ at each point of $\Sigma$, where $K$ is the second fundamental form of the metric.
		\item The components of the matched metric in any Gaussian-normal coordinate system are $C^{1,1}$ functions of these coordinates across $\Sigma$.
	\end{enumerate}
\end{theorem}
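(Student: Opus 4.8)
The plan is to build on Theorem~\ref{T1}, which already supplies most of the four-way equivalence, and to bridge the one genuinely new implication using spherical symmetry, which collapses the tensorial condition $[K]=0$ to a single scalar equation. Under the hypotheses of Theorem~\ref{T2}, Theorem~\ref{T1} (and its obvious analogue when $\boldsymbol n$ is timelike) gives that $[K]=0$ on $\Sigma$ is equivalent to the absence of delta-function sources in the Riemann and Einstein tensors and to the existence of a $C^{1,1}$ change to Gaussian normal coordinates in which the metric is $C^{1,1}$; since $[K]=0$ says precisely that the normal derivatives of the metric components agree across $\Sigma$ in \emph{every} Gaussian normal chart, and the metric is $C^2$ on each side, this yields item~3 $\Leftrightarrow$ item~4. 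Theorem~\ref{T1} also gives $[K]=0\Rightarrow[G^{\mu\nu}]n_\nu=0$, i.e.\ item~3 $\Rightarrow$ item~1, and item~1 $\Rightarrow$ item~2 follows on contracting with $n_\mu$. So the whole content of the theorem reduces to proving item~2 $\Rightarrow$ item~3.

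The first step toward that is to observe that, for spherically symmetric Lipschitz-matched metrics, items~1 and~2 are literally the same equation. Write both metrics in the Schwarzschild form $-B\,dt^2+A^{-1}dr^2+r^2d\Omega^2$; Lipschitz matching along $\Sigma=\{r=\Phi(t)\}$ forces $[A]=[B]=0$ there, and differentiating these relations along $\Sigma$ gives $[A_t]=-\dot\Phi[A_r]$ and $[B_t]=-\dot\Phi[B_r]$, so only $[A_r]$ and $[B_r]$ are free. With $n_\nu\propto(-\dot\Phi,1,0,0)$, a direct computation of the mixed Einstein tensor --- whose components involve the metric derivatives only through $A_r$, $B_r$, $A_t$ --- shows that the angular components of $[G^{\mu\nu}]n_\nu$ vanish trivially, the $t$-component vanishes identically once the relation for $[A_t]$ is inserted, and the $r$-component agrees with the full scalar $[G^{\mu\nu}]n_\mu n_\nu$, both being proportional to $\dot\Phi^2[A_r]+A^2[B_r]$. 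Hence items~1 and~2 are each the single condition $\dot\Phi^2[A_r]+A^2[B_r]=0$, which in particular settles item~1 $\Leftrightarrow$ item~2.

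It remains to identify that condition with $[K]=0$, and this is where spherical symmetry is indispensable. The second fundamental form of $\Sigma$ is diagonal with a double angular eigenvalue $K^\theta_\theta=K^\phi_\phi$ and one in-plane eigenvalue $K^\tau_\tau$; the angular eigenvalue equals $r^{-1}$ times the normal derivative of the areal radius $r$, and since the squared gradient $|\nabla r|^2=g^{rr}=A$ is continuous across $\Sigma$ while the tangential derivative of $r$ along $\Sigma$ is intrinsic, hence continuous, the normal derivative of $r$ is continuous and $[K^\theta_\theta]=0$ automatically. Thus $[K]=0$ collapses to the single equation $[K^\tau_\tau]=0$, which --- like item~2 --- is a linear condition on the two-dimensional space of jumps $([A_r],[B_r])$. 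The easy implication item~3 $\Rightarrow$ item~1 $\Leftrightarrow$ item~2, established above, shows $[K^\tau_\tau]=0$ entails $\dot\Phi^2[A_r]+A^2[B_r]=0$; since the latter is a genuine line in that plane and the former is not all of the plane (Lipschitz but non-$C^1$ shock waves with $[K]=0$ do exist), the two linear conditions coincide, and therefore item~2 $\Rightarrow$ item~3. Equivalently and more transparently, the Gauss (Hamiltonian) constraint in Gaussian normal coordinates gives $[G^{\mu\nu}]n_\mu n_\nu=[G_{nn}]$, which carries no delta part and equals a nonzero multiple of $[K^2-K_{ij}K^{ij}]=4K^\theta_\theta[K^\tau_\tau]$ (using continuity of the intrinsic scalar curvature of $\Sigma$ and of $K^\theta_\theta$); as $K^\theta_\theta\neq0$ away from apparent horizons, item~2 forces $[K^\tau_\tau]=0$, hence $[K]=0$. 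Combined with the earlier reductions this closes the cycle item~1 $\Rightarrow$ item~2 $\Rightarrow$ item~3 $\Leftrightarrow$ item~4 $\Rightarrow$ item~1.

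The main obstacle is this last identification: a single scalar Rankine--Hugoniot-type equation cannot in general determine the full tensor $[K]$, which has more independent components, so the argument must genuinely exploit that spherical symmetry makes the angular part of $[K]$ continuous for free --- through continuity of the areal radius and its normal derivative --- and reduces $[K]=0$ to one linear equation that can then be matched to item~2 via the trivial implication (3)$\Rightarrow$(1) of Theorem~\ref{T1}. A secondary, more technical point is the non-degeneracy $K^\theta_\theta\neq0$, i.e.\ that the areal radius is not momentarily stationary in the normal direction, needed for the Gauss-constraint form of the argument; it holds away from apparent horizons, and the explicit Schwarzschild-coordinate computation sidesteps it by exhibiting items~1 and~2 as the identical equation.
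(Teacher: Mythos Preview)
The paper does not actually prove Theorem~\ref{T2}; immediately after stating it, the author writes ``The proofs of these theorems can be found in \cite{JSmoller1994A}.'' So there is no in-paper argument to compare against, and your proposal stands or falls on its own.

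Your overall strategy is sound and is, in fact, essentially the strategy of the cited Smoller--Temple paper: use Theorem~\ref{T1} for the equivalences $3\Leftrightarrow 4$ and $3\Rightarrow 1$, note $1\Rightarrow 2$ is trivial, and then use spherical symmetry to close $2\Rightarrow 3$ by showing that $[K^\theta_\theta]=0$ holds automatically (via continuity of $A=g^{rr}$ and the areal radius) so that $[K]=0$ collapses to the single scalar $[K^\tau_\tau]=0$, which one then identifies with the Rankine--Hugoniot scalar. The Gauss-constraint version of that last identification is clean and correct: $[G_{nn}]$ has no delta part and equals a multiple of $[K^2-K_{ij}K^{ij}]=4K^\theta_\theta[K^\tau_\tau]$, whence $[G^{\mu\nu}]n_\mu n_\nu=0$ forces $[K^\tau_\tau]=0$ wherever $K^\theta_\theta\neq 0$.

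One small slip: in your ``dimensional'' argument you justify that $[K^\tau_\tau]=0$ is \emph{not all of the plane} by citing the existence of Lipschitz-but-not-$C^1$ shocks with $[K]=0$. That parenthetical shows the kernel is at least one-dimensional, not at most; to conclude the two linear conditions coincide you need instead the existence of a matching with $[K]\neq 0$ (equivalently, that $[K^\tau_\tau]$ is a nontrivial functional of $([A_r],[B_r])$). This is easy to supply by direct computation, or you can simply drop the dimensional argument and rely on the Gauss-constraint route, which already establishes $2\Rightarrow 3$ without it. With that correction your proof is complete.
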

If the conditions of Theorem \ref{T2} are satisfied, then it is clear that the weak form of mass-energy and momentum conservation across the shock surface is equivalent to the single condition
\begin{align*}
	[T^{\mu\nu}]n_\mu n_\nu = 0.
\end{align*}
Thus the spherically symmetric Rankine--Hugoniot jump conditions reduce to the single equivalent condition
\begin{align*}
	[G^{\mu\nu}]n_\mu n_\nu = 0.
\end{align*}
Therefore a shock-wave solution, which satisfies the Rankine--Hugoniot jump conditions by definition, only requires the metric to be continuous on each side of $\Sigma$ and Lipschitz continuous across $\Sigma$ to satisfy the equivalent statements of Theorems \ref{T1} and \ref{T2}. The proofs of these theorems are given in \cite{ST1994}.

\subsection{The Lax Characteristic Conditions}\label{Subsection2.7}

The Lax characteristic conditions, when satisfied, guarantee characteristics in the same family as the shock impinge on the shock from both sides \cite{L1957,S1994}. This ensures the shock surface does not dissipate, providing the shock with a degree of stability, that is, the shock-wave solution is expected to remain as a shock-wave solution. Note that the Lax characteristic conditions lead to time irreversibility, since characteristics impinge on the shock, entropy increases and information is lost. In classical gas dynamics, the density and pressure are always larger behind stable shock waves, which means spherically symmetric shock waves with a greater pressure and density on the interior are expected to expand.

Consider again the solution triples $(g,\rho,\vec{u})$ and $(\bar{g},\bar{\rho},\vec{v})$ with equations of state $p=p(\rho)$ and $\bar{p}=\bar{p}(\bar{\rho})$ respectively, and assume that these solutions form the shock-wave solution $g\cup\bar{g}$. As a spherical surface has an interior and exterior, let $g$ represent the interior metric, which is given in comoving coordinates $(\hat{t},\hat{r})$ as
\begin{align*}
	d\hat{s}^2 = -e^{2\varphi}d\hat{t}^2 + e^{2\psi}d\hat{r}^2 + \mathscr{R}^2\hat{r}^2d\Omega^2.
\end{align*}
Finally, let $\bar{g}$ represent the exterior metric, with the associated comoving coordinates denoted by $(\bar{t},\bar{r})$. The objective is to determine the Lax characteristic conditions at the shock surface.
\begin{lemma}
	\label{L1}
	The shock speed relative to the interior fluid is given by
	\begin{align}
		e^{\psi-\varphi}\dot{\Phi},\label{2.36}
	\end{align}
	where $\hat{r}=\Phi(\hat{t})$ is the position of the shock in coordinates comoving with the interior fluid.
\end{lemma}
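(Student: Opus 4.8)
The plan is to exploit that the coordinates $(\hat t,\hat r)$ are comoving with the interior fluid, so that the fluid worldlines (at fixed angular coordinates) are exactly the curves $\hat r=\mathrm{const}$ and the interior fluid four-velocity is $\hat{\boldsymbol u}=(e^{-\varphi},0,0,0)$. The phrase \emph{shock speed relative to the interior fluid} is to be read as the speed of the shock measured in the instantaneous local rest frame of a fluid particle lying on $\Sigma$, so the first step is to build that frame. I would introduce the orthonormal legs $\hat e_0=e^{-\varphi}\partial_{\hat t}=\hat{\boldsymbol u}$ and $\hat e_1=e^{-\psi}\partial_{\hat r}$ --- equivalently, pass to local coordinates $(T,X)$ near an event on $\Sigma$ defined by $dT=e^{\varphi}d\hat t$ and $dX=e^{\psi}d\hat r$, in which the reduced two-dimensional metric $-e^{2\varphi}d\hat t^2+e^{2\psi}d\hat r^2$ takes the Minkowski form $-dT^2+dX^2$ to leading order. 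In such a frame, the speed of any curve relative to the fluid is by definition its coordinate velocity $dX/dT$.

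The computation is then immediate. Along the shock curve $\hat r=\Phi(\hat t)$ one has $d\hat r=\dot\Phi\,d\hat t$, hence
\[
\frac{dX}{dT}=\frac{e^{\psi}\,d\hat r}{e^{\varphi}\,d\hat t}=e^{\psi-\varphi}\dot\Phi,
\]
which is (\ref{2.36}). The same answer can be read off without introducing $(T,X)$: as the shock advances from the fluid worldline labelled $\hat r$ to the one labelled $\hat r+d\hat r$, a local fluid observer measures elapsed proper time $d\tau=e^{\varphi}d\hat t$ along its own worldline and proper separation $d\ell=e^{\psi}d\hat r$ between the two fluid worldlines on its surface of simultaneity $\hat t=\mathrm{const}$ (which is $g$-orthogonal to $\hat{\boldsymbol u}\propto\partial_{\hat t}$ precisely because the metric carries no $d\hat t\,d\hat r$ cross term), so the relative speed is $d\ell/d\tau=e^{\psi-\varphi}\dot\Phi$.

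The only points needing care --- and the closest thing to an obstacle --- are conceptual rather than computational. First, the one-forms $e^{\varphi}d\hat t$ and $e^{\psi}d\hat r$ need not be exact, since $\varphi$ and $\psi$ depend on both $\hat t$ and $\hat r$; but velocity is a pointwise notion and the Minkowski form of the metric holds to first order at the chosen event on $\Sigma$, so it is legitimate to work there (equivalently, invoke Gaussian-normal coordinates about the fluid worldline), and the variation of $e^{\psi}$ between the two infinitesimally separated fluid worldlines contributes only at higher order. Second, for consistency with the standing hypotheses one should note that $|e^{\psi-\varphi}\dot\Phi|<1$ exactly when the tangent $\partial_{\hat t}+\dot\Phi\,\partial_{\hat r}$ to the shock curve is timelike, equivalently when the normal to $\Sigma$ is spacelike; so under the subluminal-shock assumption the argument above is just the familiar special-relativistic kinematics transcribed into the fluid's local rest frame, and no further input is required.
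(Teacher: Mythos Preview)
Your proposal is correct and takes essentially the same approach as the paper: both pass to a locally Minkowskian frame adapted to the comoving coordinates by setting $dT=e^{\varphi}d\hat t$, $dX=e^{\psi}d\hat r$ (the paper writes this as $\tilde t(\hat t)$, $\tilde r(\hat r)$ with $d\tilde t/d\hat t=e^{\varphi}$, $d\tilde r/d\hat r=e^{\psi}$) and read off the shock speed as $dX/dT=e^{\psi-\varphi}\dot\Phi$. Your explicit remark that the one-forms need not be exact but that this is irrelevant at first order is a welcome clarification the paper leaves implicit.
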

\begin{proof}
	The proof is given in Appendix \ref{A.1}.
\end{proof}
Let $\tilde{\lambda}^+_{Int}$ and $\tilde{\lambda}^-_{Int}$ denote the speeds of the interior characteristics in $(\tilde{t},\tilde{r})$ coordinates. Since the characteristic speeds on the interior side of the shock equal the sound speeds in locally Minkowskian coordinates, we have
\begin{align*}
	\tilde{\lambda}^\pm_{Int} = \pm\sqrt{\frac{dp}{d\rho}}.
\end{align*}
The Lax characteristic conditions determine if the characteristic curves in the family of the shock impinge upon the shock from both sides. Since we are considering shocks that are outward moving with respect to $\hat{r}$ and $\bar{r}$, it follows that on the interior side, only the positive characteristic can impinge on the shock \cite{S1994}. Alternatively, the Lax characteristic conditions can be thought of as selecting whether it is the outgoing or ingoing shock that is stable. Let $\tilde{\lambda}^+_{Ext}$ and $\tilde{\lambda}^-_{Ext}$ denote the speeds of the exterior characteristics in $(\tilde{t},\tilde{r})$ coordinates. The Lax characteristic conditions are given as the inequalities
\begin{align}
	\tilde{\lambda}^+_{Ext} < s < \tilde{\lambda}^+_{Int}\label{2.39},
\end{align}
where $s$ is the speed of the shock in $(\tilde{t},\tilde{r})$ coordinates.
\begin{proposition}
	\label{P8}
	For an expanding spherically symmetric general relativistic shock wave, the Lax characteristic conditions are given as the inequalities
	\begin{align}
		\frac{\tilde{w}+\sqrt{\frac{d\bar{p}}{d\bar{\rho}}}}{1+\tilde{w}\sqrt{\frac{d\bar{p}}{d\bar{\rho}}}} < e^{\psi-\varphi}\dot{\Phi} < \sqrt{\frac{dp}{d\rho}},\label{2.40}
	\end{align}
	where
	\begin{align*}
		\tilde{w} = e^{\psi-\varphi}\frac{\partial\hat{r}}{\partial\bar{t}}\left(\frac{\partial\hat{t}}{\partial\bar{t}}\right)^{-1}.
	\end{align*}
\end{proposition}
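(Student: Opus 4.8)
The plan is to reduce the claimed inequalities to the Lax characteristic conditions (\ref{2.39}), which are already stated in the interior-comoving locally Minkowskian coordinates $(\tilde t,\tilde r)$ constructed in the proof of Lemma \ref{L1}; the entire content of the proof is then the re-expression of the three speeds appearing in (\ref{2.39}) in terms of the quantities named in the statement. The right-hand inequality is nearly immediate. The coordinates $(\tilde t,\tilde r)$ are, by construction, a locally Minkowskian frame fixed relative to the interior fluid at the shock point $\boldsymbol p$, so by Lemma \ref{L1} the shock speed in these coordinates is $s=e^{\psi-\varphi}\dot\Phi$, while $\tilde\lambda^+_{Int}=\sqrt{dp/d\rho}$ was recorded above. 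Substituting these into $s<\tilde\lambda^+_{Int}$ gives $e^{\psi-\varphi}\dot\Phi<\sqrt{dp/d\rho}$, the right half of (\ref{2.40}).

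For the left-hand inequality I would compute $\tilde\lambda^+_{Ext}$, the speed in $(\tilde t,\tilde r)$ coordinates of the exterior $2$-characteristic, in two steps. First, the exterior characteristic speeds equal $\pm\sqrt{d\bar p/d\bar\rho}$ only in a locally Minkowskian frame comoving with the \emph{exterior} fluid; to transport this value to the interior-comoving frame $(\tilde t,\tilde r)$ I would invoke the special relativistic velocity addition law, which is applicable because the two locally Minkowskian frames at $\boldsymbol p$ are related by a radial Lorentz boost, up to higher order terms that do not affect radial velocities at $\boldsymbol p$. Second, the boost velocity is the speed $\tilde w$ of the exterior fluid relative to the interior fluid: a fixed exterior fluid particle has $\bar r$ constant, so its coordinate velocity in the interior-comoving coordinates $(\hat t,\hat r)$ is $\dot{\hat r}=(\partial\hat r/\partial\bar t)(\partial\hat t/\partial\bar t)^{-1}$, and then by the identical reasoning that produced (\ref{2.38}) its geometric speed relative to observers fixed with the interior fluid is $e^{\psi-\varphi}\dot{\hat r}$, which is exactly $\tilde w$. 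Combining the two steps yields $\tilde\lambda^+_{Ext}=(\tilde w+\sqrt{d\bar p/d\bar\rho})/(1+\tilde w\sqrt{d\bar p/d\bar\rho})$, and inserting this together with $s=e^{\psi-\varphi}\dot\Phi$ into $\tilde\lambda^+_{Ext}<s$ from (\ref{2.39}) gives the left half of (\ref{2.40}).

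The hard part will be the careful justification that the velocity-addition law may legitimately be applied here, that is, that the interior-comoving and exterior-comoving locally Minkowskian frames at $\boldsymbol p$ are genuinely related by a radial Lorentz boost to the order relevant for computing radial speeds, and that the boost parameter coincides with $\tilde w=e^{\psi-\varphi}(\partial\hat r/\partial\bar t)(\partial\hat t/\partial\bar t)^{-1}$ consistently with the matching across $\Sigma$ and with the conventions fixed in Lemma \ref{L1}. One must also track orientations so that it is precisely the $+$ (that is, $2$-) exterior characteristic that enters, in accordance with the identification of the shock as an outgoing $2$-shock made in the text preceding the statement. Once these frame identifications are pinned down, the remainder is the direct substitution into (\ref{2.39}) indicated above.
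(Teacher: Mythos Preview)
Your proposal is correct and follows essentially the same route as the paper: identify $s=e^{\psi-\varphi}\dot\Phi$ via Lemma~\ref{L1}, take $\tilde\lambda^+_{Int}=\sqrt{dp/d\rho}$ directly, and obtain $\tilde\lambda^+_{Ext}$ by first computing the exterior fluid's speed $\tilde w$ in the interior locally Minkowskian frame (the paper does this via the four-velocity transformation $\hat v^\mu=(\partial\hat x^\mu/\partial\bar x^\nu)\bar v^\nu$, which amounts to your ``$\bar r$ constant'' observation) and then applying the relativistic velocity addition formula. The paper is in fact less fussy than your final paragraph suggests---it simply invokes the addition law without the detailed frame-compatibility justification you flag as the ``hard part''---so your outline already covers everything the paper's proof contains.
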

\begin{proof}
	The proof is given in Appendix \ref{A.1}.
\end{proof}

\section{FLRW--TOV Shock Waves}\label{Section3}

\subsection{Analytical Results}\label{Subsection3.1}

The objective of this section is the construction of the family of FLRW$(\sigma,a)$--TOV$(\bar{\sigma})$ shock waves. As remarked previously, the triple $(A,G,v)$ is used to denote a solution of the spherically symmetric self-similar Einstein--Euler equations (\ref{2.5})--(\ref{2.7}).
\begin{definition}
	\label{D9}
	A shock-wave solution with an FLRW$(\sigma,a)$ spacetime on the interior and a TOV$(\bar{\sigma})$ spacetime on the exterior is referred to as an FLRW$(\sigma,a)$--TOV$(\bar{\sigma})$ shock wave.
\end{definition}
Since all FLRW$(\sigma,a)$--TOV$(\bar{\sigma})$ shock waves share a TOV$(\bar{\sigma})$ exterior, the following lemma is of great utility in their construction.
\begin{lemma}
	\label{L2}
	Let $(A,G,v)$ denote a similarity solution to the Einstein--Euler equations with equation of state $p=\sigma\rho$. If there exists a $\xi_0>0$ such that
	\begin{align}
		A(\xi_0) = 1 - 2M(\bar{\sigma}),\label{3.1}
	\end{align}
	then $(A,G,v)$ can be matched to TOV$(\bar{\sigma})$ on the surface $\xi=\xi_0$ and the Rankine--Hugoniot jump condition is given by
	\begin{align}
		\frac{[\sigma+v^2(\xi_0)]G(\xi_0)-(1+\sigma)G^2(\xi_0)v(\xi_0)}{[1+\sigma v^2(\xi_0)]G(\xi_0)-(1+\sigma)v(\xi_0)} = \bar{\sigma}.\label{3.2}
	\end{align}
\end{lemma}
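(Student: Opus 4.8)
The plan is to establish the matching first, and then identify the Rankine-Hugoniot condition with (\ref{3.2}) through the single scalar condition furnished by Theorem \ref{T2}. For the matching, note that the surface $\xi=\xi_0$ of $(A,G,v)$ is the cone $r=\xi_0 t$, and look for a match to TOV$(\bar\sigma)$ of Proposition \ref{P2} in the self-similar form $\bar r=r$, $\bar t=ct$ for a constant $c$, so that $\bar\xi:=\bar r/\bar t=\xi_0/c=:\bar\xi_0$ is constant on the shock. With this ansatz the two matching equations of Subsection \ref{Subsection2.5} become $A(\xi_0)=\bar A=1-2M(\bar\sigma)$ and $B(\xi_0)=c^2\bar B(\bar\xi_0)$: the first is precisely the hypothesis (\ref{3.1}), and the second can always be solved for $c$ (equivalently, absorbed into the inessential parameter $\alpha$ of TOV$(\bar\sigma)$), so the metrics match on $\xi=\xi_0$ in the sense of Definition \ref{D6}. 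Two consequences of this step will be needed below. First, rewriting (\ref{2.9}) as $B=\xi^2/(AG^2)$, the $B$-equation together with $A(\xi_0)=\bar A$ forces $\bar G(\bar\xi_0)=G(\xi_0)=:G_0$. Second, the normal covector to the shock, which is $dr-\xi_0\,dt$ in the interior self-similar Schwarzschild coordinates, equals $d\bar r-\bar\xi_0\,d\bar t$ in the exterior ones, so it has the standard form on each side; in particular $g^{\mu\nu}n_\mu n_\nu=A_0(1-G_0^2)$, which I will assume to be nonzero, as it is for the shock positions of interest.

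Since the matched metric is continuous across $\Sigma=\{\xi=\xi_0\}$ and the normal is non-null, Theorem \ref{T2} applies, and the Rankine-Hugoniot conditions $[G^{\mu\nu}]n_\nu=0$ are equivalent to the single scalar condition $[G^{\mu\nu}]n_\mu n_\nu=0$, hence by (\ref{2.1}) to $[T^{\mu\nu}]n_\mu n_\nu=0$. I would compute $T^{\mu\nu}n_\mu n_\nu$ directly from (\ref{2.2}), using Definition \ref{D2}, the normalisation $g(\boldsymbol u,\boldsymbol u)=-1$ (which gives $(u^0)^2=1/(B(1-v^2))$), and (\ref{2.9}); this yields on the interior
\begin{align*}
	T^{\mu\nu}n_\mu n_\nu=A_0\rho\left[(1+\sigma)\frac{(v-G_0)^2}{1-v^2}+\sigma(1-G_0^2)\right]
\end{align*}
at $\xi_0$, and on the exterior, where $v\equiv 0$, $\bar T^{\mu\nu}\bar n_\mu\bar n_\nu=A_0\bar\rho\,(G_0^2+\bar\sigma)$ at $\bar\xi_0$, using $\bar A=A_0$ and $\bar G(\bar\xi_0)=G_0$. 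The two densities are not free: the Einstein constraint (\ref{2.8}) gives $\bar\rho=(1-A_0)/(\kappa r^2)$ on the TOV side and $\rho=\dfrac{(1-v^2)(1-A_0)G_0}{\kappa r^2\,[(G_0-v)-\sigma v(1-G_0v)]}$ on the interior, both evaluated on the shock where $\bar r=r$, so that the ratio $\rho/\bar\rho=(1-v^2)G_0/[(G_0-v)-\sigma v(1-G_0v)]$ is fixed by interior data alone.

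Substituting this ratio into $T^{\mu\nu}n_\mu n_\nu=\bar T^{\mu\nu}\bar n_\mu\bar n_\nu$ and clearing denominators reduces the jump condition to
\begin{align*}
	\bar\sigma=\frac{G_0\left[(1+\sigma)(v-G_0)^2+\sigma(1-G_0^2)(1-v^2)\right]}{(G_0-v)-\sigma v(1-G_0v)}-G_0^2,
\end{align*}
and it remains to verify the two polynomial identities that the denominator equals $(1+\sigma v^2)G_0-(1+\sigma)v$ and that the numerator, once the two terms are combined over it, equals $(\sigma+v^2)G_0-(1+\sigma)G_0^2v$; these are routine expansions and give exactly (\ref{3.2}). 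The substantive point of the proof — and the step I expect to be the crux — is the realisation that neither $\bar G$ nor the density ratio is a free unknown: the $B$-component of the metric match pins $\bar G(\bar\xi_0)$ to $G(\xi_0)$, while (\ref{2.8}) applied on each side pins $\rho/\bar\rho$ to $v(\xi_0)$ and $G(\xi_0)$. Once this is seen, the jump condition collapses to a relation purely among $v(\xi_0)$, $G(\xi_0)$ and $\bar\sigma$, and the rest is algebra; a minor technical point is keeping the normalisation of the normal covector consistent between the two coordinate systems, which is why the self-similar matching $\bar t=ct$ is the convenient choice.
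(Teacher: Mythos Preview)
Your proof is correct and follows essentially the same approach as the paper's: match the metrics via $\bar r=r$ and a temporal rescaling, reduce the Rankine--Hugoniot conditions to the single scalar $[T^{\mu\nu}]n_\mu n_\nu=0$ via Theorem~\ref{T2}, compute this contraction on each side, and eliminate the densities using the constraint (\ref{2.8}) to obtain (\ref{3.2}). Your presentation is slightly more structural in that you isolate the two observations driving the collapse to a relation in $G(\xi_0),v(\xi_0),\bar\sigma$ --- namely that the $B$-matching forces $\bar G(\bar\xi_0)=G(\xi_0)$ and that (\ref{2.8}) fixes $\rho/\bar\rho$ --- whereas the paper carries the intermediate quantities $B(\xi_0)$ and the explicit four-velocity components through before substituting, but the underlying computation is the same.
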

\begin{proof}
	The metric of the $(A,G,v)$ solution in self-similar Schwarzschild coordinates is given, without loss of generality, by
	\begin{align*}
		ds^2 = -B(\xi)dt^2 + \frac{1}{A(\xi)}dr^2 + r^2d\Omega^2,
	\end{align*}
	and recall by Proposition \ref{P2} that TOV$(\bar{\sigma})$ is given in self-similar comoving Schwarzschild coordinates as:
	\begin{align*}
		d\bar{s}^2 &= -\bar{\xi}^{\frac{4\bar{\sigma}}{1+\bar{\sigma}}}d\bar{t}^2 + \frac{1}{1-2M(\bar{\sigma})}d\bar{r}^2 + \bar{r}^2d\Omega^2,\\
		\bar{\rho} &= \frac{2M(\bar{\sigma})}{\kappa\bar{r}^2},\\
		\bar{p} &= \bar{\sigma}\bar{\rho},
	\end{align*}
	where the inessential parameter has been set to one and
	\begin{align*}
		M(\bar{\sigma}) = \frac{2\bar{\sigma}}{1+6\bar{\sigma}+\bar{\sigma}^2}.
	\end{align*}
	Because both metrics are specified in Schwarzschild coordinates, the $d\Omega^2$ components automatically match under the identification $\bar{r}=r$. Matching the $dr^2$ components implies that the shock surface is defined by $\xi=\xi_0$, with the constant $\xi_0$ given implicitly by (\ref{3.1}). This also implies that $B$ is constant on the surface. A temporal rescaling of the form
	\begin{align*}
		\bar{t} = \alpha t
	\end{align*}
	implies
	\begin{align*}
		\alpha\bar{\xi} = \xi = \xi_0 = \alpha\bar{\xi}_0,
	\end{align*}
	and matches the $dt^2$ coefficients providing $\alpha$ satisfies
	\begin{align}
		\alpha^2\bar{\xi}_0^{\frac{4\bar{\sigma}}{1+\bar{\sigma}}} = B(\xi_0).\label{3.3}
	\end{align}
	With the matching in place, recall from Subsection \ref{Subsection2.5} that the Rankine--Hugoniot jump condition is equivalent to
	\begin{align*}
		[T^{\mu\nu}]n_\mu n_\nu = T^{\mu\nu}(g,\rho,p,\vec{u}_{FLRW})n_\mu n_\nu - T^{\mu\nu}(g,\bar{\rho},\bar{p},\vec{u}_{TOV})n_\mu n_\nu = 0,
	\end{align*}
	where $\vec{n}$ is the outward normal to the shock surface. Using this and $p=\sigma\rho$, we obtain
	\begin{align*}
		(1+\sigma)\rho u_{FLRW}^\mu u_{FLRW}^\nu n_\mu n_\nu + \sigma\rho|\vec{n}|^2 - (1+\bar{\sigma})\bar{\rho}u_{TOV}^\mu u_{TOV}^\nu n_\mu n_\nu - \bar{\sigma}\bar{\rho}|\vec{n}|^2 = 0.
	\end{align*}
	Now since the surface is defined by $\xi=\xi_0$, which is equivalent to
	\begin{align*}
		r - \xi_0t = 0,
	\end{align*}
	then the components of the normal satisfy
	\begin{align*}
		n_\mu dx^\mu = d(r-\xi_0t) = -\xi_0dt + dr.
	\end{align*}
	Given that the metric components are identified on the surface, the following identities are obtained:
	\begin{align*}
		|\vec{n}|^2 &= A(\xi_0)-\xi_0^2B^{-1}(\xi_0),\\
		u_{FLRW}^0 &= [1-v^2(\xi_0)]^{-\frac{1}{2}}B^{-\frac{1}{2}}(\xi_0),\\
		u_{FLRW}^1 &= v(\xi_0)[1-v^2(\xi_0)]^{-\frac{1}{2}}A^{\frac{1}{2}}(\xi_0),\\
		u_{FLRW}^\mu u_{FLRW}^\nu n_\mu n_\nu &= [1-v^2(\xi_0)]^{-1}\left[v(\xi_0)A^{\frac{1}{2}}(\xi_0)-\xi_0B^{-\frac{1}{2}}(\xi_0)\right]^2,\\
		u_{TOV}^\mu u_{TOV}^\nu n_\mu n_\nu &= \xi_0^2B^{-1}(\xi_0).
	\end{align*}
	Applying these identities puts the Rankine--Hugoniot jump condition in the form
	\begin{align*}
		0 &= (1+\sigma)[1-v^2(\xi_0)]^{-1}\left[v(\xi_0)A^{\frac{1}{2}}(\xi_0)-\xi_0B^{-\frac{1}{2}}(\xi_0)\right]^2\rho\\
		&- (1+\bar{\sigma})\xi_0^2B^{-1}(\xi_0)\bar{\rho}\\
		&+ \left[A(\xi_0)-\xi_0^2B^{-1}(\xi_0)\right](\sigma\rho-\bar{\sigma}\bar{\rho}).
	\end{align*}
	Dividing by $A(\xi_0)$ and substituting $B(\xi_0)$ for $G(\xi_0)$ then yields
	\begin{align*}
		0 &= (1+\sigma)[1-v^2(\xi_0)]^{-1}[v(\xi_0)-G(\xi_0)]^2\rho\\
		&- (1+\bar{\sigma})G^2(\xi_0)\bar{\rho}\\
		&+ [1-G^2(\xi_0)](\sigma\rho-\bar{\sigma}\bar{\rho}).
	\end{align*}
	Finally, applying (\ref{2.8}) and (\ref{3.1}) gives (\ref{3.2}), which completes the proof.
\end{proof}
As FLRW$(\sigma,1)$ is known explicitly, it is possible to construct an explicit FLRW$(\sigma,1)$--TOV$(\bar{\sigma})$ shock wave. Such a construction was first achieved by Cahill and Taub in \cite{CT1971} for the case $\sigma=\frac{1}{3}$, and then later in full generality by Smoller and Temple in \cite{ST1995}. Alternatively, the result can be derived directly from Lemma \ref{L2}.
\begin{theorem}
	\label{T3}
	For each $0<\sigma<1$, FLRW$(\sigma,1)$ can be matched to TOV$(\bar{\sigma})$ to form a general relativistic shock wave providing
	\begin{align}
		\bar{\sigma} = H(\sigma),\label{3.4}
	\end{align}
	where
	\begin{align*}
		H(\sigma) = \frac{1}{2}\sqrt{9\sigma^2+54\sigma+49} - \frac{3}{2}\sigma - \frac{7}{2}.
	\end{align*}
\end{theorem}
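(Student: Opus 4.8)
The plan is to invoke Lemma \ref{L2} applied to the explicit solution FLRW$(0,\sigma,1)$, for which Proposition \ref{P4} gives the algebraic description $A = 1 - v^2$ and $G = \tfrac12(3+3\sigma)v\bigl(1+\tfrac12(1+3\sigma)v^2\bigr)^{-1}$, with $v$ increasing monotonically from $0$ as $\xi$ increases and sweeping out the whole subluminal range $(0,1)$ on the coordinate patch where \eqref{2.3} is valid. Thus the entire computation can be carried out treating $v_0 := v(\xi_0)$ as the only unknown along the trajectory.

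First I would dispose of the matching condition \eqref{3.1}. Since $A = 1 - v^2$, \eqref{3.1} reads $v_0^2 = 2M(\bar\sigma) = 4\bar\sigma/(1+6\bar\sigma+\bar\sigma^2)$, and the elementary inequality $2M(\bar\sigma) < 1$ (equivalent to $(1+\bar\sigma)^2 > 0$) shows $v_0 = \sqrt{2M(\bar\sigma)} \in (0,1)$ is attained at some $\xi_0 > 0$. By Lemma \ref{L2} this already secures the matching of all three metric coefficients along $\xi = \xi_0$, so it remains only to impose the Rankine--Hugoniot condition \eqref{3.2}.

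Second, substitute into \eqref{3.2}. Writing $G_0 = H_0 v_0$ with $H_0 = 3(1+\sigma)\bigl(2+(1+3\sigma)v_0^2\bigr)^{-1}$, one common factor $v_0$ cancels between numerator and denominator of \eqref{3.2}, leaving a rational identity in $\sigma$ and $v_0^2$ only; I would simplify this completely \emph{before} inserting $v_0^2 = 2M(\bar\sigma)$. Clearing the nested denominators then produces a polynomial relation $P(\sigma,\bar\sigma) = 0$, which I expect to be quadratic in $\bar\sigma$. Applying the quadratic formula, the discriminant should collapse to exactly $9\sigma^2 + 54\sigma + 49$; choosing the root in $(0,1)$ — the one with the $+$ sign, since the other is negative — yields $\bar\sigma = H(\sigma)$ as in \eqref{3.4}. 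I would close with the sanity checks that $0 < H(\sigma) < 1$ for $0 < \sigma < 1$ and that $H(\tfrac13) = \sqrt{17} - 4$ recovers the Cahill--Taub value.

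The main obstacle is the algebraic reduction in the middle step: once $v_0^2$ is replaced by $4\bar\sigma/(1+6\bar\sigma+\bar\sigma^2)$, the expression in \eqref{3.2} is genuinely unwieldy, and the real content of the proof is verifying that after clearing the triple-layered denominators everything degree-reduces to a single honest quadratic in $\bar\sigma$ with precisely the stated discriminant, with no spurious higher-degree factors or extraneous roots. Organising the computation so that the $v_0$-cancellation and the simplification of \eqref{3.2} in $v_0^2$ are done first, and the substitution of $2M(\bar\sigma)$ only at the very end, should keep this tractable.
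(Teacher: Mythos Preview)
Your proposal is correct and follows essentially the same route as the paper: invoke Lemma~\ref{L2}, use Proposition~\ref{P4} to express everything along the FLRW$(0,\sigma,1)$ trajectory in terms of a single variable, and reduce \eqref{3.2} together with $A(\xi_0)=1-2M(\bar\sigma)$ to an algebraic relation between $\sigma$ and $\bar\sigma$. The paper organises the algebra slightly differently---eliminating $G$, then $v$, then $A$ in turn to reach the compact inverse relation $\sigma = \bar\sigma(7+\bar\sigma)/\bigl(3(1-\bar\sigma)\bigr)$, which rearranges to exactly the quadratic $\bar\sigma^2 + (7+3\sigma)\bar\sigma - 3\sigma = 0$ you anticipate---so your worry about spurious higher-degree factors is unfounded and the reduction is cleaner than you expect.
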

\begin{proof}
	The matching follows similarly to the matching completed in the proof of Lemma \ref{L2}, but with (\ref{3.1}) and (\ref{3.3}) replaced with
	\begin{align*}
		1-\frac{2}{3}\hat{\xi}_0^{\frac{2+6\sigma}{3+3\sigma}} = 1 - 2M(\bar{\sigma})
	\end{align*}
	and
	\begin{align*}
		\alpha^2\bar{\xi}_0^{\frac{4\bar{\sigma}}{1+\bar{\sigma}}} = \frac{16}{(3+3\sigma)^2}\left[1+\frac{1}{3}(1+3\sigma)\hat{\xi}_0^{\frac{2+6\sigma}{3+3\sigma}}\right]^{-\frac{2-6\sigma}{2+6\sigma}}\left[1-\frac{2}{3}\hat{\xi}_0^{\frac{2+6\sigma}{3+3\sigma}}\right]^{-1}
	\end{align*}
	respectively, where the inessential parameter is set by (\ref{2.33}) and $\hat{\xi}$ is defined implicitly by
	\begin{align*}
		\xi = \frac{4}{\sqrt{6}}\hat{\xi}^{\frac{1+3\sigma}{3+3\sigma}}\left[1+\frac{1}{3}(1+3\sigma)\hat{\xi}^{\frac{2+6\sigma}{3+3\sigma}}\right]^{-\frac{3+3\sigma}{2+6\sigma}}.
	\end{align*}
	Note that this matching is Lipschitz continuous, as any $0<\sigma<1$ and $0<\bar{\sigma}<1$ imply that the components of the interior and exterior metrics are continuous in a neighbourhood of the surface when given in $(t,r)$ coordinates. Thus it remains to show that the condition $\bar{\sigma}=H(\sigma)$ is equivalent to the Rankine--Hugoniot jump condition, which we know by Lemma \ref{L2} is given by
	\begin{align*}
		\frac{[\sigma+v^2(\xi_0)]G(\xi_0)-(1+\sigma)G^2(\xi_0)v(\xi_0)}{[1+\sigma v^2(\xi_0)]G(\xi_0)-(1+\sigma)v(\xi_0)} = \bar{\sigma}.
	\end{align*}
	By Proposition \ref{P4}, $G(\xi_0)$ can be substituted for $v(\xi_0)$, which in turn can be substituted for $A(\xi_0)$ to yield
	\begin{align*}
		\frac{(3\sigma+3[1-A(\xi_0)])(2+(1+3\sigma)[1-A(\xi_0)])-(3+3\sigma)^2[1-A(\xi_0)]}{A(\xi_0)(2+(1+3\sigma)[1-A(\xi_0)])} = \bar{\sigma}.
	\end{align*}
	Finally, substituting $A(\xi_0)$ for $1-2M(\bar{\sigma})$ yields
	\begin{align*}
		\sigma = \frac{\bar{\sigma}(7+\bar{\sigma})}{3(1-\bar{\sigma})},
	\end{align*}
	which is equivalent to $\bar{\sigma}=H(\sigma)$.
\end{proof}
\begin{definition}
	\label{D10}
	The \emph{Rankine--Hugoniot curve}, denoted by
	\begin{align*}
		v = \Gamma_{RH}(G;\sigma,\bar{\sigma}),
	\end{align*}
	is the curve in $(A,G,v)$ space that is generated by constraints (\ref{3.1}) and (\ref{3.2}).
\end{definition}

\subsection{Numerical Results}\label{Subsection3.2}

We are now in a position to extend the family of FLRW$(\sigma,1)$--TOV$(\bar{\sigma})$ shock waves to the family of FLRW$(\sigma,a)$--TOV$(\bar{\sigma})$ shock waves. Even though the FLRW$(\sigma,a)$ spacetimes are exact solutions, they are not known explicitly. One way of describing FLRW$(\sigma,a)$ solutions is to numerically generate their trajectories in $(A,G,v)$ space, such as in Figure \ref{F1}. The statement of Lemma \ref{L5} in Subsection \ref{Subsection5.1}, which determines where FLRW$(\sigma,a)$ trajectories become singular, permits us to have confidence in these numerical approximations away from any singular surfaces. For the definition of a singular surface, see Definition \ref{D11} in Subsection \ref{Subsection4.2}.
\begin{figure}[h]
	\begin{center}
		\includegraphics[width=10cm]{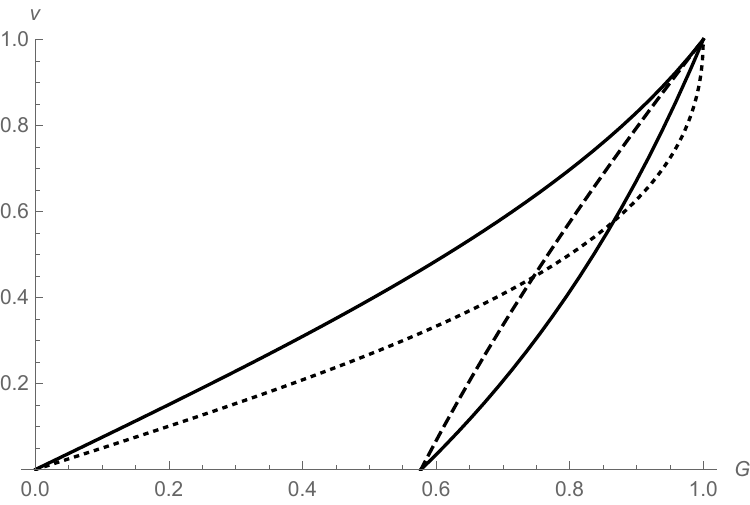}
	\end{center}
	\caption{This figure is a side view of $(A,G,v)$ space and depicts the most important features. The left and right unbroken curves represent the surfaces $\{\cdot\}_S=0$ and $\{\cdot\}_D=0$ respectively. These surfaces have no dependence on $A$ and so remain the same in any constant $A$ plane. The Rankine--Hugoniot curve is represented by the dashed curve and lives in the plane $A=1-2M(\bar{\sigma})$. The dotted curve represents the explicitly known FLRW$(\frac{1}{3},1)$ trajectory.}
	\label{F1}
\end{figure}

The FLRW$(\frac{1}{3},1)$ trajectory obeys the implicit relationship given by Corollary \ref{C2}, that is, as $\xi$ increases from zero, $G$ increases linearly with $\xi$, $v$ increases according to (\ref{2.21}) and $A$ decreases according to (\ref{2.20}). Generic FLRW$(\frac{1}{3},a)$ trajectories are similar to the FLRW$(\frac{1}{3},1)$ trajectory for small $\xi$, but differ as $\xi$ increases. One property that remains similar for larger $\xi$ is the near linear dependence of $G$ on $\xi$. Note that because equations (\ref{2.5})--(\ref{2.7}) are autonomous, all trajectories, the Rankine--Hugoniot curve and surfaces $\{\cdot\}_S=0$ and $\{\cdot\}_D=0$ are all independent of $\xi$. Thus it is often easier to think of $G$ as the independent variable and consider the trajectory, curves and surfaces as a function of $G$.

The TOV$(\bar{\sigma})$ trajectories are easy to represent in $(A,G,v)$ solution space, as they are simply the lines defined by $A=1-2M(\bar{\sigma})$ and $v=0$. Now because
\begin{align}
	\min_{0\leq\bar{\sigma}\leq1}\{1-2M(\bar{\sigma})\} = \frac{1}{2},\label{3.5}
\end{align}
the family of TOV$(\bar{\sigma})$ trajectories span the surface defined by $\frac{1}{2}<A<1$ and $v=0$.

Another reason considering solutions in $(A,G,v)$ space is so useful, is because of the immediate implication that any trajectory that crosses the $A=1-2M(\bar{\sigma})$ plane Lipschitz continuously represents a solution that can be matched to the TOV$(\bar{\sigma})$ solution. Furthermore, if the trajectory crosses the $A=1-2M(\bar{\sigma})$ plane and intersects the Rankin--Hugoniot curve, which lies in this plane, then the solution can be matched to the TOV$(\bar{\sigma})$ solution to form a general relativistic shock wave. In the case of FLRW$(\sigma,a)$ trajectories, changing the parameters $\sigma$ and $a$ changes the trajectory, so certain combinations of $\sigma$ and $a$ result in an intersection with the Rankine--Hugoniot curve, and thus the formation of an FLRW$(\sigma,a)$--TOV$(\bar{\sigma})$ shock wave. We already know from Theorem \ref{T3} that for $a=1$ the relationship between $\sigma$ and $\bar{\sigma}$ obeys $\bar{\sigma}=H(\sigma)$. For $a\neq1$, trajectories can be generated numerically and the parameters $a$, $\sigma$ and $\bar{\sigma}$ can be adjusted to achieve the intersection. Since the intersection imposes a single constraint on the parameters $a$, $\sigma$ and $\bar{\sigma}$, we conclude that the family of FLRW$(\sigma,a)$--TOV$(\bar{\sigma})$ shock waves is a one-parameter family for each $\sigma$. Fixing $\sigma=\frac{1}{3}$, the resulting family partially answers a claim given in \cite{CT1971} by determining a subset of the pure radiation similarity spacetimes that can be matched to TOV$(\bar{\sigma})$ to form a general relativistic shock wave.

A physically significant shock wave is one for which the equation of state each side of the shock models pure radiation, since these shock waves may have been present during the Radiation Dominated Epoch of the Early Universe. As demonstrated in Figure \ref{F2} for $\sigma=\bar{\sigma}=\frac{1}{3}$, the value of $a$ can be varied in order to achieve an intersection with the Rankine--Hugoniot curve.
\begin{figure}[h]
	\begin{center}
		\includegraphics[width=10cm]{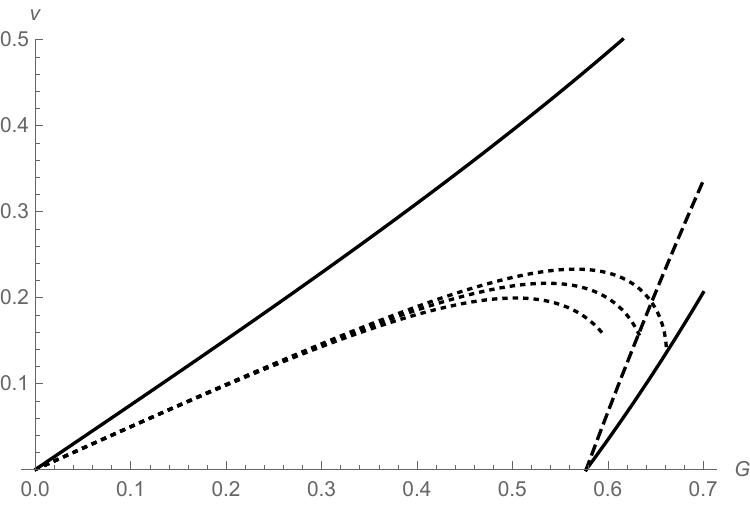}
	\end{center}
	\caption{This figure depicts the same features as Figure \ref{F1}, except the FLRW$(\frac{1}{3},1)$ trajectory is replaced by three FLRW$(\frac{1}{3},a)$ trajectories with varying values of $a$. Unlike in Figure \ref{F1}, the trajectories given in this figure terminate once they reach the $A=1-2M(\bar{\sigma})$ plane.}
	\label{F2}
\end{figure}
In Figure \ref{F2}, the leftmost trajectory overshoots the curve and rightmost trajectory undershoots it. The leftmost, centre and the rightmost trajectories are generated for $a=2.8$, $a=2.58$ and $a=2.4$ respectively. Therefore, the value of the acceleration parameter for the pure radiation shock wave is approximated by
\begin{align*}
	a \approx 2.58,
\end{align*}
with the corresponding point of intersection approximated by
\begin{align*}
	\xi_0 \approx 0.706.
\end{align*}
We recall from Subsection \ref{Subsection2.4} that the FLRW$(\sigma,a)$ spacetimes can exhibit an accelerated expansion that mimics the accelerated expansion found in the Standard Model of Cosmology when close to the radial centre. It is conjectured by Temple that the accelerated expansion observed today is not the result of dark energy, but instead the result of being within a vast primordial shock wave with an FLRW$(\sigma,a)$ interior. What is meant by vast is a shock wave with a shock surface that lies beyond the Hubble radius, that is, not presently observable. What makes this proposal particularly interesting is that the magnitude of acceleration, parameterised by $a$, is determined purely mathematically by the equation of state parameter each side of the shock, assuming a TOV$(\bar{\sigma})$ exterior. However, with $a\approx2.58$, the pure radiation shock wave exhibits an accelerated expansion many orders of magnitude larger than what is observed today. Furthermore, the pure radiation shock surface lies within the Hubble radius. Each of these properties rule out the FLRW$(\frac{1}{3},a)$--TOV$(\frac{1}{3})$ shock wave as a cosmological model, but does not rule out a shock-wave cosmological model consisting of an interior FLRW$(\sigma,a)$ spacetime matched to a non-TOV$(\bar{\sigma})$ exterior. For $a=1$, one such shock wave is constructed with the shock surface lying beyond the Hubble radius by Smoller and Temple in \cite{ST2003}, with the full details of this construction provided in \cite{ST2004}. It remains an open problem to construct shock waves with FLRW$(\sigma,a)$ interiors for which $a\neq1$ and the resulting shock surface lies beyond the Hubble radius.

\section{Shock Waves with TOV Exteriors}\label{Section4}

\subsection{The Lax Characteristic Conditions for TOV Exteriors}\label{Subsection4.1}

As remarked in Subsection \ref{Subsection2.7}, the Lax characteristic conditions provide a notion of stability for general relativistic shock waves and provide an entropy argument for determining whether the shock wave is expected to expand or contract. The Lax characteristic conditions considered in this section apply to expanding shock waves, meaning that shock waves that satisfy these conditions are expected to remain expanding shock waves.

\begin{lemma}
	\label{L3}
	Let $(A,G,v)$ denote a similarity solution to the Einstein--Euler equations with equation of state $p=\sigma\rho$. If there exists a $\xi_0>0$ such that $(A,G,v)$ can be matched to TOV$(\bar{\sigma})$ to form a shock-wave solution, then the Lax characteristic conditions are given by
	\begin{align}
		\frac{\sqrt{\bar{\sigma}}-v(\xi_0)}{1-\sqrt{\bar{\sigma}}v(\xi_0)} < \frac{G(\xi_0)-v(\xi_0)}{1-G(\xi_0)v(\xi_0)} < \sqrt{\sigma}.\label{4.1}
	\end{align}
\end{lemma}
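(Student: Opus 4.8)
The plan is to deduce Lemma \ref{L3} from Proposition \ref{P8} by specialising to a TOV$(\bar{\sigma})$ exterior and to the linear equations of state $p=\sigma\rho$, $\bar{p}=\bar{\sigma}\bar{\rho}$. For these equations of state the sound speeds are the constants $\sqrt{dp/d\rho}=\sqrt{\sigma}$ and $\sqrt{d\bar{p}/d\bar{\rho}}=\sqrt{\bar{\sigma}}$, so the right-hand inequality of (\ref{2.40}) already reads $e^{\psi-\varphi}\dot{\Phi}<\sqrt{\sigma}$, and the whole task reduces to rewriting the shock speed $e^{\psi-\varphi}\dot{\Phi}$ and the relative exterior-fluid speed $\tilde{w}$ in terms of the Schwarzschild-coordinate data $G(\xi_0)$ and $v(\xi_0)$. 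Since the standing assumption of this section is that the shock is outgoing, it is a $2$-shock and Proposition \ref{P8} applies directly; one also has $|v(\xi_0)|<1$ and, because the shock-wave solution has a spacelike normal, $|G(\xi_0)|<1$, so the relativistic velocity addition law is available throughout.

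First I would pass to the locally Minkowskian frame at a point $\boldsymbol{p}\in\Sigma$ that is at rest relative to the interior Schwarzschild radial coordinate, constructed exactly as in the proof of Lemma \ref{L1} but starting from the metric (\ref{2.3}) rather than the comoving metric: at $\boldsymbol{p}$ one sets $d\tilde{t}=\sqrt{B(\xi_0)}\,dt$, $d\tilde{r}=A(\xi_0)^{-1/2}\,dr$, which brings (\ref{2.3}) into the form (\ref{2.37}) at $\boldsymbol{p}$. In this frame three speeds can be read off at once: the shock $r=\xi_0 t$ moves at $d\tilde{r}/d\tilde{t}=\xi_0/\sqrt{A(\xi_0)B(\xi_0)}=G(\xi_0)$ by (\ref{2.9}); the interior fluid four-velocity has $\tilde{u}^1/\tilde{u}^0=u^1/(\sqrt{A(\xi_0)B(\xi_0)}\,u^0)=v(\xi_0)$ by Definition \ref{D2}; and the exterior TOV fluid moves at $0$, because the matching in Lemma \ref{L2} is effected by $\bar{r}=r$ together with a pure time rescaling $\bar{t}=\alpha t$, so the static TOV worldlines $\bar{r}=\mathrm{const}$ (Proposition \ref{P1}) coincide with $r=\mathrm{const}$ near $\Sigma$. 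Relativistic velocity subtraction applied to the first two gives the shock speed relative to the interior fluid as $(G(\xi_0)-v(\xi_0))/(1-G(\xi_0)v(\xi_0))$, which by Lemma \ref{L1} is exactly $e^{\psi-\varphi}\dot{\Phi}$; applied to the last two it gives the exterior fluid speed in an interior-comoving locally Minkowskian frame as $(0-v(\xi_0))/(1-0\cdot v(\xi_0))=-v(\xi_0)$, which is precisely the quantity $\tilde{w}$ of Proposition \ref{P8}.

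Substituting $\tilde{w}=-v(\xi_0)$, $e^{\psi-\varphi}\dot{\Phi}=(G(\xi_0)-v(\xi_0))/(1-G(\xi_0)v(\xi_0))$, $\sqrt{d\bar{p}/d\bar{\rho}}=\sqrt{\bar{\sigma}}$ and $\sqrt{dp/d\rho}=\sqrt{\sigma}$ into (\ref{2.40}) collapses its left member to $(\sqrt{\bar{\sigma}}-v(\xi_0))/(1-\sqrt{\bar{\sigma}}v(\xi_0))$ and yields exactly (\ref{4.1}). As an internal check, one can argue purely geometrically in the Schwarzschild-static frame above, where the forward exterior sound characteristic, the shock, and the forward interior sound characteristic move at $\sqrt{\bar{\sigma}}$, $G(\xi_0)$ and $(v(\xi_0)+\sqrt{\sigma})/(1+v(\xi_0)\sqrt{\sigma})$ respectively, so the $2$-shock Lax ordering $\tilde{\lambda}^+_{Ext}<s<\tilde{\lambda}^+_{Int}$ reads $\sqrt{\bar{\sigma}}<G(\xi_0)<(v(\xi_0)+\sqrt{\sigma})/(1+v(\xi_0)\sqrt{\sigma})$; since $w\mapsto (w-v(\xi_0))/(1-v(\xi_0)w)$ is strictly increasing for $|v(\xi_0)|<1$, boosting all three speeds by $-v(\xi_0)$ into the interior-comoving frame preserves the ordering and reproduces (\ref{4.1}).

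The main obstacle is making the statement about the exterior fluid rigorous: one must verify that the TOV fluid genuinely has vanishing coordinate velocity in the interior Schwarzschild frame at the shock, which rests on the precise form of the matching established in the proof of Lemma \ref{L2} (namely $\bar{r}=r$ and $\bar{t}=\alpha t$, so no radial mixing occurs), and one must track sign conventions carefully and remember that, because the shock is outgoing, it is the forward ($2$-shock) characteristics that enter — hence Proposition \ref{P8}, and not its ingoing counterpart, is the one to specialise. Once these points are pinned down, the remainder is the elementary algebra of the relativistic velocity addition law, which also shows that the whole argument never uses the explicit interior metric functions $\varphi,\psi$, so that it holds for an arbitrary similarity-solution interior exactly as stated.
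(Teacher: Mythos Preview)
Your argument is correct and reaches the same two identifications as the paper, namely $e^{\psi-\varphi}\dot{\Phi}=(G(\xi_0)-v(\xi_0))/(1-G(\xi_0)v(\xi_0))$ and $\tilde{w}=-v(\xi_0)$, but by a genuinely different and more elementary route. The paper explicitly constructs the self-similar comoving coordinates $(\hat{t},\hat{r})=(\mathcal{T}(\xi)t,\mathcal{R}(\xi)r)$ by imposing the diagonal-metric and comoving conditions, derives the ODE $\xi\mathcal{R}'=v\mathcal{R}/(G-v)$ and $\xi\mathcal{T}'=-Gv\mathcal{T}/(1-Gv)$, computes $e^{2\varphi}$ and $e^{2\psi}$ in terms of these, and then evaluates $e^{\psi-\varphi}\hat{\xi}_0$ and $\tilde{w}$ directly from the transformation derivatives. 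You instead work in the locally Minkowskian frame aligned with the Schwarzschild $(t,r)$ coordinates, read off the three speeds $G(\xi_0)$, $v(\xi_0)$, $0$ there, and boost by $-v(\xi_0)$ into the interior-fluid frame via the relativistic addition law. Your approach bypasses the comoving-coordinate construction entirely and makes the origin of each term in (\ref{4.1}) transparent; the paper's approach, while longer, has the virtue of staying literally within the comoving-coordinate framework in which Lemma \ref{L1} and Proposition \ref{P8} are stated, so no appeal to the invariance of relative velocities under Lorentz boosts is needed.
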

\begin{proof}
	As a reverse to the coordinate transformation introduced in the proof of Proposition \ref{P3}, we begin by transforming a general solution given in self-similar Schwarzchild coordinates, to a solution given in self-similar comoving coordinates. Noting that $B$ and $\vec{u}$ are given implicitly by the triple $(A,G,v)$, we can write this solution in self-similar Schwarzschild coordinates as so:
	\begin{align*}
		ds^2 &= -B(\xi)dt^2 + \frac{1}{A(\xi)}dr^2 + r^2d\Omega^2,\\
		\vec{u} &= (u^0,u^1,0,0),
	\end{align*}
	where $p$ and $\rho$ are determined by (\ref{2.4}) and (\ref{2.8}) respectively. In self-similar comoving coordinates, the solution can be written as:
	\begin{align*}
		d\hat{s}^2 &= -e^{2\varphi}d\hat{t}^2 + e^{2\psi}d\hat{r}^2 + \mathscr{R}^2\hat{r}^2d\Omega^2,\\
		\vec{u} &= (\hat{u}^0,0,0,0).
	\end{align*}
	To eliminate the radial component of the four-velocity, the transformation from Schwarzschild to comoving coordinates must satisfy
	\begin{align*}
		\hat{u}^1 = u^0\frac{\partial\hat{r}}{\partial t} + u^1\frac{\partial\hat{r}}{\partial r} = 0,
	\end{align*}
	which is equivalent to
	\begin{align}
		\frac{\partial\hat{r}}{\partial t} = -\frac{\xi v}{G}\frac{\partial\hat{r}}{\partial r}.\label{4.2}
	\end{align}
	Now given that:
	\begin{align*}
		d\hat{t} &= \frac{\partial\hat{t}}{\partial t}dt + \frac{\partial\hat{t}}{\partial r}dr,\\
		d\hat{r} &= \frac{\partial\hat{r}}{\partial t}dt + \frac{\partial\hat{r}}{\partial r}dr,
	\end{align*}
	then:
	\begin{align*}
		dt &= \left(\frac{\partial\hat{t}}{\partial t}\frac{\partial\hat{r}}{\partial r}-\frac{\partial\hat{t}}{\partial r}\frac{\partial\hat{r}}{\partial t}\right)^{-1}\left(\frac{\partial\hat{r}}{\partial r}d\hat{t}-\frac{\partial\hat{t}}{\partial r}d\hat{r}\right),\\
		dr &= \left(\frac{\partial\hat{t}}{\partial t}\frac{\partial\hat{r}}{\partial r}-\frac{\partial\hat{t}}{\partial r}\frac{\partial\hat{r}}{\partial t}\right)^{-1}\left(-\frac{\partial\hat{r}}{\partial t}d\hat{t}+\frac{\partial\hat{t}}{\partial t}d\hat{r}\right).
	\end{align*}
	Thus to keep the metric diagonal, the condition
	\begin{align*}
		B\frac{\partial\hat{r}}{\partial r}\frac{\partial\hat{t}}{\partial r} - \frac{1}{A}\frac{\partial\hat{r}}{\partial t}\frac{\partial\hat{t}}{\partial t} = 0
	\end{align*}
	 is also needed, which by (\ref{4.2}) is equivalent to
	\begin{align}
		\frac{\partial\hat{t}}{\partial r} = -\frac{Gv}{\xi}\frac{\partial\hat{t}}{\partial t}.\label{4.3}
	\end{align}
	The most general transformation that preserves self-similarity takes the form:
	\begin{align*}
		\hat{t} &= \mathcal{T}(\xi)t,\\
		\hat{r} &= \mathcal{R}(\xi)r,
	\end{align*}
	and conditions (\ref{4.2}) and (\ref{4.3}) determine the functions $\mathcal{T}(\xi)$ and $\mathcal{R}(\xi)$. In self-similar Schwarzschild coordinates the shock speed is given by $\xi=\xi_0$, so in self-similar comoving coordinates the shock speed is given by $\hat{\xi}=\hat{\xi}_0$, where
	\begin{align*}
		\hat{\xi} = \frac{\hat{r}}{\hat{t}} = \frac{\mathcal{R}(\xi)r}{\mathcal{T}(\xi)t} = \frac{\mathcal{R}(\xi)}{\mathcal{T}(\xi)}\xi.
	\end{align*}
	Thus by Lemma \ref{L1} the shock speed is given in interior locally Minkowskian coordinates by
	\begin{align*}
		e^{\psi-\varphi}\hat{\xi}_0.
	\end{align*}
	By Proposition \ref{P8}, it remains to determine $e^{\psi-\varphi}$ and $\tilde{w}$. In this light
	\begin{align*}
		e^{2\varphi} &= \frac{1}{A}\left(\frac{\partial\hat{r}}{\partial t}\right)^2 - \frac{\xi^2}{AG^2}\left(\frac{\partial\hat{r}}{\partial r}\right)^2\\
		&= \frac{\xi^2(1-v^2)}{AG^2}\left(\frac{\partial\hat{r}}{\partial r}\right)^2
	\end{align*}
	and
	\begin{align*}
		e^{2\psi} &= \frac{1}{A}\left(\frac{\partial\hat{t}}{\partial t}\right)^2 - \frac{\xi^2}{AG^2}\left(\frac{\partial\hat{t}}{\partial r}\right)^2\\
		&= \frac{1-v^2}{A}\left(\frac{\partial\hat{t}}{\partial t}\right)^2.
	\end{align*}
	Now:
	\begin{align*}
		\frac{\partial\hat{r}}{\partial t} &= -\xi^2\mathcal{R}'(\xi),\\
		\frac{\partial\hat{r}}{\partial r} &= \mathcal{R}(\xi) + \xi\mathcal{R}'(\xi),
	\end{align*}
	so (\ref{4.2}) gives
	\begin{align*}
		-\xi^2\mathcal{R}' = -\frac{\xi v}{G}\big(\mathcal{R}+\xi\mathcal{R}'\big),
	\end{align*}
	or equivalently
	\begin{align*}
		\xi\mathcal{R}' = \frac{v}{G-v}\mathcal{R}.
	\end{align*}
	Similarly:
	\begin{align*}
		\frac{\partial\hat{t}}{\partial t} &= \mathcal{T}(\xi) - \xi\mathcal{T}'(\xi),\\
		\frac{\partial\hat{t}}{\partial r} &= \mathcal{T}'(\xi),
	\end{align*}
	to which (\ref{4.3}) yields
	\begin{align*}
		\mathcal{T}' = -\frac{Gv}{\xi}\big(\mathcal{T}-\xi\mathcal{T}'\big)
	\end{align*}
	or equivalently
	\begin{align*}
		\xi\mathcal{T}' = -\frac{Gv}{1-Gv}\mathcal{T}.
	\end{align*}
	Therefore the shock speed is given in interior locally Minkowskian coordinates by
	\begin{align*}
		e^{\psi-\varphi}\hat{\xi}_0
		&= G(\xi_0)\frac{\partial\hat{t}}{\partial t}\left(\frac{\partial\hat{r}}{\partial r}\right)^{-1}\frac{\mathcal{R}(\xi_0)}{\mathcal{T}(\xi_0)}\\
		&= \frac{G(\xi_0)-v(\xi_0)}{1-G(\xi_0)v(\xi_0)}.
	\end{align*}
	By Proposition \ref{P2}, TOV$(\bar{\sigma})$ is comoving in Schwarzschild coordinates, and given that TOV$(\bar{\sigma})$ is matched to $(A,G,v)$ in $(t,r)$ coordinates, then the $(\bar{t},\bar{r})$ coordinates of Proposition \ref{P8} are identified with $(t,r)$, so
	\begin{align*}
		\tilde{w} &= e^{\psi-\varphi}\frac{\partial\hat{r}}{\partial\bar{t}}\left(\frac{\partial\hat{t}}{\partial\bar{t}}\right)^{-1}\\
		&= e^{\psi-\varphi}\frac{\partial\hat{r}}{\partial t}\left(\frac{\partial\hat{t}}{\partial t}\right)^{-1}\\
		&= \frac{G(\xi_0)}{\xi_0}\frac{\partial\hat{r}}{\partial t}\left(\frac{\partial\hat{r}}{\partial r}\right)^{-1}\\
		&= -v(\xi_0).
	\end{align*}
	Finally, substituting $e^{\psi-\varphi}\hat{\xi}_0$, $\tilde{w}$ and the equations of state into (\ref{2.40}) yields (\ref{4.1}).
\end{proof}
The following theorem was first proved in \cite{ST1995}, but can instead be obtained directly from Lemma \ref{L3}. For the theorem given in \cite{ST1995}, the value of $\sigma_1$ was only given approximately, however the new theorem provides an exact value.
\begin{theorem}
	\label{T4}
	The expanding FLRW$(\sigma,1)$--TOV$(\bar{\sigma})$ shock-wave solutions satisfy the Lax characteristic conditions for $0<\sigma<\sigma_1$, where
	\begin{align*}
		\sigma_1 = \frac{1+\sqrt{10}}{9} \approx 0.462.
	\end{align*}
\end{theorem}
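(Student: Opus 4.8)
The plan is to reduce the Lax characteristic conditions of Lemma \ref{L3} to explicit algebraic inequalities in the single parameter $\bar\sigma$, using the exact form of FLRW$(0,\sigma,1)$ from Proposition \ref{P4} together with the matching relation from Theorem \ref{T3}, and then to recognise $\sigma_1$ as the value of $\sigma$ at which the binding inequality degenerates to an equality.

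First I would evaluate the three quantities in (\ref{4.1}) at the shock $\xi=\xi_0$. Since the matching forces $A(\xi_0)=1-2M(\bar\sigma)$ and $A=1-v^2$ along FLRW$(0,\sigma,1)$ by Proposition \ref{P4}, we get $v(\xi_0)^2=2M(\bar\sigma)=\tfrac{4\bar\sigma}{1+6\bar\sigma+\bar\sigma^2}$, and then $G(\xi_0)=\tfrac{3(1+\sigma)v(\xi_0)}{2+(1+3\sigma)v(\xi_0)^2}$ from (\ref{2.18}). The key simplification is that, after substituting the Theorem \ref{T3} relation $\sigma=\tfrac{\bar\sigma(7+\bar\sigma)}{3(1-\bar\sigma)}$, one has the identities $1+3\sigma=\tfrac{1+6\bar\sigma+\bar\sigma^2}{1-\bar\sigma}$ and $3+3\sigma=\tfrac{(1+\bar\sigma)(3+\bar\sigma)}{1-\bar\sigma}$; these make the numerators collapse, so that
\begin{align*}
	G(\xi_0)=\tfrac12(3+\bar\sigma)\,v(\xi_0),\qquad \frac{G(\xi_0)-v(\xi_0)}{1-G(\xi_0)v(\xi_0)}=\frac{\sqrt{\bar\sigma\,(1+6\bar\sigma+\bar\sigma^2)}}{1-\bar\sigma},
\end{align*}
and likewise $\tfrac{\sqrt{\bar\sigma}-v(\xi_0)}{1-\sqrt{\bar\sigma}\,v(\xi_0)}$ becomes an explicit algebraic function of $\bar\sigma$. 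Everything in (\ref{4.1}) is then rational, after one square root, in $\bar\sigma$ alone.

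Next I would treat the two inequalities of (\ref{4.1}) separately. For the right inequality, squaring and clearing the denominators (all positive) reduces $\left(\tfrac{G(\xi_0)-v(\xi_0)}{1-G(\xi_0)v(\xi_0)}\right)^2<\sigma$ to $3(1+6\bar\sigma+\bar\sigma^2)<(7+\bar\sigma)(1-\bar\sigma)$, i.e.\ $\bar\sigma^2+6\bar\sigma-1<0$, i.e.\ $\bar\sigma<\sqrt{10}-3$; evaluating $\sigma=\tfrac{\bar\sigma(7+\bar\sigma)}{3(1-\bar\sigma)}$ at $\bar\sigma=\sqrt{10}-3$ gives precisely $\sigma=\tfrac{1+\sqrt{10}}{9}=\sigma_1$, and since $\bar\sigma\mapsto\sigma$ is strictly increasing on $(0,1)$ this is equivalent to $\sigma<\sigma_1$. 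For the left inequality, writing $v(\xi_0)=2\sqrt{\bar\sigma}/q$ with $q=\sqrt{1+6\bar\sigma+\bar\sigma^2}$ gives $\tfrac{\sqrt{\bar\sigma}-v(\xi_0)}{1-\sqrt{\bar\sigma}\,v(\xi_0)}=\tfrac{\sqrt{\bar\sigma}(q-2)}{q-2\bar\sigma}$; comparing with the shock speed above, cross-multiplying (noting $q>2\bar\sigma$ and $\bar\sigma<1$ keep all denominators positive), and using $q^2=1+6\bar\sigma+\bar\sigma^2$, the inequality collapses to $(1+\bar\sigma)(3+\bar\sigma-q)>0$, i.e.\ $q<3+\bar\sigma$, i.e.\ $1<9$. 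So the left inequality holds for every admissible $\bar\sigma$ (the range $0<\sigma<1$ corresponds to $0<\bar\sigma<2\sqrt{7}-5<1$), hence imposes no constraint, and the theorem follows.

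The computation is essentially bookkeeping, and the only genuine subtlety is recognising that the right inequality is the binding one while the left is automatic, and exploiting the hidden identity $1+3\sigma=\tfrac{1+6\bar\sigma+\bar\sigma^2}{1-\bar\sigma}$ buried in Theorem \ref{T3} --- this is what forces the algebra to collapse and ultimately produces the $\sqrt{10}$ in $\sigma_1$. Everything else is manipulation of quadratics with denominators that stay positive throughout the parameter range.
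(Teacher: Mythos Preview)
Your proof is correct and follows essentially the same route as the paper: both invoke Lemma~\ref{L3}, the explicit FLRW$(0,\sigma,1)$ relations from Proposition~\ref{P4}, and the matching relation $\bar\sigma=H(\sigma)$ from Theorem~\ref{T3}, arriving at the same key simplification $G(\xi_0)=\tfrac12(3+\bar\sigma)v(\xi_0)$ (the paper's (\ref{4.5})). The only minor difference is bookkeeping --- the paper obtains $v(\xi_0)$ by solving the Rankine--Hugoniot condition (\ref{3.2}) together with (\ref{4.4}), while you read it off directly from $A=1-v^2$ and $A(\xi_0)=1-2M(\bar\sigma)$ --- and you carry the verification of the two inequalities through to the explicit endpoint $\bar\sigma=\sqrt{10}-3$, whereas the paper's proof simply asserts that the substitution ``is found'' to yield the stated range.
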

\begin{proof}
	By Lemma \ref{L2} and Proposition \ref{P4} we know that FLRW$(\sigma,1)$ satisfies (\ref{3.2}) and
	\begin{align}
		G(\xi_0) = \frac{1}{2}(3+3\sigma)v(\xi_0)\left(1+\frac{1}{2}(1+3\sigma)v^2(\xi_0)\right)^{-1}\label{4.4}
	\end{align}
	at the point of intersection with the shock surface. Solving (\ref{3.2}) and (\ref{4.4}) for $G(\xi_0)$ and $v(\xi_0)$ thus gives:
	\begin{align}
		G(\xi_0) &= \frac{1}{2}(3+\bar{\sigma})v(\xi_0),\label{4.5}\\
		v(\xi_0) &= \sqrt{\frac{2(3\sigma-\bar{\sigma})}{(1+3\sigma)(3+\bar{\sigma})}}.\label{4.6}
	\end{align}
	Therefore by using (\ref{3.4}), (\ref{4.5}) and (\ref{4.6}), the left hand inequality of (\ref{4.1}) is found to be satisfied for $0<\sigma<1$ and the right hand inequality is found to be satisfied for for $0<\sigma<\sigma_1$.
\end{proof}

\subsection{The Lax Characteristic Conditions for Subluminal Shocks}\label{Subsection4.2}

\begin{lemma}
	\label{L4}
	Let $(A,G,v)$ denote a similarity solution to the Einstein--Euler equations with equation of state $p=\sigma\rho$. If there exists a $\xi_0>0$ such that $(A,G,v)$ can be matched to TOV$(\bar{\sigma})$ to form a shock-wave solution, then the shock speed is subluminal if
	\begin{align}
		G(\xi_0) < 1\label{4.7}
	\end{align}
	and in such a case the Lax characteristic conditions reduce to:
	\begin{align}
		G(\xi_0) &> \sqrt{\bar{\sigma}},\label{4.8}\\
		\{\cdot\}_D(\xi_0) &< 0.\label{4.9}
	\end{align}
\end{lemma}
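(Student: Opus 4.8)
The plan is to work entirely from the two quantities already produced in the proof of Lemma~\ref{L3}. There the shock speed relative to the interior fluid, measured in locally Minkowskian coordinates, was found to be
\begin{align*}
	e^{\psi-\varphi}\hat{\xi}_0 = \frac{G(\xi_0)-v(\xi_0)}{1-G(\xi_0)v(\xi_0)},
\end{align*}
and the Lax characteristic conditions were shown to be the pair of inequalities (\ref{4.1}). Throughout one uses $0<v(\xi_0)<1$ (the interior fluid is subluminal, and the shock is expanding so $v(\xi_0)>0$), which together with the hypothesis (\ref{4.7}) gives $0<G(\xi_0)v(\xi_0)<1$ and hence $1-G(\xi_0)v(\xi_0)>0$.

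For the subluminal claim I would write $G=G(\xi_0)$, $v=v(\xi_0)$ and use the factorisation
\begin{align*}
	(1-Gv)^2-(G-v)^2 = (1-v^2)(1-G^2).
\end{align*}
Since $1-v^2>0$ and, by (\ref{4.7}), $1-G^2>0$, the left-hand side is positive, so $\bigl|(G-v)/(1-Gv)\bigr|<1$; that is, the shock speed is subluminal.

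For the reduction of (\ref{4.1}) I would treat its two inequalities separately, clearing the (positive) denominators in each. The left inequality cross-multiplies to $(\sqrt{\bar{\sigma}}-v)(1-Gv)<(G-v)(1-\sqrt{\bar{\sigma}}v)$; expanding both sides, the term in $Gv$ and the term in $v$ cancel and one is left with $\sqrt{\bar{\sigma}}(1-v^2)<G(1-v^2)$, which — since $1-v^2>0$ — is exactly (\ref{4.8}). For the right inequality, clearing the denominator gives $G-v<\sqrt{\sigma}(1-Gv)$. Here I would invoke that the shock is expanding, so $\dot{\Phi}>0$ and $e^{\psi-\varphi}>0$ make the shock speed above positive; with (\ref{4.7}) this forces $G-v>0$. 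Consequently $G-v<\sqrt{\sigma}(1-Gv)$ is equivalent to $|G-v|<\sqrt{\sigma}(1-Gv)$, the lower bound $-\sqrt{\sigma}(1-Gv)<0<G-v$ being automatic; squaring yields $(G-v)^2-\sigma(1-Gv)^2<0$, and since the prefactor $\frac{3}{4}(3+3\sigma)$ of $\{\cdot\}_D$ is positive, this is precisely (\ref{4.9}). Assembling the two reductions, under (\ref{4.7}) the pair (\ref{4.1}) is equivalent to the pair consisting of (\ref{4.8}) and (\ref{4.9}).

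The step I expect to require the most care is the sign $G(\xi_0)>v(\xi_0)$ used in the right-inequality reduction: $\{\cdot\}_D(\xi_0)<0$ is genuinely a \emph{two-sided} bound on $(G-v)/(1-Gv)$, while the Lax condition supplies only the upper side, so one must know the sign of $G(\xi_0)-v(\xi_0)$ to see that the lower side is redundant — and that sign is exactly where the expanding (outgoing) character of the shock enters. As a consistency check in the other direction, $\{\cdot\}_D(\xi_0)<0$ together with $0<v(\xi_0)<1$ and $0<\sigma<1$ already forces (\ref{4.7}), via the same factorisation $(1-Gv)^2-(G-v)^2=(1-v^2)(1-G^2)$, so the hypothesis (\ref{4.7}) is in fact automatic once (\ref{4.9}) holds.
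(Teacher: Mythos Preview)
Your proof is correct and follows essentially the same route as the paper: reduce the subluminal claim to $G(\xi_0)<1$ via the denominator $1-G(\xi_0)v(\xi_0)>0$, cross-multiply the left inequality of (\ref{4.1}) to obtain (\ref{4.8}), and factor $\{\cdot\}_D$ to identify the right inequality with (\ref{4.9}) under the standing assumption $0<v(\xi_0)<G(\xi_0)<1$. Your explicit justification of $G(\xi_0)>v(\xi_0)$ from the outgoing character of the shock, and your closing remark that (\ref{4.9}) already forces (\ref{4.7}), are welcome additions that the paper leaves implicit.
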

\begin{proof}
	By Lemma \ref{L3}, the shock speed is subluminal if
	\begin{align*}
		\frac{G(\xi_0)-v(\xi_0)}{1-G(\xi_0)v(\xi_0)} < 1,
	\end{align*}
	which for $0<v<1$ is equivalent to (\ref{4.7}). For $G<1$ it is then not difficult to check the left hand inequality of (\ref{4.1}) is equivalent to (\ref{4.8}). It thus remains to demonstrate the right hand inequality is equivalent to (\ref{4.9}). In this light, we have
	\begin{align*}
		\{\cdot\}_D 
		&= \frac{3}{4}(3+3\sigma)\left[(G-v)^2-\sigma(1-Gv)^2\right]\\
		&= \frac{3}{4}(3+3\sigma)\left[G-v+\sqrt{\sigma}(1-Gv)\right]\left[G-v-\sqrt{\sigma}(1-Gv)\right],
	\end{align*}
	and for $0<v<G<1$ we see that $\{\cdot\}_D=0$ is equivalent to
	\begin{align*}
		\frac{G-v}{1-Gv} = \sqrt{\sigma},
	\end{align*}
	which completes the proof.
\end{proof}
The following theorem, first proved in \cite{ST1995}, demonstrates that stable FLRW$(\sigma,1)$--TOV$(\bar{\sigma})$ shock waves can be constructed with shock speeds arbitrarily close to the speed of light.
\begin{theorem}
	\label{T5}
	The FLRW$(\sigma,1)$--TOV$(\bar{\sigma})$ shock-wave solutions have subluminal shock speeds for $0<\sigma<\sigma_2$, where
	\begin{align*}
		\sigma_2 = \frac{\sqrt{5}}{3} \approx 0.745.
	\end{align*}
\end{theorem}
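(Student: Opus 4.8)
The plan is to reduce everything, via Lemma~\ref{L4}, to the single scalar inequality $G(\xi_0)<1$ at the Rankine--Hugoniot intersection point, and then to evaluate $G(\xi_0)$ explicitly using the data already extracted in the proof of Theorem~\ref{T4}. First I would recall that for an FLRW$(0,\sigma,1)$-TOV$(\bar\sigma)$ shock wave the matching forces $\bar\sigma=H(\sigma)$, equivalently the polynomial relation $3\sigma(1-\bar\sigma)=\bar\sigma(7+\bar\sigma)$ obtained at the end of the proof of Theorem~\ref{T3}, and that equations~(\ref{4.5}) and~(\ref{4.6}) give
\begin{align*}
	G(\xi_0)=\tfrac12(3+\bar\sigma)\,v(\xi_0),\qquad v^2(\xi_0)=\frac{2(3\sigma-\bar\sigma)}{(1+3\sigma)(3+\bar\sigma)},
\end{align*}
so that $G^2(\xi_0)=\dfrac{(3+\bar\sigma)(3\sigma-\bar\sigma)}{2(1+3\sigma)}$. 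Lemma~\ref{L4} then says that, provided $0<v(\xi_0)<1$, the shock speed is subluminal precisely when $G^2(\xi_0)<1$.

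Next I would eliminate $\sigma$ in favour of $\bar\sigma$ using the polynomial relation: a one-line computation gives $3\sigma-\bar\sigma=\dfrac{2\bar\sigma(3+\bar\sigma)}{1-\bar\sigma}$ and $1+3\sigma=\dfrac{1+6\bar\sigma+\bar\sigma^2}{1-\bar\sigma}$, whence
\begin{align*}
	G^2(\xi_0)=\frac{\bar\sigma(3+\bar\sigma)^2}{1+6\bar\sigma+\bar\sigma^2}.
\end{align*}
Since $1+6\bar\sigma+\bar\sigma^2>0$ for $0<\bar\sigma<1$, the subluminality condition $G^2(\xi_0)<1$ is equivalent to $\bar\sigma(3+\bar\sigma)^2<1+6\bar\sigma+\bar\sigma^2$, which rearranges to the cubic inequality $\bar\sigma^3+5\bar\sigma^2+3\bar\sigma-1<0$. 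One checks directly that $\bar\sigma=\sqrt5-2$ is a root of $\bar\sigma^3+5\bar\sigma^2+3\bar\sigma-1$, and the remaining two roots have negative sum ($-3-\sqrt5$) and positive product ($2+\sqrt5$), hence lie outside $[0,1]$; since the cubic equals $-1$ at $\bar\sigma=0$ and $8$ at $\bar\sigma=1$, it is negative on $(0,\sqrt5-2)$ and positive on $(\sqrt5-2,1)$.

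Finally I would push this back to the $\sigma$ variable. The relation $\sigma=\dfrac{\bar\sigma(7+\bar\sigma)}{3(1-\bar\sigma)}$ is strictly increasing in $\bar\sigma$ on $(0,1)$ (numerator increasing, denominator positive and decreasing), so $H$ is strictly increasing and $\bar\sigma=H(\sigma)<\sqrt5-2$ if and only if $\sigma<\sigma(\sqrt5-2)$. Substituting $\bar\sigma=\sqrt5-2$ and rationalising gives $\sigma(\sqrt5-2)=\dfrac{(\sqrt5-2)(5+\sqrt5)}{3(3-\sqrt5)}=\dfrac{3\sqrt5-5}{9-3\sqrt5}=\dfrac{\sqrt5}{3}=\sigma_2$, which is the claim. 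The only bookkeeping to be careful about is the hypothesis of Lemma~\ref{L4}: one must note that $v^2(\xi_0)=\dfrac{4\bar\sigma}{1+6\bar\sigma+\bar\sigma^2}>0$ is real, and that $v(\xi_0)<G(\xi_0)$ because $\tfrac12(3+\bar\sigma)>1$, so on the range where $G(\xi_0)<1$ we automatically have $0<v(\xi_0)<1$. I expect no genuine obstacle: the argument is pure algebraic elimination, the one mild subtlety being to work through the \emph{polynomial} form of the matching relation so as to avoid the square root appearing in $H(\sigma)$ in~(\ref{3.4}).
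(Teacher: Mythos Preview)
Your proposal is correct and follows essentially the same approach as the paper, which simply states that the result follows from Lemma~\ref{L4} together with relations~(\ref{3.4}), (\ref{4.5}) and (\ref{4.6}). You have supplied the explicit algebraic elimination that the paper leaves implicit, and your choice to work with the polynomial form $\sigma=\bar\sigma(7+\bar\sigma)/[3(1-\bar\sigma)]$ rather than the radical expression~(\ref{3.4}) is a sensible simplification that keeps the computation square-root free until the final step.
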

\begin{proof}
	This follows directly from Lemma \ref{L4} and relations (\ref{3.4}), (\ref{4.5}) and (\ref{4.6}).
\end{proof}
Before introducing Theorem \ref{T6}, a few definitions are in order.
\begin{definition}
	\label{D11}
	The \emph{singular surface} and \emph{sonic surface} are defined in $(A,G,v)$ space by $\{\cdot\}_S=0$ and $\{\cdot\}_D=0$ respectively. Moreover, the \emph{subsonic region} and \emph{supersonic region} are defined by $\{\cdot\}_D<0$ and $\{\cdot\}_D>0$ respectively. Furthermore, solutions whose trajectories remain in the subsonic region are referred to as \emph{subsonic}, those that remain in the supersonic region as \emph{supersonic} and those that pass through the sonic surface as \emph{transonic}. A point where a trajectory intersects the sonic surface is referred to as a \emph{sonic point} for that trajectory.
\end{definition}
As a consequence of Lemma \ref{L4}, the sonic surface serves as a convenient indicator for satisfying one of the Lax characteristic conditions of an expanding general relativistic shock wave with a TOV$(\bar{\sigma})$ exterior. This is particularly useful for numerical approximations, since if the intersection with the Rankine--Hugoniot curve lies in the subsonic region and to the right of the $G=\sqrt{\bar{\sigma}}$ plane, then the Lax characteristic conditions are satisfied. Note that the subsonic region is everything left of the right most unbroken curve in Figure \ref{F3}. For $\sigma\neq\bar{\sigma}$, condition (\ref{4.8}) is not automatically satisfied, since the Rankine--Hugoniot curve does not intersect the $v=0$ plane at $G=\sqrt{\sigma}$, as Figure \ref{F3} demonstrates.
\begin{figure}[h]
	\begin{center}
		\includegraphics[width=10cm]{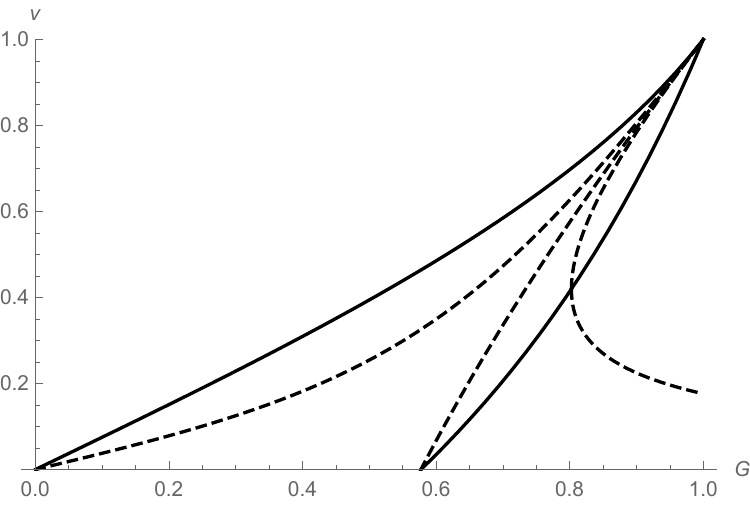}
	\end{center}
	\caption{This figure depicts the singular surface as the left unbroken curve, the sonic surface as the right unbroken curve and three Rankine--Hugoniot curves by dashed curves, all for $\sigma=\frac{1}{3}$.}
	\label{F3}
\end{figure}

In Figure \ref{F3}, the leftmost, centre and rightmost dashed curves correspond to:
\begin{align*}
	\sigma &< \bar{\sigma} = \frac{2}{3}, & \sigma &= \bar{\sigma} = \frac{1}{3}, & \sigma &> \bar{\sigma} = \frac{1}{6},
\end{align*}
respectively. In the $\sigma<\bar{\sigma}$ case, the Rankine--Hugoniot curve touches the singular surface at $(G,v)=(0,0)$ and $(G,v)=(1,1)$. In the $\sigma=\bar{\sigma}$ case, the Rankine--Hugoniot curve touches the sonic surface at $(G,v)=(\sqrt{\sigma},0)$ and $(G,v)=(1,1)$. In the $\sigma>\bar{\sigma}$ case, the Rankine--Hugoniot curve touches the sonic surface at $(G,v)=(1,1)$ and intersects it at:
\begin{align*}
	G &= \frac{\sqrt{\sigma}(1+\bar{\sigma})+\sqrt{(\sigma-\bar{\sigma})(1-\sigma\bar{\sigma})}}{1+\sigma}, & v &= \sqrt{\frac{\sigma-\bar{\sigma}}{1-\sigma\bar{\sigma}}}.
\end{align*}
Thus for an expanding shock wave to not satisfy the Lax characteristic conditions, the solution trajectory must either intersect the Rankine--Hugoniot curve before the $G=\sqrt{\bar{\sigma}}$ plane or after passing through the sonic surface. Since conditions $(\ref{4.8})$ and $(\ref{4.9})$ are automatically satisfied for $\sigma=\bar{\sigma}$, then the expanding shock waves for which $\sigma=\bar{\sigma}$ automatically satisfy the Lax characteristic conditions, as the following theorem summarises.
\begin{theorem}
	\label{T6}
	Let $(A,G,v)$ denote a similarity solution to the Einstein--Euler equations with equation of state $p=\sigma\rho$. If there exists a $\xi_0>0$ such that $(A,G,v)$ can be matched to TOV$(\bar{\sigma})$ to form a shock-wave solution with a subluminal shock speed, then the Lax characteristic conditions are satisfied if:
	\begin{enumerate}
		\item $\sigma=\bar{\sigma}$ or
		\item $\sigma<\bar{\sigma}$ and $G(\xi_0)>\sqrt{\bar{\sigma}}$ or
		\item $\sigma>\bar{\sigma}$ and $\{\cdot\}_D(\xi_0)<0$.
	\end{enumerate}
\end{theorem}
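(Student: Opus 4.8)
The plan is to reduce the statement to Lemma~\ref{L4} and then read off the two surviving inequalities directly from the Rankine--Hugoniot relation~(\ref{3.2}), after rewriting it in terms of the shock speed. By Lemma~\ref{L4}, since the shock speed is subluminal we have $0<v(\xi_0)<G(\xi_0)<1$, the Lax characteristic conditions are equivalent to the conjunction of~(\ref{4.8}) and~(\ref{4.9}), and, as the factorisation of $\{\cdot\}_D$ in the proof of Lemma~\ref{L4} shows, condition~(\ref{4.9}) amounts to $u<\sqrt{\sigma}$, where $u:=\frac{G(\xi_0)-v(\xi_0)}{1-G(\xi_0)v(\xi_0)}$ is the shock speed in the interior locally Minkowskian frame; note $0<u<1$. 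Thus it suffices to establish~(\ref{4.8}) and~(\ref{4.9}) in each of the three cases.

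The key computation is to rewrite~(\ref{3.2}). Since $(A,G,v)$ matches TOV$(\bar{\sigma})$ to form a shock-wave solution, Lemma~\ref{L2} gives~(\ref{3.2}); substituting $G(\xi_0)=\frac{u+v(\xi_0)}{1+u\,v(\xi_0)}$ and cancelling the common factor $1-v(\xi_0)^2$ that appears in both the numerator and denominator of the left-hand side reduces~(\ref{3.2}) to
\[
\bar{\sigma}=\frac{(u+v(\xi_0))(\sigma-u\,v(\xi_0))}{(u-\sigma v(\xi_0))(1+u\,v(\xi_0))}.
\]
Because $\bar{\sigma}\in(0,1)$ and $u,v(\xi_0)\in(0,1)$, the right-hand side is positive only when $\sigma-u\,v(\xi_0)>0$ and $u-\sigma v(\xi_0)>0$ (equivalently, $\{\cdot\}_S(\xi_0)>0$). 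Subtracting $\sigma$ and collecting terms in the numerator gives
\[
\bar{\sigma}-\sigma=\frac{v(\xi_0)(1+\sigma)\bigl[\sigma-u^2-u\,v(\xi_0)(1-\sigma)\bigr]}{(u-\sigma v(\xi_0))(1+u\,v(\xi_0))},
\]
so $\bar{\sigma}-\sigma$ has the same sign as $\sigma-u^2-u\,v(\xi_0)(1-\sigma)$; in other words $\sigma<\bar{\sigma}$, $\sigma=\bar{\sigma}$, or $\sigma>\bar{\sigma}$ according as $u^2+u\,v(\xi_0)(1-\sigma)$ is less than, equal to, or greater than $\sigma$.

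The three cases then follow. If $\sigma=\bar{\sigma}$, then $u^2+u\,v(\xi_0)(1-\sigma)=\sigma$; since $u\,v(\xi_0)(1-\sigma)>0$ this forces $u^2<\sigma$, which is~(\ref{4.9}), and the same identity rearranges to $\sigma=\frac{u(u+v(\xi_0))}{1+u\,v(\xi_0)}$, whence $G(\xi_0)^2-\sigma=\frac{v(\xi_0)(u+v(\xi_0))(1-u^2)}{(1+u\,v(\xi_0))^2}>0$, which is~(\ref{4.8}). If $\sigma<\bar{\sigma}$, then $u^2+u\,v(\xi_0)(1-\sigma)<\sigma$, so again $u^2<\sigma$ and~(\ref{4.9}) holds automatically, while~(\ref{4.8}) is the hypothesis. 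If $\sigma>\bar{\sigma}$, then $u\,v(\xi_0)(1-\sigma)>\sigma-u^2$, and~(\ref{4.9}) is the hypothesis, giving $\sigma-u^2>0$; a short computation shows $G(\xi_0)^2-\bar{\sigma}>0$ is equivalent to $(u^2-\sigma)(1+v(\xi_0)^2)+2u\,v(\xi_0)(1-\sigma)>0$, and this follows from $2u\,v(\xi_0)(1-\sigma)>2(\sigma-u^2)\ge(\sigma-u^2)(1+v(\xi_0)^2)$, the last step using $1+v(\xi_0)^2\le2$. Hence~(\ref{4.8}) holds, completing all three cases.

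The main obstacle is the algebraic bookkeeping of the middle step: correctly simplifying~(\ref{3.2}) under the shock-speed substitution, and tracking signs carefully enough to pin down the sign of $\bar{\sigma}-\sigma$ and, in Case~3, of $G(\xi_0)^2-\bar{\sigma}$. Once the factored form of~(\ref{3.2}) is in hand, the only mildly delicate point is spotting the inequality chain in Case~3 that exploits $1+v(\xi_0)^2\le2$; the remaining steps are immediate consequences of Lemma~\ref{L4}.
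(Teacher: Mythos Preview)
Your proof is correct. Both you and the paper reduce to Lemma~\ref{L4}, so the task becomes showing which of~(\ref{4.8}) and~(\ref{4.9}) is automatic on the Rankine--Hugoniot locus in each of the three cases. The paper handles this geometrically: the discussion preceding Definition~\ref{D10} (with Figure~\ref{F3}) locates the Rankine--Hugoniot curve relative to the sonic surface $\{\cdot\}_D=0$ and the plane $G=\sqrt{\bar\sigma}$, asserting in particular that for $\sigma=\bar\sigma$ the curve meets the sonic surface only at $(\sqrt\sigma,0)$ and $(1,1)$, and for $\sigma>\bar\sigma$ giving the explicit intersection point. Theorem~\ref{T6} is then stated as an immediate consequence of that picture.

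You instead give a self-contained algebraic argument: substituting the shock speed $u=\frac{G-v}{1-Gv}$ into~(\ref{3.2}) yields the clean identity
\[
\bar\sigma-\sigma=\frac{v(1+\sigma)\bigl[\sigma-u^2-uv(1-\sigma)\bigr]}{(u-\sigma v)(1+uv)},
\]
from which the sign of $\bar\sigma-\sigma$ is read off directly, and the remaining inequality in each case is checked by a short computation (the Case~3 chain using $1+v^2\le 2$ is the neat step). Your route is more explicit and does not rely on the figure-supported geometric claims about the Rankine--Hugoniot curve; the paper's route, by contrast, explains \emph{why} the conditions split the way they do by exhibiting where the curve sits in phase space. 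Either way the content is the same, and your algebra in fact furnishes a proof of the geometric assertions the paper states without verification.
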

\begin{proof}
	This is an immediate consequence of Lemma \ref{L4} and the discussion proceeding Definition \ref{D11}.
\end{proof}

\section{Existence of The Pure Radiation FLRW--TOV Shock Wave}\label{Section5}

\subsection{Monotonicity Lemma}\label{Subsection5.1}

Since Proposition \ref{P4} gives FLRW$(\sigma,1)$ as an implicit function of $v$, we know if and where the solution trajectory crosses the singular and sonic surfaces. However, even though we can expect similar behaviour of FLRW$(\sigma,a)$ for $\xi\ll1$ and $a\approx1$, there is no guarantee that this persists as $\xi$ increases or for larger values of $|a-1|$. In particular, all FLRW$(\sigma,1)$ and TOV$(\sigma)$ solutions are transonic, whereas this is not the case for generic FLRW$(\sigma,a)$ solutions, most of which are only subsonic. All FLRW$(\sigma,a)$--TOV$(\sigma)$ shock waves are transonic, since either the interior FLRW$(\sigma,a)$ or exterior TOV$(\sigma)$ spacetime must pass through the sonic surface to be matched. We know from Figure \ref{F2} that the FLRW$(\frac{1}{3},2.4)$ trajectory differs significantly from the FLRW$(\frac{1}{3},1)$ trajectory as $\xi$ increases, since it encounters a singularity in equation (\ref{2.7}) upon hitting the sonic surface. The following lemma assists in predicting the behaviour of FLRW$(\sigma,a)$ trajectories, as it implies any FLRW$(\sigma,a)$ trajectory will either hit the sonic surface or pass through the plane $A=1-2M(\sigma)$. Recall from Figure \ref{F3} that $\{\cdot\}_D<0$ implies the trajectory is to the left of the sonic surface and $\{\cdot\}_S>0$ implies that the trajectory is below the singular surface. The monotonicity of $A$ and $G$ implies that the trajectory advances to the right whilst simultaneously approaching the $A=1-2M(\sigma)$ plane.
\begin{lemma}
	\label{L5}
	Let $0<\sigma<1$, $a>0$ and $\xi>0$. So long as FLRW$(\sigma,a)$ satisfies:
	\begin{align*}
		A &> 1- 2M(\sigma),\\
		\{\cdot\}_D &< 0,
	\end{align*}
	then it also satisfies:
	\begin{align}
		A' &< 0,\label{5.1}\\
		G' &> 0,\label{5.2}\\
		v &> 0,\label{5.3}\\
		\{\cdot\}_S &> 0.\label{5.4}
	\end{align}
\end{lemma}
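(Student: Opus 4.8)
The plan is a forward continuity (bootstrap) argument run from the rest point $(A,G,v)=(1,0,0)$ at $\xi=0$. I work throughout in the autonomous variable $s=\log\xi$ of (\ref{2.10})--(\ref{2.11}), and use three standing facts: (i) $G=\xi/\sqrt{AB}>0$ for $\xi>0$, since $A,B>0$; (ii) $0<M(\sigma)<\frac14$ for $0<\sigma<1$, so the hypothesis $A>1-2M(\sigma)$ already forces $A\in\left(\frac12,1\right]$, hence $\frac{1-A}{A}<\frac{2M(\sigma)}{1-2M(\sigma)}=\frac{4\sigma}{(1+\sigma)^2}$; (iii) $9\sigma-\frac14(3+3\sigma)^2\frac{1-A}{A}>0$ whenever $A>1-2M(\sigma)$, which follows from the identity used in the proof of Proposition \ref{P1} (it vanishes exactly at $A=1-2M(\sigma)$) together with the monotonicity of $A\mapsto\frac{1-A}{A}$. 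As a first step I would record the leading behaviour near the rest point: by Definition \ref{D5} (with $\delta$ fixed by (\ref{2.33})) the FLRW$(0,\sigma,a)$ solution satisfies $v\approx\frac12\xi$, $G\approx\frac14(3+3\sigma)\xi$, $A\approx1-\frac14a^2\xi^2$ as $\xi\to0$, and substituting these leading orders into the definitions of $\{\cdot\}_S,\{\cdot\}_D$ and into (\ref{2.5})--(\ref{2.7}) shows that, on some initial interval, all of (\ref{5.1})--(\ref{5.4}) hold together with the auxiliary bounds $A<1$ and $0<v<1$, and that both hypotheses hold. Let $\xi_{\max}\in(0,\infty]$ be the supremum of those $\bar\xi$ such that the solution is defined on $(0,\bar\xi)$ and both hypotheses hold there; it then suffices to establish (\ref{5.1})--(\ref{5.4}) on $(0,\xi_{\max})$.

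Next I would isolate the only genuinely open conditions. Let $J=(0,\xi_1)$ be the largest initial subinterval of $(0,\xi_{\max})$ on which, additionally, $v>0$ and $\{\cdot\}_S>0$; it is nonempty by the previous step. On $J$ the hypotheses hold, so $A\in\left(\frac12,1\right)$ — the bound $A<1$ being automatic, since $A'<0$ there (shown below) and $A\to1$ as $\xi\to0$ — and $0<v<1$, the bound $v<1$ holding because the logarithmic derivative of $1-v^2$ read off from (\ref{2.7}) is finite while $\{\cdot\}_D\neq0$, $A\neq0$ and $\{\cdot\}_S\neq0$, so $1-v^2$ cannot reach $0$. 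From $\{\cdot\}_S=3(1+\sigma v^2)G-3(1+\sigma)v>0$ and $v<1$ we get $G>\frac{(1+\sigma)v}{1+\sigma v^2}>v$, and $\{\cdot\}_D<0$ then forces $Gv<1$ and $G<\frac{v+\sqrt\sigma}{1+\sqrt\sigma v}<1$. Now (\ref{5.3}) and (\ref{5.4}) hold on $J$ by definition, and (\ref{5.1}) is immediate from (\ref{2.5}), since $(3+3\sigma)(1-A)v/(\xi\{\cdot\}_S)>0$. For (\ref{5.2}): by (\ref{2.6}) and $G>0$, $G'>0$ is equivalent to $\frac{1-A}{A}\cdot\frac{(3+3\sigma)[(1+v^2)G-2v]}{2\{\cdot\}_S}<1$; if $(1+v^2)G-2v\le0$ the left side is $\le0$ and we are done, while otherwise, using $\frac{1-A}{A}<\frac{4\sigma}{(1+\sigma)^2}$, it suffices to prove $6\sigma[(1+v^2)G-2v]\le(1+\sigma)\{\cdot\}_S$; expanding both sides and dividing by the positive quantity $3(1-\sigma)(1-\sigma v^2)$ turns this into $G\ge\frac{(1-\sigma)v}{1-\sigma v^2}$, and since $\frac{(1+\sigma)v}{1+\sigma v^2}\ge\frac{(1-\sigma)v}{1-\sigma v^2}$ for $v<1$, this is already implied by $\{\cdot\}_S>0$. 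Thus (\ref{5.1})--(\ref{5.4}) hold throughout $J$.

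It remains to prove $J=(0,\xi_{\max})$; suppose instead $\xi_1<\xi_{\max}$. Then the solution and the hypotheses persist past $\xi_1$, so by continuity and maximality of $J$ either $v(\xi_1)=0$ or $\{\cdot\}_S(\xi_1)=0$ (with $A(\xi_1)<1$, as $A$ is strictly decreasing on $J$). The case $v(\xi_1)=0,\ \{\cdot\}_S(\xi_1)>0$ is ruled out directly: then $\{\cdot\}_S(\xi_1)=3G(\xi_1)>0$, and taking the limit of (\ref{2.7}) gives $\xi v'\longrightarrow-\frac{2G(\xi_1)}{9(1+\sigma)\left(G(\xi_1)^2-\sigma\right)}\left(9\sigma-\frac{(3+3\sigma)^2}{4}\cdot\frac{1-A(\xi_1)}{A(\xi_1)}\right)>0$, which is positive because $G(\xi_1)^2<\sigma$ by $\{\cdot\}_D<0$ and because $9\sigma-\frac14(3+3\sigma)^2\frac{1-A(\xi_1)}{A(\xi_1)}>0$ by $A(\xi_1)>1-2M(\sigma)$ (here the hypothesis is used, via standing fact (iii)); hence $v$ increases near $\xi_1$, contradicting $v(\xi_1)=0$. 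The remaining case $\{\cdot\}_S(\xi_1)=0$ (whether or not $v(\xi_1)=0$, the latter also forcing $G(\xi_1)=0$) is where the hypothesis $A>1-2M(\sigma)$ must do its real work: writing the $A$-equation in $A$--$H$ form as $\frac{d}{ds}\log(1-A)=\frac{3+3\sigma}{\{\cdot\}_S^{*}}$ with $\{\cdot\}_S^{*}=\{\cdot\}_S/v$, one must show that the trajectory cannot approach $\{\cdot\}_S^{*}=0^{+}$ while staying in $\left\{A>1-2M(\sigma),\ \{\cdot\}_D<0\right\}$ — morally because FLRW$(0,\sigma,a)$ leaves the rest point with $H=\frac12(3+3\sigma)>1+\sigma$, i.e. strictly above the singular surface $\{\cdot\}_S=0$, and the hypothesis $\frac{1-A}{A}<\frac{4\sigma}{(1+\sigma)^2}$ pins it there; one expects this forces $\{\cdot\}_S^{*}$ to stay bounded below on $(0,\xi_{\max})$, whence $\xi_1=\xi_{\max}$ and the lemma follows.

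The main obstacle, I expect, is exactly this last exclusion of the singular surface $\{\cdot\}_S=0$ (equivalently $\{\cdot\}_S^{*}=0$): unlike the $A$-blow-up heuristic one might first try — which in fact fails, since as $\{\cdot\}_S^{*}\to0^{+}$ the increasing quantity $\log(1-A)$ stays bounded above by $\log(2M(\sigma))$ and so $A$ need not blow up — the argument must genuinely exploit the confinement provided by $A>1-2M(\sigma)$, perhaps together with the structure of the two-dimensional unstable manifold $\mathcal{M}_0$ at the rest point. A secondary point requiring care is the algebraic reduction for (\ref{5.2}): the cancellation down to the clean inequality $G\ge\frac{(1-\sigma)v}{1-\sigma v^2}$, and in particular the sign of the coefficient $3(1-\sigma)(1-\sigma v^2)$, should be verified in full. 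Everything else — the asymptotics at the rest point, the derivation of $A'<0$ from (\ref{2.5}), the bounds $v<1$, $Gv<1$, $G<1$, and the $v'>0$ computation ruling out $v\to0$ — is routine.
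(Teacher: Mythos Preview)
Your bootstrap framework is sound and your treatment of (\ref{5.1})--(\ref{5.3}) is correct; in fact your reduction of (\ref{5.2}) to $G\ge\frac{(1-\sigma)v}{1-\sigma v^2}$ is valid, though the paper obtains the same conclusion more directly by observing that the fraction $\frac{(3+3\sigma)(1+v^2)G-(6+6\sigma)v}{(6+6\sigma v^2)G-(6+6\sigma)v}$ is bounded above by $1$ whenever $v<1$, and then using only $\frac{1-A}{A}<1$.

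The genuine gap is exactly where you flag it: the exclusion of $\{\cdot\}_S\to0$. Your heuristic that ``the hypothesis pins the trajectory above the singular surface'' is the right intuition but is not an argument, and the paper's resolution is a concrete technique you have not supplied. The paper introduces a one-parameter family of barrier surfaces $\{\cdot\}_S=mv$, equivalently $G=\frac{(3+3\sigma+m)v}{3+3\sigma v^2}$, for $0<m<\tfrac32$, and computes the normal derivative
\[
q_A(v;\sigma,m)=\xi\frac{d}{d\xi}\Bigl(G-\tfrac{(3+3\sigma+m)v}{3+3\sigma v^2}\Bigr)\Big|_{\{\cdot\}_S=mv}
\]
along each of them. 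A lengthy but elementary manipulation (splitting the $A$-dependent piece into three terms $\{\cdot\}_A,\{\cdot\}_B,\{\cdot\}_C$, showing $\{\cdot\}_A+\{\cdot\}_B>0$, and then letting $m=n\sigma\to0$) shows that $q_A>0$ on any interval $0<v<v_*<1$ once $m$ is small enough. The point is that the intersection $v_I(\sigma,m)$ of the barrier with the sonic surface tends to $1$ as $m\to0$, so for any hypothetical crossing point $v_{**}<1$ of $\{\cdot\}_S=0$ one can choose a barrier lying strictly between it and the trajectory's initial position, yielding a contradiction. This barrier argument is the missing ingredient; the $\{\cdot\}_S^*$-boundedness you hope for is a consequence of it, not a substitute.
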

\begin{proof}
	From Definition \ref{D6}, for sufficiently small $\xi>0$ we have the inequalities:
	\begin{align*}
		1-A &> 0, & A' &< 0,\\
		G &> 0, & G' &> 0,\\
		v &> 0, & \{\cdot\}_S &> 0,
	\end{align*}
	and so the FLRW$(\sigma,a)$ trajectory begins by satisfying inequalities (\ref{5.1})--(\ref{5.4}). It is thus sufficient to show that each one of the four inequalities is implied by the other three as $\xi$ increases. In this light, assume $v>0$ and $\{\cdot\}_S>0$, then equation (\ref{2.5}) and the small $\xi$ inequality $1-A>0$ implies inequality (\ref{5.1}). For inequality (\ref{5.2}), assume $v>0$ and $\{\cdot\}_S>0$ and note that $\{\cdot\}_S>0$ and $\{\cdot\}_D<0$ imply $v<1$. Given these constraints, equation (\ref{2.6}) implies
	\begin{align*}
		\xi\frac{dG}{d\xi} &= -G\left[\left(\frac{1-A}{A}\right)\frac{(3+3\sigma)[(1+v^2)G-2v]}{2\{\cdot\}_S}-1\right]\\
		&= G\left[1-\left(\frac{1-A}{A}\right)\frac{(3+3\sigma)(1+v^2)G-(6+6\sigma)v}{(6+6\sigma v^2)G-(6+6\sigma)v}\right]\\
		&> G\left[1-\left(\frac{1-A}{A}\right)\right]\\
		&> 0,
	\end{align*}
	with the last line following from (\ref{3.5}) and the small $\xi$ inequality $G'>0$. Now it is sufficient to demonstrate inequality (\ref{5.3}) on the interval $0<G<\sqrt{\sigma}$, since the sonic surface intersects the $v=0$ plane at $G=\sqrt{\sigma}$ and we are assuming that the trajectory stays off the sonic surface. In this light, assume $A'<0$, $G'>0$ and $\{\cdot\}_S>0$. Note that $A'<0$ implies $A<1$, and $G'>0$ implies $G>0$. By equation (\ref{2.7}), the sign of $v'$ on the $v=0$ plane in the region bounded by $1-2M(\sigma)<A<1$ and $0<G<\sqrt{\sigma}$ is strictly positive, since
	\begin{align*}
		\xi\frac{dv}{d\xi}\bigg|_{v=0} &= -\left(\frac{1-v^2}{2\{\cdot\}_D}\right)\left[3\sigma\{\cdot\}_S+\left(\frac{1-A}{A}\right)\frac{(3+3\sigma)^2\{\cdot\}_N}{4\{\cdot\}_S}\right]\bigg|_{v=0}\\
		&= \left(\frac{2G}{3(3+3\sigma)(\sigma-G^2)}\right)\left[9\sigma-\frac{(3+3\sigma)^2}{4}\left(\frac{1-A}{A}\right)\right]\\
		&> 0.
	\end{align*}
	Thus any trajectory that begins above the $v=0$ plane remains above the $v=0$ plane. The small $\xi$ inequality $v>0$ along with this result then implies inequality (\ref{5.3}). Note that such a result still holds when $1-2M(\bar{\sigma})<A<1$ for $0<\bar{\sigma}\leq\sigma<1$, since
	\begin{align*}
		9\sigma - \frac{(3+3\sigma)^2}{4}\left(\frac{1-A}{A}\right) &> 9\sigma - \frac{(3+3\sigma)^2}{4}\left(\frac{2M(\bar{\sigma})}{1-2M(\bar{\sigma})}\right)\\
		&= 9\sigma - \frac{9\bar{\sigma}(3+3\sigma)^2}{(3+3\bar{\sigma})^2}\\
		&= (3+3\sigma)^2\left(\frac{9\sigma}{(3+3\sigma)^2}-\frac{9\bar{\sigma}}{(3+3\bar{\sigma})^2}\right)\\
		&\geq 0.
	\end{align*}
	Finally, inequality (\ref{5.4}) is demonstrated in a similar manner to inequality (\ref{5.3}), that is, by showing that trajectories stay away from the surface $\{\cdot\}_S=mv$ for some $0<m<\frac{3}{2}$. The upper bound for $m$ ensures FLRW$(\sigma,a)$ trajectories initially satisfy inequality (\ref{5.4}). Now assume inequalities (\ref{5.1})--(\ref{5.3}) hold and note that these inequalities additionally imply $A<1$ and $G>0$. Since $\{\cdot\}_S=mv$ is equivalent to
	\begin{align}
		G = \frac{(3+3\sigma+m)v}{3+3\sigma v^2},\label{5.5}
	\end{align}
	then by equation (\ref{2.6}) and (\ref{5.5}), we have
	\begin{align*}
		q_A(v;\sigma,m) &= \xi\frac{d}{d\xi}\left(G-\frac{(3+3\sigma+m)v}{3+3\sigma v^2}\right)\bigg\vert_{\{\cdot\}_S=mv}\\
		&= \left(\xi\frac{dG}{d\xi}-\frac{(3+3\sigma+m)(3-3\sigma v^2)}{(3+3\sigma v^2)^2}\xi\frac{dv}{d\xi}\right)\bigg\vert_{\{\cdot\}_S=mv}\\
		&= \frac{(3+3\sigma+m)v}{3+3\sigma v^2}\left[1-\left(\frac{1-A}{A}\right)\frac{(3+3\sigma)[(3+3\sigma+m)(1+v^2)-2(3+3\sigma v^2)]}{2m(3+3\sigma v^2)}\right]\\
		&+ \frac{(3+3\sigma+m)(3-3\sigma v^2)}{(3+3\sigma v^2)^2}\left(\frac{1-v^2}{2\{\cdot\}_D}\right)\left[3\sigma mv+\left(\frac{1-A}{A}\right)\frac{(3+3\sigma)^2\{\cdot\}_N}{4mv}\right]\\
		&= \frac{(3+3\sigma+m)v}{(3+3\sigma v^2)^2}\left[3+3\sigma v^2+\frac{3\sigma m(1-v^2)(3-3\sigma v^2)}{2\{\cdot\}_D}+\left(\frac{1-A}{A}\right)(\{\cdot\}_A+\{\cdot\}_B+\{\cdot\}_C)\right],
	\end{align*}
	where:
	\begin{align*}
		\{\cdot\}_A &= \frac{(3+3\sigma)(3-3\sigma-m)(1-v^2)}{2m},\\
		\{\cdot\}_B &= \frac{(3+3\sigma)^2(1-v^2)(3-3\sigma v^2)\{\cdot\}_N}{8mv^2\{\cdot\}_D},\\
		\{\cdot\}_C &= -(3+3\sigma)v^2.
	\end{align*}
	Let $0<v_*<1$. The objective for this part is to find an $m$ such that $q_A(v;\sigma,m)>0$ for all $0<\sigma<1$ and $0<v<v_*$. Note that it is always possible to choose an $m$ small enough to ensure $v_*<v_I(\sigma,m)$, where $v_I(\sigma,m)$ is the intersection of surfaces (\ref{5.5}) and $\{\cdot\}_D=0$, since
	\begin{align*}
		\lim\limits_{m\to 0}v_I(\sigma,m) = 1.
	\end{align*}
	Now even though it can be shown that $\{\cdot\}_A+\{\cdot\}_B+\{\cdot\}_C>0$ for a certain interval of $v$, it is easier to use $\{\cdot\}_A+\{\cdot\}_B>0$ for the whole interval $0<v<v_I$. This is the case since
	\begin{align*}
		\{\cdot\}_A + \{\cdot\}_B &= \frac{(3+3\sigma)(1-v^2)}{2m}\left[3-3\sigma-m+\frac{(3+3\sigma)(3-3\sigma v^2)\{\cdot\}_N}{4v^2\{\cdot\}_D}\right]\\
		&= \frac{(3+3\sigma)(1-v^2)}{8m(-\{\cdot\}_D)v^2}\left[4(3-3\sigma-m)(-\{\cdot\}_D)v^2-(3+3\sigma)(3-3\sigma v^2)\{\cdot\}_N\right]\\
		&= \frac{3\sigma(3+3\sigma)^2(1-v^2)(3-3v^2+n)}{8(-\{\cdot\}_D)(3+3\sigma v^2)^2}\left[3-3v^2+\sigma(9+n)v^2-\sigma(9+n\sigma)v^4\right]\\
		&> 0,
	\end{align*}
	where $m=n\sigma$ for some $0<n<\frac{3}{2}$. Given that $\{\cdot\}_A+\{\cdot\}_B>0$ and $\{\cdot\}_C<0$, then for $\frac{1}{2}<A<1$ we have
	\begin{align*}
		\left(\frac{1-A}{A}\right)(\{\cdot\}_A+\{\cdot\}_B+\{\cdot\}_C) > \left(\frac{1-A}{A}\right)\{\cdot\}_C > \{\cdot\}_C
	\end{align*}
	Thus for any $0<\sigma<1$ and $0<v<v_*$,
	\begin{align*}
		\lim\limits_{m\to 0}q_A(v;\sigma,m) &= \lim\limits_{n\to 0}q_A(v;\sigma,n\sigma)\\
		&> \lim\limits_{n\to 0}\frac{(3+3\sigma+n\sigma)v}{(3+3\sigma v^2)^2}\left[3+3\sigma v^2+\frac{3n\sigma^2(1-v^2)(3-3\sigma v^2)}{2\{\cdot\}_D}+\{\cdot\}_C\right]\\
		&= \lim\limits_{n\to 0}\frac{(3+3\sigma+n\sigma)(1-v^2)v}{(3+3\sigma v^2)^2}\left[3-\frac{2n\sigma(3+3\sigma)^{-1}(3-3\sigma v^2)(3+3\sigma v^2)^2}{(3-3v^2-n\sigma v^2)^2-\sigma v^2(3-3v^2+n)^2}\right]\\
		&> 0.
	\end{align*}
	Therefore, for any interval $0<v<v_*$ with $v_*<1$, there exists an $0<n<\frac{3}{2}$ such that the surface $\{\cdot\}_S=n\sigma v$ cannot be crossed. Now assume for contradiction that a trajectory crosses the $\{\cdot\}_S=0$ surface. Because $v_I(\sigma,0)=1$ and we assume $\{\cdot\}_D<0$, the trajectory cannot cross the surface $\{\cdot\}_S=0$ at $v=1$, so it must intersect at some point $0<v_{**}< 1$. Given that FLRW$(\sigma,a)$ satisfies $\{\cdot\}_S>n\sigma v$ initially for any $0<n<\frac{3}{2}$ and we can pick a $v_*$ such that $v_{**}<v_*<1$, we know that the surface $\{\cdot\}_S=n\sigma v$ cannot be crossed on the interval $0<v<v_*$, which is a contradiction. Thus under our assumptions, FLRW$(\sigma,a)$ satisfies inequality (\ref{5.4}) and completes the proof.
\end{proof}

\subsection{Existence Theorem}\label{Subsection5.2}

In Section \ref{Section3}, the pure radiation shock wave is constructed numerically, with the acceleration parameter approximated by $a\approx2.58$. Now with Lemma \ref{L5} in place, we are in a position to provide a rigorous existence proof for this shock wave, which constitutes the main result of this paper.
\begin{theorem}
	\label{T7}
	There exists an $a>1$ such that FLRW$(\frac{1}{3},a)$ can be matched to TOV$(\frac{1}{3})$ to form a pure radiation general relativistic shock wave that satisfies the Lax characteristic conditions.
\end{theorem}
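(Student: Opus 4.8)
The plan is to fix $\sigma=\bar\sigma=\frac{1}{3}$ throughout and obtain the required acceleration parameter by a shooting argument. First I would dispose of the Lax condition cheaply. On the Rankine--Hugoniot curve for $\sigma=\bar\sigma$, clearing denominators in (\ref{3.2}) and dividing by $v>0$ gives $v=(3G^2-1)/(2G)$, so $0<v<1$ forces $\frac{1}{\sqrt3}<G(\xi_0)<1$; in particular the matching is subluminal (Lemma \ref{L4}), and one computes $(G(\xi_0)-v(\xi_0))/(1-G(\xi_0)v(\xi_0))=\sigma/G(\xi_0)<\sqrt{\sigma}$, so $\{\cdot\}_D(\xi_0)<0$ and $G(\xi_0)>\sqrt{\bar\sigma}$ both hold automatically. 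Hence Theorem \ref{T6}, case (1), applies: \emph{any} FLRW$(0,\frac{1}{3},a)$--TOV$(\frac{1}{3})$ matching satisfying (\ref{3.1})--(\ref{3.2}) is a Lipschitz shock-wave solution — continuity of the metric components in $(t,r)$ near $\xi=\xi_0$ is immediate, exactly as in the proof of Theorem \ref{T3} — obeying the Lax characteristic conditions. So it remains only to find $a>1$ and $\xi_0>0$ with $A(\xi_0)=1-2M(\frac{1}{3})=\frac{4}{7}$ and (\ref{3.2}) holding.

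For each $a>0$ I would track the FLRW$(0,\frac{1}{3},a)$ trajectory from the rest point $(A,G,v)=(1,0,0)$. By Lemma \ref{L5} (with $\sigma=\frac{1}{3}$, so $1-2M(\sigma)=\frac{4}{7}$) we have $A'<0$, $G'>0$, $v>0$, $\{\cdot\}_S>0$ for as long as the trajectory stays in $\{A>\frac{4}{7}\}\cap\{\{\cdot\}_D<0\}$; thus $A$ decreases strictly monotonically toward $\frac{4}{7}$ and the trajectory can leave this region only across $A=\frac{4}{7}$ (still subsonic) or across the sonic surface $\{\cdot\}_D=0$ at some $A\in(\frac{4}{7},1)$. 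Let $\xi_0(a)$ be the first $\xi$ with $A_a(\xi_0)=\frac{4}{7}$ and set $\Psi(a)$ equal to the left-hand side of (\ref{3.2}) evaluated at $(G_a(\xi_0(a)),v_a(\xi_0(a)))$; the goal is $\Psi(a)=\frac{1}{3}$. At $a=1$ the trajectory is the explicit solution of Corollary \ref{C2}: $A=1-v^2=\frac{4}{7}$ gives $v(\xi_0)=\sqrt{3/7}$, $G(\xi_0)=\sqrt{21}/5$, and (\ref{3.2}) then yields $\Psi(1)=-\frac{1}{5}<\frac{1}{3}$ (en route, at $A=\frac{2}{3}$, this trajectory crosses the sonic surface, which one checks is a genuine sonic point — numerator and denominator of (\ref{2.7}) vanish together — so no singularity is met). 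As $a\to\infty$ the trajectory meets $A=\frac{4}{7}$ at $\xi_0(a)\to0$, where by Definition \ref{D5} $G_a(\xi_0)\to0$ and $v_a(\xi_0)/G_a(\xi_0)\to\frac{1}{2}$, so the leading behaviour of (\ref{3.2}) gives $\Psi(a)\to \frac{2\sigma}{1-\sigma}=1>\frac{1}{3}$; in particular the crossing is subsonic for all large $a$.

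An intermediate value argument in $a$ then finishes the proof: $\Psi$ is continuous on $[1,\infty)$ — the trajectory is a continuous curve in $(A,G,v)$ space depending continuously on $a$, and Lemma \ref{L5} confines it to the compact ``good'' region until it reaches $A=\frac{4}{7}$ — so $\Psi(1)<\frac{1}{3}<\lim_{a\to\infty}\Psi(a)$ produces some $a^\ast>1$ with $\Psi(a^\ast)=\frac{1}{3}$. Since $\frac{1}{3}$ lies in the interior of the range of the Rankine--Hugoniot curve for $\sigma=\bar\sigma$, the crossing at $\xi_0(a^\ast)$ is subsonic, and by the first paragraph the FLRW$(0,\frac{1}{3},a^\ast)$--TOV$(\frac{1}{3})$ matching is the desired Lax-stable pure radiation shock wave.

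I expect the genuine difficulty to lie in this last step, specifically in making the continuity of $\Psi$ rigorous near the sonic surface: the numerics place the matching at $\xi_0\approx 0.706$, pressed right up against $\{\cdot\}_D=0$, where (\ref{2.7}) is singular. One must show $\Psi(a)$ stays well-defined and continuous across any parameter value for which the trajectory merely grazes the sonic surface before reaching $A=\frac{4}{7}$ (exploiting that the singularity is only in the $\xi$-parametrisation, the $(A,G,v)$ curve itself remaining continuous, e.g. after a desingularising change of independent variable), and must rule out the trajectory turning back in $A$ after a singular sonic crossing so that $A=\frac{4}{7}$ is always attained — which is precisely where the monotonicity from Lemma \ref{L5}, valid exactly while the solution is subsonic and above $A=\frac{4}{7}$, does the decisive work.
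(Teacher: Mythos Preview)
Your Lax-stability paragraph and your identification of the shooting strategy match the paper. The genuine gap is in the shooting argument itself, and it is precisely the one you flag at the end but do not close.

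First, $\Psi$ is not defined on all of $[1,\infty)$. You correctly observe that the explicit $a=1$ trajectory crosses the sonic surface at $A=\tfrac{2}{3}>\tfrac{4}{7}$; but for generic $a$ close to $1$ the trajectory will reach $\{\cdot\}_D=0$ at a \emph{singular} point, where the right-hand side of (\ref{2.7}) blows up, and cannot be continued to $A=\tfrac{4}{7}$. Lemma \ref{L5} is of no help here, since its hypotheses include $\{\cdot\}_D<0$: the monotonicity it supplies evaporates exactly at the sonic surface. So the sentence ``Lemma \ref{L5} confines it to the compact `good' region until it reaches $A=\tfrac{4}{7}$'' is false for $a$ near $1$, and the intermediate value argument on $[1,\infty)$ collapses.

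Second, the overshoot end is also not rigorous as stated. Definition \ref{D5} gives the leading-order form of $(A,G,v)$ as $\xi\to 0$ \emph{for fixed $a$}; the higher Taylor coefficients of $A(G)$ and $v(G)$ grow like powers of $a$ (e.g.\ $A^{(4)}(0)\sim a^4$, $v^{(3)}(0)\sim a^2$ in the paper's formulas), so at $G_0(a)\sim 1/a$ the corrections are $O(1)$, not $o(1)$. Hence the conclusions $G_a(\xi_0)\to 0$, $v_a(\xi_0)/G_a(\xi_0)\to\tfrac12$, and $\Psi(a)\to 1$ are unjustified without a separate uniform estimate or a rescaled singular-limit analysis.

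What the paper does instead is exactly the missing quantitative step: it fixes $b=\tfrac{14}{5}$ and proves, by an explicit computer-assisted trapping region built from $33$rd-order Taylor polynomials of $A(G)$ and $v(G)$, that this particular trajectory remains subsonic all the way to $A=\tfrac{4}{7}$ \emph{and} lands above the Rankine--Hugoniot curve there. With a rigorously established subsonic overshoot at a finite $b$, the continuity argument runs entirely inside the region where Lemma \ref{L5} applies: as $a$ decreases from $b$, the exit point on $A=\tfrac{4}{7}$ varies continuously until it meets the sonic curve, which for $\sigma=\bar\sigma=\tfrac{1}{3}$ lies under the Rankine--Hugoniot curve, so the crossing is guaranteed. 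Your $a\to\infty$ shortcut tries to avoid precisely this hard estimate, and that is where the argument breaks.
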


\subsubsection{Outline of Proof}

From Lemma \ref{L5} we know that any FLRW$(\sigma,a)$ trajectory inevitably hits either the sonic surface or the matching surface $A=1-2M(\sigma)$, since by equation (\ref{2.5}) the derivative of $A$ remains bounded above by a value strictly less than zero, that is, the trajectory cannot asymptote to a constant value of $A$. The first part of the proof argues that it is sufficient to find a constant $b>1$ such that the FLRW$(\frac{1}{3},b)$ trajectory overshoots the Rankine--Hugoniot curve when it hits the matching surface. This is sufficient because we know the explicitly given FLRW$(\frac{1}{3},1)$ trajectory undershoots the Rankine--Hugoniot curve when it hits the matching surface, so the continuity of the parameter $a$ guarantees an intersection for some $1<a<b$.

The next part of the proof also follows from Lemma \ref{L5} in the sense that the monotonicity of $G$ permits $A$ and $v$ to be considered as explicit functions of $G$, rather than functions of $\xi$. The remainder of the proof consists of the computationally difficult task of building a trapping region around the FLRW$(\frac{1}{3},b)$ trajectory to guarantee that it overshoots the Rankine--Hugoniot curve.

The trapping region is constructed from a perturbation of high-order Taylor polynomials of $A$ and $v$. These perturbed polynomials are functions of $G$, and bound $A$ and $v$ from above and below all the way to the matching surface. There are multiple parameters involved in constructing such a region, such as the magnitude of perturbation from the Taylor polynomials, the choice of $b$ and the order of the Taylor polynomials. Each choice of parameter affects the choice of the others, so a careful balance must be found between them.

Due to the complexity of equations (\ref{2.5})--(\ref{2.7}), the high-order Taylor polynomials have rational coefficients that are too large to be written in any meaningful way, and doing so would more than double the length of this paper. Instead, a procedure is given for generating these coefficients using symbolic manipulation software, from which they can be analysed and simplified as necessary.

The most difficult part in choosing the parameters, is ensuring the resulting perturbed Taylor region satisfies a set of inequalities that imply vectors on the boundary of the region point into the region, thus trapping any trajectory that begins inside. These choices are made difficult by virtue of the fact that the matching for $\sigma=\bar{\sigma}=\frac{1}{3}$ occurs near the sonic surface, where the Taylor polynomials are approximating a function approaching a singularity.

The final step is demonstrating analytically the choice of perturbed Taylor region is in fact a trapping region containing the FLRW$(\frac{1}{3},b)$ trajectory from $G=0$ to the matching surface. The Lax characteristic conditions then follow by Theorem \ref{T6}.

\subsubsection{The Shooting Argument}

With reference to Figure \ref{F1}, which is a two-dimensional slice of the $(A,G,v)$ phase space, we want the trajectory of FLRW$(\frac{1}{3},b)$ to reach the $A=\frac{4}{7}$ surface before the trajectory intersects the sonic surface. In particular, we want the FLRW$(\frac{1}{3},b)$ trajectory to intersect the $A=\frac{4}{7}$ surface between the Rankine--Hugoniot curve and the sonic surface, which is the region bounded by the dashed line and solid line in Figure \ref{F1}. This corresponds to an overshoot of the Rankine--Hugoniot curve.

For FLRW$(\frac{1}{3},a)$ to match with TOV$(\frac{1}{3})$ to form a general relativistic shock wave, then, by Lemma \ref{L2} and Definition \ref{D10}, there must exist a $\xi_0$ such that FLRW$(\frac{1}{3},a)$ satisfies:
\begin{align}
	A(\xi_0) &= \frac{4}{7},\label{5.6}\\
	v(\xi_0) &= \Gamma_{RH}\left(G(\xi_0);\frac{1}{3},\frac{1}{3}\right).\label{5.7}
\end{align}
We know from Theorem \ref{T3} that FLRW$(\frac{1}{3},1)$ cannot form a general relativistic shock wave with TOV$(\frac{1}{3})$, since $\sigma=\bar{\sigma}=\frac{1}{3}$ is not a solution of $\bar{\sigma}=H(\sigma)$. Instead, the intersection of the FLRW$(\frac{1}{3},1)$ trajectory with the $A=\frac{4}{7}$ surface results in
\begin{align*}
	v(\xi_0) < \Gamma_{RH}\left(G(\xi_0);\frac{1}{3},\frac{1}{3}\right),
\end{align*}
that is, the FLRW$(\frac{1}{3},1)$ trajectory undershoots the Rankine--Hugoniot curve for $\bar{\sigma}=\frac{1}{3}$. Note that the explicitly known FLRW$(\frac{1}{3},1)$ solution is able to cross the sonic surface without becoming singular due to the cancellation of $\{\cdot\}_D$ in equation (\ref{2.7}) at the sonic point, which we recall is the point of intersection with the sonic surface. General FLRW$(\sigma,a)$ solutions typically become singular at the sonic point.

Now suppose that there exists a $b>1$ such that the FLRW$(\frac{1}{3},b)$ trajectory intersects the plane $A=\frac{4}{7}$ with
\begin{align}
	v(\xi_0) > \Gamma_{RH}\left(G(\xi_0);\frac{1}{3},\frac{1}{3}\right),\label{5.8}
\end{align}
then providing the transition of the FLRW$(\frac{1}{3},1)$ trajectory to the FLRW$(\frac{1}{3},b)$ trajectory crosses the Rankine--Hugoniot curve, there exists an $1<a<b$ such that (\ref{5.6}) and (\ref{5.7}) are satisfied. An example of this process is demonstrated numerically in Figure \ref{F2}. Lemma \ref{L5} establishes the fact that if the FLRW$(\frac{1}{3},a)$ trajectory remains in the subsonic region, then it must eventually hit the $A=\frac{4}{7}$ plane. The continuous dependence of FLRW$(\frac{1}{3},a)$ on the parameter $a$ means that there is a continuous transition from FLRW$(\frac{1}{3},1)$ to FLRW$(\frac{1}{3},b)$, at least up until the trajectory hits the $A=\frac{4}{7}$ plane or hits the sonic surface. This continuous transition, along with Lemma \ref{L5}, guarantees the crossing of the Rankine--Hugoniot curve in the $\sigma=\bar{\sigma}=\frac{1}{3}$ case, since the transition from hitting the sonic surface to hitting the $A=\frac{4}{7}$ plane occurs on the intersection of the sonic surface with the $A=\frac{4}{7}$ plane. Thus it is sufficient to rigorously demonstrate the existence of an FLRW$(\frac{1}{3},b)$ solution that satisfies (\ref{5.6}) and (\ref{5.8}). We know from Figure \ref{F2} that a numerical approximation of the FLRW$(\frac{1}{3},2.8)$ trajectory passes above the Rankine--Hugoniot curve, so for this reason, and another reason explained later in the proof, existence is considered for
\begin{align*}
	b = \frac{17}{6} > 2.8.
\end{align*}

\subsubsection{The Monotonicity Simplification}

The FLRW$(\frac{1}{3},a)$ trajectories originate from the unstable fixed point $(A,G,v)=(1,0,0)$, which means distinct trajectories will generically diverge from one another. If a surface is constructed closely enclosing a trajectory, then the vectors generated by system (\ref{2.5})--(\ref{2.7}) on this surface will point inward if the surface expands or loosens faster than the trajectories diverge from one another. Conversely, if the surface remains too tightly bound around a trajectory, then one would expect the vectors on the surface of the trapping region to point outward. These facts guide the construction of a trapping region around the trajectory of FLRW$(\frac{1}{3},\frac{17}{6})$, which is how this trajectory is shown to overshoot the Rankine--Hugoniot curve. In particular, these facts highlight a balancing act in the construction of the trapping region, since on the one hand, a looser trapping region will be easier to show that trajectories remain within it, whilst a tighter trapping region will be easier to show where the trajectory will land on the $A=\frac{4}{7}$ plane.

Since Lemma \ref{L5} establishes the monotonicity of $G$ as a function of $\xi$, $A$ and $v$ can be written as functions of $G$, with equations (\ref{2.5})--(\ref{2.7}) becoming:
\begin{align}
	\frac{dA}{dG} &= -\left(\xi\frac{dG}{d\xi}\right)^{-1}\frac{(3+3\sigma)(1-A)v}{\{\cdot\}_S}\label{5.9},\\
	\xi\frac{dG}{d\xi} &= -G\left[\left(\frac{1-A}{A}\right)\frac{(3+3\sigma)[(1+v^2)G-2v]}{2\{\cdot\}_S}-1\right],\notag\\
	\frac{dv}{dG} &= -\left(\xi\frac{dG}{d\xi}\right)^{-1}\left(\frac{1-v^2}{2\{\cdot\}_D}\right)\left[3\sigma\{\cdot\}_S+\left(\frac{1-A}{A}\right)\frac{(3+3\sigma)^2\{\cdot\}_N}{4\{\cdot\}_S}\right].\label{5.10}
\end{align}
In this light, the trajectory of FLRW$(\frac{1}{3},\frac{17}{6})$ can be represented as $(A(G),v(G))$, with $G$ parameterising the progress of the trajectory towards the $A=\frac{4}{7}$ plane. This step provides a considerable simplification, since any trapping region now only needs to bound $A$ and $v$.

\subsubsection{Generating the Taylor Polynomials}

To generate the Taylor polynomials of $A(G)$ and $v(G)$ from equations (\ref{5.9}) and (\ref{5.10}), we begin with Definition \ref{D6}, which defines FLRW$(\sigma,a)$ to leading order in $\xi$ by:
\begin{align*}
	A(\xi) &= 1 - \frac{1}{4}a^2\xi^2 + O_{\xi\to0}(\xi^4),\\
	G(\xi) &= \frac{1}{4}(3+3\sigma)\xi + O_{\xi\to0}(\xi^3),\\
	v(\xi) &= \frac{1}{2}\xi + O_{\xi\to0}(\xi^3).
\end{align*}
Thus to leading order in $G$, we have:
\begin{align*}
	A(G) &= 1 - \frac{4a^2}{(3+3\sigma)^2}G^2 + O_{G\to0}(G^4),\\
	v(G) &= \frac{2}{3+3\sigma}G + O_{G\to0}(G^3).
\end{align*}
Now to obtain the third-order polynomials, we first substitute:
\begin{align*}
	A(G) &= 1 - \frac{4a^2}{(3+3\sigma)^2}G^2,\\
	v(G) &= \frac{2}{3+3\sigma}G + \frac{v^{(3)}(0)}{3!}G^3,
\end{align*}
into equations (\ref{5.9}) and (\ref{5.10}) and then solve for $v^{(3)}(0)$ to eliminate the leading order term. This gives
\begin{align*}
	\frac{v^{(3)}(0)}{3!} = \frac{2(1-a^2+14\sigma-4a^2\sigma+33\sigma^2-3a^2\sigma^2)}{5\sigma(3+3\sigma)^3}.
\end{align*}
The same procedure can be used to determine $A^{(4)}(0)$, which is given by
\begin{align*}
	\frac{A^{(4)}(0)}{4!} = -\frac{4a^2(3-3a^2+42\sigma-22a^2\sigma+39\sigma^2+21a^2\sigma^2)}{5\sigma(3+3\sigma)^4}.
\end{align*}
Now this process can be continued to determine $v^{(5)}(0)$, then $A^{(6)}(0)$ and so on. Symbolic manipulation software is required for higher order terms, since for $v^{(5)}(0)$ we already find
\begin{multline*}
	-\frac{5\sigma^2(3+3\sigma)^5}{2}\frac{v^{(5)}(0)}{5!} = 1 - 40\sigma - 642\sigma^2 - 2568\sigma^3 - 3087\sigma^4 - a^4 + 48a^2\sigma - 8a^4\sigma + 348a^2\sigma^2\\ + 768a^2\sigma^3 + 468a^2\sigma^4 + 14a^4\sigma^2 + 120a^4\sigma^3 + 99a^4\sigma^4,
\end{multline*}
and by the time we reach $v^{(33)}(0)$ we require almost 30000 characters to write this single coefficient.

\subsubsection{Perturbing the Taylor Polynomials}

The next step is to construct a trapping region around the trajectory components $A$ and $v$. This will be achieved using a modification of the Taylor polynomials of $A$ and $v$ about $G=0$. In this light, define:
\begin{align*}
	A_M(G) &= \sum_{n=0}^{N-1}\frac{A^{(2n)}(0)}{(2n)!}G^{2n} + M_A G^{2N},\\
	A_m(G) &= \sum_{n=0}^{N-1}\frac{A^{(2n)}(0)}{(2n)!}G^{2n} + m_A G^{2N},\\
	v_M(G) &= \sum_{n=0}^{N-1}\frac{v^{(2n+1)}(0)}{(2n+1)!}G^{2n+1} + M_v G^{2N+1},\\
	v_m(G) &= \sum_{n=0}^{N-1}\frac{v^{(2n+1)}(0)}{(2n+1)!}G^{2n+1} + m_v G^{2N+1},
\end{align*}
where $M_A$, $m_A$, $M_v$ and $m_v$ are chosen so that:
\begin{align*}
	m_A &< \frac{A^{(2N)}(0)}{(2N)!} < M_A,\\
	m_v &< \frac{v^{(2N+1)}(0)}{(2N+1)!} < M_v.
\end{align*}
The functions $A_M$ and $A_m$ are used to bound $A$ from above and below respectively, with $v_M$ and $v_m$ providing analogous bounds for $v$. One of the objectives is to show
\begin{align}
	v_m(G_0) > \Gamma_{RH}\left(G_0;\frac{1}{3},\frac{1}{3}\right),\label{5.11}
\end{align}
where $G_0$ is found implicitly through
\begin{align}
	A_M(G_0) = \frac{4}{7}.\label{5.12}
\end{align}
This is so the lowest point of the trapping region bounding $v$ remains above the Rankine--Hugoniot curve at $G_0$. The trapping region bounding $v$ is depicted in Figure \ref{F5}. The value of $G_0$ is determined by the intersection of $A_M$ with the $A=\frac{4}{7}$ plane. The reason $A_M$ is chosen over $A_m$, is because $A_M$ reaches the $A=\frac{4}{7}$ plane later than $A_m$, that is, at a larger value of $G$. So defining $G_0$ as the point in which $A_M=\frac{4}{7}$ provides an upper bound for value of $G$ in which $A=\frac{4}{7}$. The trapping region bounding $A$ is depicted in Figure \ref{F4}.

\subsubsection{Determining the Perturbed Taylor Region}

For large enough $N$, it is possible to find values for $M_A$, $m_A$, $M_v$ and $m_v$ such that (\ref{5.11}) and (\ref{5.12}) are satisfied and inequalities:
\begin{align}
	A_m(G) &< A < A_M(G),\label{5.13}\\
	v_m(G) &< v < v_M(G),\label{5.14}
\end{align}
hold on $0<G<G_0$. Although, as will be seen in the last step, the values are actually selected to satisfy (\ref{5.11}), (\ref{5.12}) and (\ref{5.17})--(\ref{5.20}) instead, which then imply (\ref{5.11})--(\ref{5.14}). The values are selected through extensive trial and error, using numerical approximations of $A$ and $v$ as a guide. Note that the Taylor polynomials of $A$ and $v$ converge quicker for smaller values of $b$, but larger values of $b$ allow for (\ref{5.11}) to be more easily satisfied. This is why $b=\frac{17}{6}$ is chosen, as it provides a sufficient compromise. Through this process it is found that $N=16$ and the following values satisfy (\ref{5.11}), (\ref{5.12}) and (\ref{5.17})--(\ref{5.20}):
\begin{align*}
	M_A &= (1+2^{-6})\frac{A^{(32)}(0)}{(32)!},\\
	m_A &= 2^{-1}\frac{A^{(32)}(0)}{(32)!},\\
	M_v &= 2^{-1}\frac{v^{(33)}(0)}{(33)!},\\
	m_v &= 2^5\frac{v^{(33)}(0)}{(33)!},
\end{align*}
noting that $M_v$ and $m_v$ are chosen in the knowledge that $v^{(33)}(0)$ is negative.

With $M_A$, $m_A$, $M_v$ and $m_v$ chosen, the Taylor polynomials of $A$ and $v$ can be computed and $A_M$, $A_m$, $v_M$ and $v_m$ become known explicitly. The graphs of these bounding functions are given in Figure \ref{F4} and Figure \ref{F5}.
\begin{figure}[h]
	\begin{center}
		\includegraphics[width=10cm]{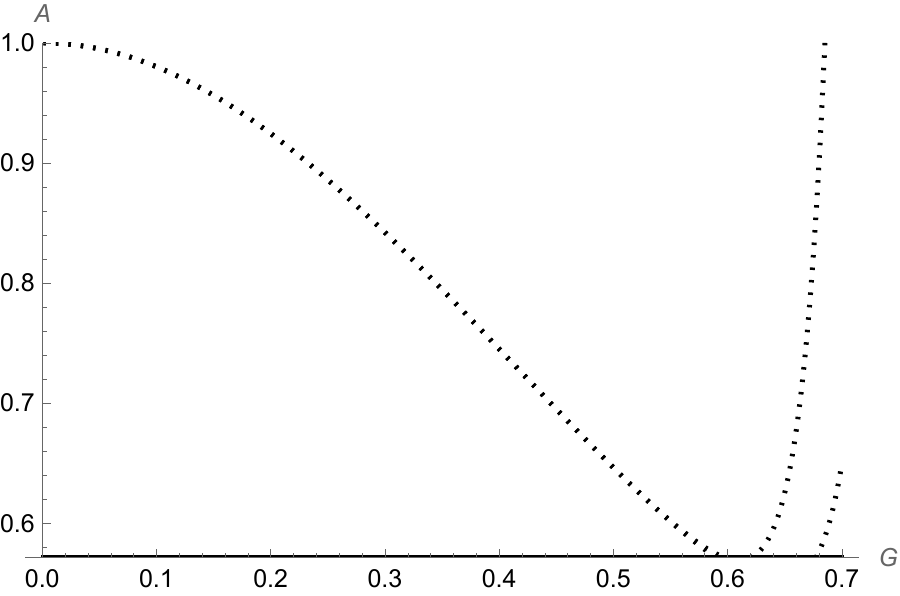}
	\end{center}
	\caption{This figure depicts $A_M(G)$ and $A_m(G)$ by the top and bottom dotted curves respectively. Note that these curves are almost indistinguishable until they cross the $A=\frac{4}{7}$ plane, which is given by the unbroken line at the bottom.}
	\label{F4}
\end{figure}
\begin{figure}[h]
	\begin{center}
		\includegraphics[width=10cm]{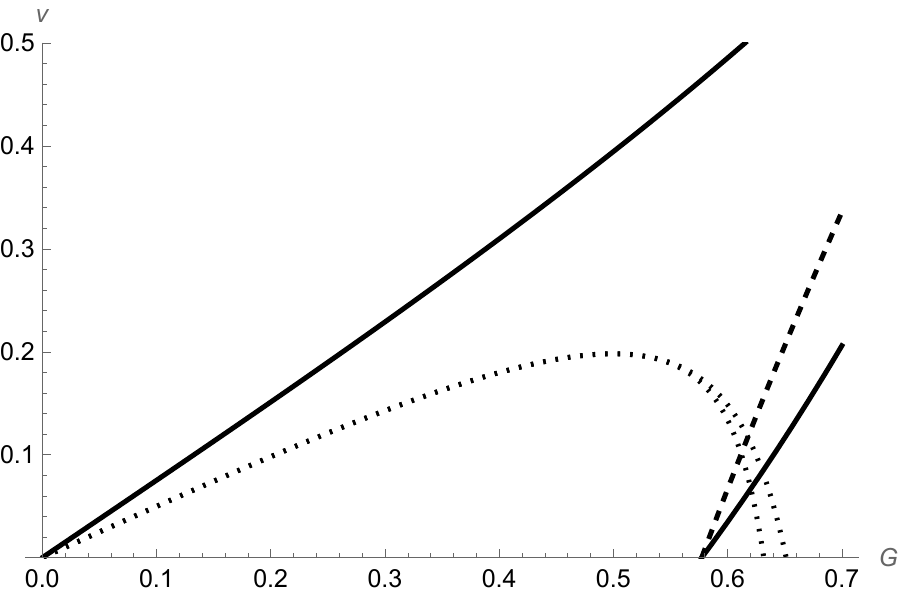}
	\end{center}
	\caption{This figure depicts $v_M(G)$ and $v_m(G)$ by the top and bottom dotted curves respectively. The Rankine--Hugoniot curve is given by the dashed curve and the singular and sonic surfaces are given as unbroken curves.}
	\label{F5}
\end{figure}

Even at 33rd order, Figure \ref{F5} shows that $v_M$ and $v_m$ noticeably diverge after passing the Rankine--Hugoniot curve. This is due to the trajectory approaching the sonic surface, where the solutions generically become singular, resulting in a slower convergence of the Taylor polynomials. With $A_M$ known explicitly, relation (\ref{5.12}) can be solved for $G_0$, at least approximately, to yield
\begin{align*}
	G_0 \approx 0.598 < \frac{3}{5}.
\end{align*}
However, we only need an upper bound on $G_0$, so we will use $\frac{3}{5}$ as the upper bound. This simple value has the advantage that we can compute $A_M(\frac{3}{5})$ exactly using symbolic manipulation software. This computation yields a negative rational number, analytically confirming $G_0<\frac{3}{5}$. We can also compute $v_m(\frac{3}{5})$, which then analytically confirms inequality (\ref{5.11}).

\subsubsection{Demonstrating the Perturbed Taylor Region is a Trapping Region}

The final step is to show that inequalities (\ref{5.13}) and (\ref{5.14}) hold on the interval $0<G<G_0$. To do this, the structure of equations (\ref{5.9}) and (\ref{5.10}) can be exploited, that is, it is possible to show:
\begin{align}
	\frac{\partial}{\partial v}\frac{dA}{dG} &< 0,\label{5.15}\\
	\frac{\partial}{\partial A}\frac{dv}{dG} &> 0,\label{5.16}
\end{align}
within the region bounded by (\ref{5.13}) and (\ref{5.14}). Starting with inequality (\ref{5.15}), we have
\begin{align*}
	\frac{\partial}{\partial v}\frac{dA}{dG} &= -\frac{4(1-A)v}{\{\cdot\}_S}\frac{\partial}{\partial v}\left(\xi\frac{dG}{d\xi}\right)^{-1} - \left(\xi\frac{dG}{d\xi}\right)^{-1}\frac{\partial}{\partial v}\frac{4(1-A)v}{\{\cdot\}_S}\\
	&= -\frac{4(1-A)G^2}{\{\cdot\}_S^3}\left(\xi\frac{dG}{d\xi}\right)^{-2}\left[4v^2(3-v^2)\left(\frac{1-A}{A}\right)+(3-v^2)\{\cdot\}_S-2(1-v^2)\{\cdot\}_S\left(\frac{1-A}{A}\right)\right]\\
	&< 0,
\end{align*}
which holds in the more general region described by $\frac{2}{5}<A<1$, $v>0$, $\{\cdot\}_S>0$ and $\{\cdot\}_D<0$. For inequality (\ref{5.16}), we have
\begin{align*}
	\frac{\partial}{\partial A}\frac{dv}{dG} &= -\left(\frac{1-v^2}{2\{\cdot\}_D}\right)\left[\{\cdot\}_S+4\left(\frac{1-A}{A}\right)\frac{\{\cdot\}_N}{\{\cdot\}_S}\right]\frac{\partial}{\partial A}\left(\xi\frac{dG}{d\xi}\right)^{-1}\\
	&- \left(\xi\frac{dG}{d\xi}\right)^{-1}\left(\frac{1-v^2}{2\{\cdot\}_D}\right)\frac{\partial}{\partial A}\left[\{\cdot\}_S+4\left(\frac{1-A}{A}\right)\frac{\{\cdot\}_N}{\{\cdot\}_S}\right]\\
	&= \frac{G}{A^2\{\cdot\}_S}\left(\xi\frac{dG}{d\xi}\right)^{-2}\left(\frac{1-v^2}{2\{\cdot\}_D}\right)\left[(2(1+v^2)G-4v)\{\cdot\}_S+4\{\cdot\}_N\right]\\
	&> 0,
\end{align*}
which holds in the region described by $v>0$, $\{\cdot\}_D<0$ and
\begin{align*}
	\left(2(1+v^2)G-4v\right)\{\cdot\}_S + 4\{\cdot\}_N < 0.
\end{align*}
This region is slightly smaller than the region described by $v>0$, $\{\cdot\}_D<0$ and $\{\cdot\}_S>0$, but includes the region bounded by (\ref{5.13}) and (\ref{5.14}) nonetheless. Now by Taylor's theorem, we know that (\ref{5.13}) and (\ref{5.14}) are satisfied on the interval $0<G<G_\epsilon$ for some small $G_\epsilon>0$. To demonstrate (\ref{5.13}) and (\ref{5.14}) on the interval $0<G<G_0$, it is thus sufficient to demonstrate:
\begin{align}
	\frac{d}{dG}(A_M-A)\Big\vert_{\substack{A=A_M\\ v=v_m}} &\geq 0,\label{5.17}\\
	\frac{d}{dG}(A-A_m)\Big\vert_{\substack{A=A_m\\ v=v_M}} &\geq 0,\label{5.18}\\
	\frac{d}{dG}(v_M-v)\Big\vert_{\substack{v=v_M\\ A=A_M}} &\geq 0,\label{5.19}\\
	\frac{d}{dG}(v-v_m)\Big\vert_{\substack{v=v_m\\ A=A_m}} &\geq 0,\label{5.20}
\end{align}
on the interval $G_\epsilon\leq G<G_0$. Note that the left hand sides of (\ref{5.17})--(\ref{5.20}) are functions of $A$, $v$ and $G$, so (\ref{5.15}) can be used to determine the most conservative value of $v$ in (\ref{5.17}) and (\ref{5.18}), and (\ref{5.16}) can be used to determine the most conservative value of $A$ in (\ref{5.19}) and (\ref{5.20}). In particular, the most conservative choice out of $v_M$ and $v_m$ for (\ref{5.17}) is $v_m$ and the most conservative choice for (\ref{5.18}) is $v_M$. Likewise, the the most conservative choice out of $A_M$ and $A_m$ for (\ref{5.19}) is $A_M$ and the most conservative choice for (\ref{5.20}) is $A_m$. This can be interpreted as remaining within the right wall of the trapping region implies remaining below the ceiling, remaining below the ceiling implies remaining within the left wall, remaining within the left wall implies remaining above the floor and remaining above the floor implies remaining within the right wall. Such an interpretation can be summarised as so:
\begin{align*}
	\begin{array}{ccc}
		A < A_M & \Rightarrow & v < v_M\\
		& &\\
		\Uparrow & & \Downarrow\\
		& &\\
		v > v_m & \Leftarrow & A > A_m
	\end{array}
\end{align*}
Thus if these implications are shown, a trajectory that begins within the trapping region remains within the trapping region. Now using the aforementioned conservative choices, the left hand sides of (\ref{5.17})--(\ref{5.20}) become explicitly known functions of $G$, which are depicted in Figures \ref{F6} and \ref{F7}. These figures also show where the functions become negative, if at all. A rigorous proof demonstrating the positivity of these functions on the interval $0<G<G_0$ is given in Appendix \ref{A.2} and involves the use of interval arithmetic. Therefore (\ref{5.13}) and (\ref{5.14}) hold on the interval $0<G<G_0$, and since the Lax characteristic conditions follow by Theorem \ref{T6}, the proof is complete.
\begin{figure}[h]
	\begin{center}
		\includegraphics[trim={0 0 0 0.5cm},clip,width=10cm]{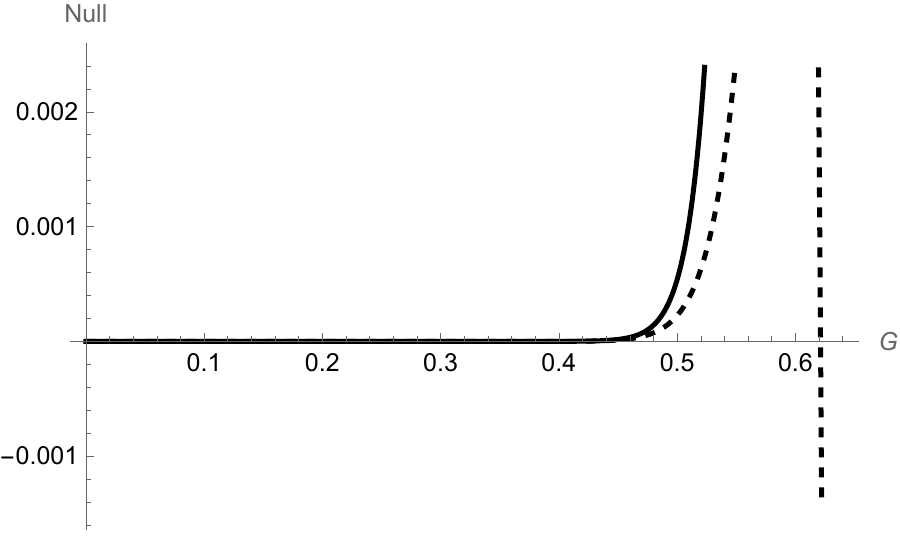}
	\end{center}
	\caption{This figure depicts $\frac{d}{dG}(A_M-A)\big\vert_{\substack{A=A_M\\ v=v_m}}$ and $\frac{d}{dG}(A-A_m)\big\vert_{\substack{A=A_m\\ v=v_M}}$ as unbroken and dashed curves respectively.}
	\label{F6}
\end{figure}
\begin{figure}[h]
	\begin{center}
		\includegraphics[trim={0 0 0 0.5cm},clip,width=10cm]{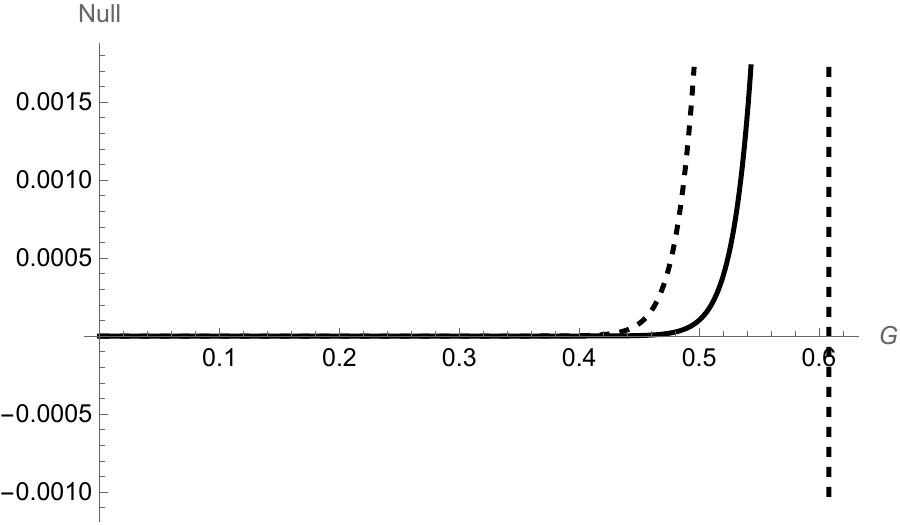}
	\end{center}
	\caption{This figure depicts $\frac{d}{dG}(v_M-v)\big\vert_{\substack{v=v_M\\ A=A_M}}$ and $\frac{d}{dG}(v-v_m)\big\vert_{\substack{v=v_m\\ A=A_m}}$ as unbroken and dashed curves respectively.}
	\label{F7}
\end{figure}

\section{Concluding Remarks}\label{Section6}

Throughout the preceding sections, the necessary machinery has been introduced, and in some places developed, to enable the construction of a new family of exact general relativistic shock waves that exhibit an accelerated expansion. This construction resolves the open problem of determining the expanding waves created behind a shock-wave explosion into a static isothermal sphere with an inverse square density profile, and in doing so, partially resolves a long-standing problem posed by Cahill and Taub. We saw in Section \ref{Section2} that this family of shock waves is one derivative less regular in Schwarzschild coordinates than it actually is, and that any delta function sources are cancelled within the Einstein tensor. Section \ref{Section3} built the machinery required for a phase-space analysis and determined the acceleration parameter of the pure radiation FLRW--TOV shock wave. Section \ref{Section4} extended the analysis to subluminal expanding waves and Section \ref{Section5} brought all this machinery together to establish the existence of a new family of general relativistic shock waves, with a rigorous proof given in the physically significant pure radiation case.

Given that it is possible to rigorously demonstrate the existence of a pure radiation shock wave, the obvious follow-up question is whether it is possible to rigorously demonstrate the existence of the full two-parameter family of FLRW$(\sigma,a)$--TOV$(\bar{\sigma})$ shock waves, with $\sigma$ included in this parameter count. For a certain range of values of $\sigma$ and $\bar{\sigma}$ there is every reason to expect this to be possible.
\begin{conjecture}
	For $0<\bar{\sigma}\leq\sigma\leq\frac{1}{3}$, there exists an $a>0$ such that FLRW$(\sigma,a)$ can be matched to TOV$(\bar{\sigma})$ to form a general relativistic shock wave.
\end{conjecture}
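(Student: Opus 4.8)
The plan is to run, for each fixed $\sigma \in (0,\tfrac13]$, a shooting argument in the acceleration parameter $a$, exactly as in the proof of Theorem \ref{T7}, but now with a variable exterior parameter $\bar{\sigma}$. By Lemma \ref{L2}, FLRW$(0,\sigma,a)$ matches TOV$(\bar{\sigma})$ to form a shock wave precisely when the trajectory $(A,G,v)$ meets the Rankine-Hugoniot curve $\Gamma_{RH}(\cdot;\sigma,\bar{\sigma})$ inside the plane $A = 1-2M(\bar{\sigma})$. Since $M$ is increasing on $(0,1)$, the hypothesis $\bar{\sigma}\le\sigma$ gives $1-2M(\bar{\sigma})\ge 1-2M(\sigma)$, so the matching plane sits above the limiting TOV$(\sigma)$ plane, and by (\ref{3.5}) it lies in the slab $A\ge\tfrac12$; moreover $\bar{\sigma}\le\sigma$ is exactly the inequality under which the $v>0$ estimate in the proof of Lemma \ref{L5} survives on the whole slab $1-2M(\bar{\sigma})\le A<1$. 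Hence a routine extension of Lemma \ref{L5} shows that, as long as the FLRW$(0,\sigma,a)$ trajectory stays subsonic, $A$ decreases monotonically, $G$ increases monotonically, $v>0$ and $\{\cdot\}_S>0$, so the trajectory reaches the matching plane transversally at a well-defined $G=G_0(a)$, $v=v_0(a)>0$. Reparametrising by $G$ as in Section \ref{Section5}, the quantity
\[
\Delta(a) \;=\; v_0(a) \;-\; \Gamma_{RH}\!\bigl(G_0(a);\sigma,\bar{\sigma}\bigr)
\]
is continuous in $a$ by continuous dependence of the ODE on parameters together with transversality of the crossing, and any $a$ with $\Delta(a)=0$ yields a shock wave.

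One end of the shooting interval is $a=1$, where FLRW$(0,\sigma,1)$ is known explicitly: by Proposition \ref{P4} the crossing occurs at $v_0(1)=\sqrt{2M(\bar{\sigma})}$ and a correspondingly explicit $G_0(1)$, so the left-hand side of (\ref{3.2}) at the crossing is an explicit rational function $\Psi(\sigma,\bar{\sigma})$. By Theorem \ref{T3}, $\Psi(\sigma,\bar{\sigma})=\bar{\sigma}$ iff $\bar{\sigma}=H(\sigma)$, equivalently $\sigma=\tfrac{\bar{\sigma}(7+\bar{\sigma})}{3(1-\bar{\sigma})}$; and $\Psi-\mathrm{id}$, vanishing only on that curve, has constant sign on each of the two components of $\{\bar{\sigma}\ne H(\sigma)\}$, which (using monotonicity of the jump-condition expression in $v$ at fixed $G$) fixes the sign of $\Delta(1)$ there. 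For $\bar{\sigma}=H(\sigma)$ the conclusion is Theorem \ref{T3} itself, and for $\bar{\sigma}=\sigma=\tfrac13$ it is Theorem \ref{T7}. It remains to produce, on the opposite side, a value $a=b$ with the opposite sign of $\Delta$; the intermediate value theorem then gives an $a$ with $\Delta(a)=0$, and since that crossing occurs while the trajectory is still subsonic we have $\{\cdot\}_D(\xi_0)<0$, while the trapping region also keeps $G(\xi_0)<1$, so the shock speed is subluminal (Lemma \ref{L4}) and the Lax characteristic conditions follow from Theorem \ref{T6} (case 1 if $\sigma=\bar{\sigma}$, case 3 if $\sigma>\bar{\sigma}$).

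Producing that opposite-sign value is the crux, and here the proof must imitate the trapping-region construction of Theorem \ref{T7}: one fixes a suitable $b$ — larger than $1$ in the regime $\bar{\sigma}>H(\sigma)$, in the spirit of $b=\tfrac{14}{5}$ for the pure radiation case, and smaller than $1$ in the regime $\bar{\sigma}<H(\sigma)$, presumably exploiting the near-flat limit $a\to 0^+$, in which the interior density vanishes and the matching value of the exterior parameter should tend to $0$ — builds perturbed high-order Taylor polynomials $A_m,A_M,v_m,v_M$ in $G$ about $G=0$ from (\ref{5.9})--(\ref{5.10}), and verifies the wall inequalities (the analogues of (\ref{5.15})--(\ref{5.20})) trapping the FLRW$(0,\sigma,b)$ trajectory inside $A_m<A<A_M$, $v_m<v<v_M$ all the way to the matching plane, with $v_m$ there on the far side of $\Gamma_{RH}$. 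As in Theorem \ref{T7}, the Taylor coefficients are generated by symbolic computation from (\ref{5.9})--(\ref{5.10}) and are far too large to display.

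The main obstacle is twofold. First, one needs these trapping regions \emph{uniformly} in $(\sigma,\bar{\sigma})$ over the triangle $0<\bar{\sigma}\le\sigma\le\tfrac13$; the natural route is a continuation argument — the set of admissible $(\sigma,\bar{\sigma})$ is open by continuous dependence and transversality, so one must show it is also relatively closed, which requires a priori control (no escape onto the sonic surface before the matching plane, and nondegeneracy $\partial_a\Delta\ne 0$ along the solution branch) that degenerates only at the boundary $\bar{\sigma}=\sigma$, where the matching plane collapses onto the limiting TOV plane and Lemma \ref{L5} ceases to guarantee arrival. Second, for part of the range the crossing sits close to the sonic surface $\{\cdot\}_D=0$, where the solution is nearly singular and the Taylor polynomials converge slowly, so — just as for $b=\tfrac{14}{5}$ at $\sigma=\tfrac13$ — the required polynomial degree and perturbation sizes must be chosen by careful numerically-guided trial and error, now as functions of $(\sigma,\bar{\sigma})$; the restriction $\sigma\le\tfrac13$ is what keeps this delicate balance feasible, with the boundary case $\sigma=\tfrac13$ already settled by Theorem \ref{T7}.
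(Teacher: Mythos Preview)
The statement is a \emph{conjecture}: the paper does not prove it. In Section~\ref{Section6} the paper explicitly leaves it open and sketches exactly the strategy you outline --- modify the trapping-region argument of Theorem~\ref{T7} for each fixed pair $(\sigma,\bar\sigma)$, then try to patch these together over the parameter triangle, noting that ``the difficulty arises when generalising the proof from fixed parameter values to two-dimensional parameter spaces.'' Your proposal is thus not competing with a proof in the paper; it is an expanded version of the paper's own heuristic discussion of how a proof might proceed.

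As a proof, your outline has genuine gaps, which you yourself flag. The shooting argument needs an $a=b$ with $\Delta(b)$ of the opposite sign to $\Delta(1)$, and you do not establish this: in the regime $\bar\sigma>H(\sigma)$ you invoke an unconstructed analogue of the $b=\tfrac{14}{5}$ trapping region, and in the regime $\bar\sigma<H(\sigma)$ you only gesture at a small-$a$ limit without showing the trajectory remains subsonic long enough to reach the matching plane. Lemma~\ref{L5} gives monotonicity only while $\{\cdot\}_D<0$; it does not guarantee arrival at $A=1-2M(\bar\sigma)$, so for some $a$ the trajectory may hit the sonic surface first and $G_0(a),v_0(a)$ --- hence $\Delta(a)$ --- are undefined, undermining the continuity you need for the intermediate value theorem. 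The paper's own shooting argument handles this (in the pure radiation case) by observing that the transition from ``hits the sonic surface'' to ``hits the matching plane'' occurs on their intersection, which lies below the Rankine--Hugoniot curve; you would need the analogous geometric fact for general $(\sigma,\bar\sigma)$. These are precisely the obstructions that lead the paper to state the result as a conjecture rather than a theorem.
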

The resolution of this conjecture is one avenue of future research. The continuous dependence of the solution trajectories on the parameters means that existence is all but guaranteed for $\bar{\sigma},\sigma\approx\frac{1}{3}$ and $\bar{\sigma}\approx H(\sigma)$. Moreover, the existence proof in the pure radiation case is readily modified to demonstrate existence for any fixed pair $0<\bar{\sigma}\leq\sigma\leq\frac{1}{3}$. The difficulty arises when generalising the proof from fixed parameter values to two-dimensional parameter spaces, since conservative estimates need to be satisfied for all values of $\sigma$ and $\bar{\sigma}$ in such spaces. It is likely to be possible to construct such a proof by patching together many subproofs, demonstrating existence in small two-dimensional parameter spaces, although this process may be rather tedious.

Another avenue of future research is in regard to the possible cosmological implications of FLRW$(\sigma,a)$--TOV$(\bar{\sigma})$ shock waves. It is shown in Section \ref{Section3} that the pure radiation shock wave yields an acceleration parameter value of $a\approx2.58$. What this wave would evolve into at the end of the Radiation Dominated Epoch and what the resulting accelerated expansion would be in the Matter Dominated Epoch are open problems. The discussions of \cite{ST2004} and \cite{ST2012A} imply that FLRW$(\sigma,a)$--TOV$(\bar{\sigma})$ shock waves have shock fronts that lie within the Hubble radius, so these shock waves are not viable cosmological models in the Radiation Dominated Epoch since evidence of a vast primordial shock wave would already be apparent. As mentioned in the introduction, Smoller and Temple demonstrate in \cite{ST2004} that it is possible to construct a general relativistic shock wave, with a shock surface beyond the Hubble radius, by modelling the entire Universe as a finite mass explosion within the Schwarzschild radius of a time-reversed black hole. However, this is only completed for the explicitly known FLRW$(\sigma,1)$ spacetime on the interior and not with a pure radiation equation of state each side of the shock surface.

The possibility remains to construct a pure radiation general relativistic shock wave with a shock surface beyond the Hubble radius and determine the resulting rate of expansion. If after transition into the Matter Dominated Epoch the predicted rate of expansion lies within observed estimates, then this mechanism offers a mathematically independent derivation for the accelerated expansion observed today without a cosmological constant, and thus, without dark energy.

\appendix

\section{Appendix}

\subsection{Section 2 Proofs}\label{A.1}

\subsubsection*{Section 2.2}

\begin{proof}[Proof of Proposition \ref{P1}]
	Since it is known that the subfamily of static similarity solutions are unique, it is sufficient to demonstrate that solutions with zero Schwarzschild coordinate velocity are static. In this light, substituting $v\equiv0$ into equation (\ref{2.5}) implies $A\equiv A_0$ for some constant $A_0$. Furthermore, substituting $v\equiv0$ into equation (\ref{2.7}) requires
	\begin{align*}
		9\sigma - \frac{1}{4}(3+3\sigma)^2\left(\frac{1-A_0}{A_0}\right) = 0
	\end{align*}
	to ensure $v'\equiv0$. This means $A_0$ is given as:
	\begin{align*}
		A_0(\sigma) = 1 - 2M(\sigma),
	\end{align*}
	where
	\begin{align*}
		M(\sigma) = \frac{2\sigma}{1+6\sigma+\sigma^2}.
	\end{align*}
	Now substituting both $v\equiv 0$ and $A\equiv A_0$ into equation (\ref{2.6}) and solving for $G$ yields
	\begin{align*}
		G(\xi) = C_1\xi^\frac{1-\sigma}{1+\sigma}
	\end{align*}
	for some positive constant $C_1$. Putting these results together yields
	\begin{align*}
		ds^2 = -C_2\xi^{\frac{4\sigma}{1+\sigma}}dt^2 + \frac{1}{1-2M(\sigma)}dr^2 + r^2d\Omega^2
	\end{align*}
	for some positive constant $C_2$. The density is given by
	\begin{align*}
		\rho = \frac{2M(\sigma)}{\kappa r^2}.
	\end{align*}
	Note that because $v\equiv0$, the coordinate frame is comoving with the fluid. Finally, making the temporal transformation:
	\begin{align*}
		\tilde{t} &= \frac{1+\sigma}{1-\sigma}t^{\frac{1-\sigma}{1+\sigma}},\\
		\tilde{r} &= r,
	\end{align*}
	puts the metric in the explicitly static form
	\begin{align*}
		d\tilde{s}^2 = -C_2\tilde{r}^{\frac{4\sigma}{1+\sigma}}d\tilde{t}^2 + \frac{1}{1-2M(\sigma)}d\tilde{r}^2 + \tilde{r}^2d\Omega^2.
	\end{align*}
	Given that the density remains unchanged in its static form under this transformation, the proof is complete.
\end{proof}

\subsubsection*{Section 2.3}

\begin{proof}[Proof of Proposition \ref{P3}]
	To change FLRW$(\sigma,1)$ from self-similar comoving coordinates to self-similar Schwarzschild coordinates, the following coordinate transformation, given in \cite{CT1971}, is used:
	\begin{align}
		dt &= e^{-\mu}\left(e^\varphi\cosh\omega\, d\hat{t}+e^\psi\sinh\omega\, d\hat{r}\right),\label{2.13}\\
		dr &= e^{-\nu}\left(e^\varphi\sinh\omega\, d\hat{t}+e^\psi\cosh\omega\, d\hat{r}\right),\label{2.14}
	\end{align}
	where:
	\begin{align}
		\tanh\omega &= e^{\psi-\varphi}\frac{\partial_{\hat{t}}(\mathscr{R}\hat{r})}{\partial_{\hat{r}}(\mathscr{R}\hat{r})},\label{2.15}\\
		e^{-2\nu} &= e^{-2\psi}[\partial_{\hat{r}}(\mathscr{R}\hat{r})]^2 - e^{-2\varphi}[\partial_{\hat{t}}(\mathscr{R}\hat{r})]^2,\label{2.16}
	\end{align}
	and $\mu$ is such that $dt$ is a perfect differential. Relations (\ref{2.15}) and (\ref{2.16}) come from setting $r=\mathscr{R}\hat{r}$. The resulting self-similar Schwarzschild form of the metric is then given by
	\begin{align*}
		ds^2 = -e^{2\mu}dt^2 + e^{2\nu}dr^2 + r^2d\Omega^2.
	\end{align*}
	Starting with the $\tanh\omega$ term, we have
	\begin{align*}
		\tanh\omega &= e^{\psi-\varphi}\frac{\partial_{\hat{t}}(\mathscr{R}\hat{r})}{\partial_{\hat{r}}(\mathscr{R}\hat{r})}\\
		&= \beta^{-1}\gamma^{-1}\hat{\xi}^{-\frac{2}{3+3\sigma}}\frac{-\hat{\xi}^2\partial_{\hat{r}}\mathscr{R}}{\mathscr{R} + \hat{r}\partial_{\hat{r}}\mathscr{R}}\\
		&= \beta^{-1}\gamma^{-1}\hat{\xi}^{-\frac{2}{3+3\sigma}}\frac{-\hat{\xi}^2(-\frac{2}{3+3\sigma})\hat{\xi}^{-\frac{2}{3+3\sigma}-1}}{\hat{\xi}^{-\frac{2}{3+3\sigma}} + \hat{\xi}(-\frac{2}{3+3\sigma})\hat{\xi}^{-\frac{2}{3+3\sigma}-1}}\\
		&= 2(1+3\sigma)^{-1}\beta^{-1}\gamma^{-1}\hat{\xi}^{\frac{1+3\sigma}{3+3\sigma}},
	\end{align*}
	and this yields:
	\begin{align*}
		\cosh\omega &= (1-\tanh^2\omega)^{-\frac{1}{2}} = \left[1-4(1+3\sigma)^{-2}\beta^{-2}\gamma^{-2}\hat{\xi}^{\frac{2+6\sigma}{3+3\sigma}}\right]^{-\frac{1}{2}},\\
		\sinh\omega &= \tanh\omega(1-\tanh^2\omega)^{-\frac{1}{2}} = 2(1+3\sigma)^{-1}\beta^{-1}\gamma^{-1}\hat{\xi}^{\frac{1+3\sigma}{3+3\sigma}}\left[1-4(1+3\sigma)^{-2}\beta^{-2}\gamma^{-2}\hat{\xi}^{\frac{2+6\sigma}{3+3\sigma}}\right]^{-\frac{1}{2}}.
	\end{align*}
	The $e^{-2\nu}$ term is computed similarly, to obtain
	\begin{align*}
		e^{-2\nu} &= e^{-2\psi}[\partial_{\hat{r}}(\mathscr{R}\hat{r})]^2 - e^{-2\varphi}[\partial_{\hat{t}}(\mathscr{R}\hat{r})]^2\\
		&= \gamma^{2}\hat{\xi}^{\frac{4}{3+3\sigma}}\left[\hat{\xi}^{-\frac{2}{3+3\sigma}}+\hat{\xi}\left(-\frac{2}{3+3\sigma}\right)\hat{\xi}^{-\frac{2}{3+3\sigma}-1}\right]^2 - \beta^{-2}\left[-\hat{\xi}^2\left(-\frac{2}{3+3\sigma}\right)\hat{\xi}^{-\frac{2}{3+3\sigma}-1}\right]^2\\
		&= (1+3\sigma)^2(3+3\sigma)^{-2}\gamma^{2}\left[1-4(1+3\sigma)^{-2}\beta^{-2}\gamma^{-2}\hat{\xi}^{\frac{2+6\sigma}{3+3\sigma}}\right],
	\end{align*}
	and this yields:
	\begin{align*}
		dt &= \beta e^{-\mu}\left[1-4(1+3\sigma)^{-2}\beta^{-2}\gamma^{-2}\hat{\xi}^{\frac{2+6\sigma}{3+3\sigma}}\right]^{-\frac{1}{2}}\left[d\hat{t}+2(1+3\sigma)^{-1}\beta^{-2}\gamma^{-2}\hat{\xi}^{-1}\hat{\xi}^{\frac{2+6\sigma}{3+3\sigma}}d\hat{r}\right],\\
		dr &= 2(3+3\sigma)^{-1}\hat{\xi}^{\frac{1+3\sigma}{3+3\sigma}}\left[d\hat{t}+\frac{1}{2}(1+3\sigma)\hat{\xi}^{-1}d\hat{r}\right].
	\end{align*}
	Now given that $\mu$ is such that the right hand side of (\ref{2.13}) is a perfect differential, then
	\begin{align*}
		\frac{\partial}{\partial\hat{r}}e^{-\eta} = \frac{\partial}{\partial\hat{t}}\left[2(1+3\sigma)^{-1}\beta^{-2}\gamma^{-2}\hat{\xi}^{-1}\hat{\xi}^{\frac{2+6\sigma}{3+3\sigma}}e^{-\eta}\right],
	\end{align*}
	where
	\begin{align*}
		e^{-\eta} = e^{-\mu}\left[1-4(1+3\sigma)^{-2}\beta^{-2}\gamma^{-2}\hat{\xi}^{\frac{2+6\sigma}{3+3\sigma}}\right]^{-\frac{1}{2}}.
	\end{align*}
	The differential equation for $\eta$ is equivalent to
	\begin{align*}
		\frac{1}{\hat{t}}\frac{d}{d\hat{\xi}}e^{-\eta} = -\frac{\hat{r}}{\hat{t}^2}\frac{d}{d\hat{\xi}}\left[2(1+3\sigma)^{-1}\beta^{-2}\gamma^{-2}\hat{\xi}^{-1}\hat{\xi}^{\frac{2+6\sigma}{3+3\sigma}}e^{-\eta}\right],
	\end{align*}
	which implies
	\begin{align*}
		\eta' &= 2(1+3\sigma)^{-1}\beta^{-2}\gamma^{-2}\hat{\xi}e^\eta\left[-\hat{\xi}^{-1}\hat{\xi}^{\frac{2+6\sigma}{3+3\sigma}}\eta'e^{-\eta}+(3\sigma-1)(3+3\sigma)^{-1}\hat{\xi}^{-\frac{4}{3+3\sigma}}e^{-\eta}\right]\\
		&= 2(1+3\sigma)^{-1}(3\sigma-1)(3+3\sigma)^{-1}\beta^{-2}\gamma^{-2}\hat{\xi}^{-1}\hat{\xi}^{\frac{2+6\sigma}{3+3\sigma}}\left[1+2(1+3\sigma)^{-1}\beta^{-2}\gamma^{-2}\hat{\xi}^{\frac{2+6\sigma}{3+3\sigma}}\right]^{-1}.
	\end{align*}
	Integrating yields
	\begin{align*}
		\eta = (3\sigma-1)(2+6\sigma)^{-1}\log\left[1+2(1+3\sigma)^{-1}\beta^{-2}\gamma^{-2}\hat{\xi}^{\frac{2+6\sigma}{3+3\sigma}}\right] + C_3
	\end{align*}
	for some constant $C_3$, and thus
	\begin{align*}
		e^{-\eta} = \delta\left[1+2(1+3\sigma)^{-1}\beta^{-2}\gamma^{-2}\hat{\xi}^{\frac{2+6\sigma}{3+3\sigma}}\right]^{\frac{1-3\sigma}{2+6\sigma}}
	\end{align*}
	for some positive constant $\delta$. Plugging this into $dt$ then gives
	\begin{align*}
		dt = \delta\beta\left[1+2(1+3\sigma)^{-1}\beta^{-2}\gamma^{-2}\hat{\xi}^{\frac{2+6\sigma}{3+3\sigma}}\right]^{\frac{1-3\sigma}{2+6\sigma}}\left[d\hat{t}+2(1+3\sigma)^{-1}\beta^{-2}\gamma^{-2}\hat{\xi}^{-1}\hat{\xi}^{\frac{2+6\sigma}{3+3\sigma}}d\hat{r}\right].
	\end{align*}
	Because the transformation is taking the metric from one self-similar form to another, let
	\begin{align*}
		t = \mathscr{T}(\hat{\xi})\hat{t},
	\end{align*}
	so that:
	\begin{align*}
		\frac{\partial t}{\partial\hat{t}} &= \mathscr{T}(\hat{\xi}) - \hat{\xi}\mathscr{T}'(\hat{\xi}) = \delta\beta\left[1+2(1+3\sigma)^{-1}\beta^{-2}\gamma^{-2}\hat{\xi}^{\frac{2+6\sigma}{3+3\sigma}}\right]^{\frac{1-3\sigma}{2+6\sigma}},\\
		\frac{\partial t}{\partial\hat{r}} &= \mathscr{T}'(\hat{\xi}) = 2\delta(1+3\sigma)^{-1}\beta^{-1}\gamma^{-2}\hat{\xi}^{-1}\hat{\xi}^{\frac{2+6\sigma}{3+3\sigma}}\left[1+2(1+3\sigma)^{-1}\beta^{-2}\gamma^{-2}\hat{\xi}^{\frac{2+6\sigma}{3+3\sigma}}\right]^{\frac{1-3\sigma}{2+6\sigma}}.
	\end{align*}
	Solving these equations yields the same function for $\mathscr{T}(\hat{\xi})$ only when the integration constant is zero, therefore
	\begin{align*}
		\mathscr{T}(\hat{\xi}) = \delta\beta\left[1+2(1+3\sigma)^{-1}\beta^{-2}\gamma^{-2}\hat{\xi}^{\frac{2+6\sigma}{3+3\sigma}}\right]^{\frac{3+3\sigma}{2+6\sigma}}.
	\end{align*}
	Now the fluid four-velocity $\vec{u}$ is given in self-similar comoving coordinates as
	\begin{align*}
		\vec{u} = (\hat{u}^0,\hat{u}^1,\hat{u}^2,\hat{u}^3) = (e^{-\mu},0,0,0) = (\beta^{-1},0,0,0),
	\end{align*}
	and in self-similar Schwarzschild coordinates as
	\begin{align*}
		\vec{u} &= (u^0,u^1,u^2,u^3) = \left(\hat{u}^0\frac{\partial t}{\partial\hat{t}},\hat{u}^0\frac{\partial r}{\partial\hat{t}},0,0\right) = \left(\beta^{-1}\frac{\partial t}{\partial\hat{t}},\beta^{-1}\frac{\partial r}{\partial\hat{t}},0,0\right)\\
		&= \left(\delta\left[1+2(1+3\sigma)^{-1}\beta^{-2}\gamma^{-2}\hat{\xi}^{\frac{2+6\sigma}{3+3\sigma}}\right]^{\frac{1-3\sigma}{2+6\sigma}},2(3+3\sigma)^{-1}\beta^{-1}\hat{\xi}^{\frac{1+3\sigma}{3+3\sigma}},0,0\right).
	\end{align*}
	Therefore, by Definition \ref{D2}
	\begin{align*}
		v = e^{\nu-\mu}\frac{u^1}{u^0} = 2(3+3\sigma)^{-1}\beta^{-1}\hat{\xi}^{\frac{1+3\sigma}{3+3\sigma}}.
	\end{align*}
	Finally, by substituting in $\beta$ and $\gamma$ and noting
	\begin{align*}
		\xi = \frac{r}{t} = \frac{\mathscr{R}(\hat{\xi})\hat{r}}{\mathscr{T}(\hat{\xi})\hat{t}} = \frac{\mathscr{R}(\hat{\xi})}{\mathscr{T}(\hat{\xi})}\hat{\xi},
	\end{align*}
	the rest follows.
\end{proof}

\begin{proof}[Proof of Proposition \ref{P4}]
	First note (\ref{2.19}) is immediately obtained from Proposition \ref{P3}. Then by definition:
	\begin{align*}
		A &= e^{-2\nu} = 1 - \frac{2}{3}\hat{\xi}^{\frac{2+6\sigma}{3+3\sigma}},\\
		G &= \xi e^{\nu-\mu} = \frac{1}{\sqrt{6}}(3+3\sigma)\hat{\xi}^{\frac{1+3\sigma}{3+3\sigma}}\left[1+\frac{1}{3}(1+3\sigma)\hat{\xi}^{\frac{2+6\sigma}{3+3\sigma}}\right]^{-1},
	\end{align*}
	from which (\ref{2.17}) and (\ref{2.18}) follow. To check (\ref{2.17})--(\ref{2.19}) satisfy equations (\ref{2.5})--(\ref{2.7}), first show
	\begin{align*}
		\xi\frac{d}{d\xi} &= \xi\frac{d\hat{\xi}}{d\xi}\frac{d}{d\hat{\xi}} = \frac{(3+3\sigma)^2}{2+6\sigma}\frac{v}{AG}\hat{\xi}\frac{d}{d\hat{\xi}},
	\end{align*}
	and secondly show:
	\begin{align*}
		\hat{\xi}\frac{dA}{d\hat{\xi}} &= -\frac{2+6\sigma}{3+3\sigma}v^2,\\
		\hat{\xi}\frac{dG}{d\hat{\xi}} &= \frac{2+6\sigma}{(3+3\sigma)^2}\left(1-\frac{1}{2}(1+3\sigma)v^2\right)\frac{G^2}{v},\\
		\hat{\xi}\frac{dv}{d\hat{\xi}} &= \frac{1+3\sigma}{3+3\sigma}v.
	\end{align*}
	Then by Proposition \ref{P4} it is not difficult to confirm (\ref{2.17})--(\ref{2.19}) solve equations (\ref{2.5})--(\ref{2.7}).
\end{proof}

\subsubsection*{Section 2.5}

\begin{proof}[Proof of Proposition \ref{P6}]
	If each component of the stress-energy-momentum tensor $T$ is differentiable in $U$, then the conservation of mass-energy and momentum in $U$ is given by
	\begin{align*}
		\nabla_\nu T^{\mu\nu} = 0.
	\end{align*}
	These conditions are equivalent to
	\begin{align*}
		\int_U \varphi\nabla_\nu T^{\mu\nu}\, d\vec{x} = 0\quad \forall\ \varphi \in C^\infty_c(U),
	\end{align*}
	and by using the identity
	\begin{align*}
		\varphi\nabla_\nu T^{\mu\nu} = \nabla_\nu(\varphi T^{\mu\nu}) - T^{\mu\nu}\nabla_\nu\varphi,
	\end{align*}
	are then equivalent to
	\begin{align*}
		\int_U \nabla_\nu(\varphi T^{\mu\nu})\, d\vec{x} - \int_U T^{\mu\nu}\nabla_\nu\varphi\, d\vec{x} = 0\quad \forall\ \varphi \in C^\infty_c(U).
	\end{align*}
	Now since $\varphi$ is compactly supported within $U$, the divergence theorem implies
	\begin{align*}
		\int_U \nabla_\nu(\varphi T^{\mu\nu})\, d\vec{x} = \int_{\partial U} \varphi T^{\mu\nu}n_\nu\, d\boldsymbol{x} = 0\quad \forall\ \varphi \in C^\infty_c(U),
	\end{align*}
	where $\vec{n}$ denotes the outward normal vector and $\boldsymbol{x}$ denotes the restriction of the $\vec{x}$ coordinates to $\Sigma$. Thus (\ref{2.34}) yields the weak form of the conservation of mass-energy and momentum across $\Sigma\cap U$. Conditions (\ref{2.35}) then follow from equation (\ref{2.1}).
\end{proof}

\begin{proof}[Proof of Proposition \ref{P7}]
	Let $U=U_1\cup U_2$ where $\partial U_1\cap\partial U_2=\Sigma\cap U$ and assume $g$ and $\bar{g}$ are sufficiently regular on their respective side of $\Sigma$, then
	\begin{align*}
		\int_{U} G^{\mu\nu}\nabla_\nu\varphi\, d\vec{x} &= \int_{U_1} G^{\mu\nu}(g)\nabla_\nu\varphi\, d\vec{x} + \int_{U_2} G^{\mu\nu}(\bar{g})\bar{\nabla}_\nu\varphi\, d\vec{x}\\
		&= \int_{U_1} \nabla_\nu\big(\varphi G^{\mu\nu}(g)\big)\, d\vec{x} - \int_{U_1} \varphi\nabla_\nu G^{\mu\nu}(g)\, d\vec{x}\\
		&+ \int_{U_2} \bar{\nabla}_\nu\big(\varphi G^{\mu\nu}(\bar{g})\big)\, d\vec{x} - \int_{U_2} \varphi\bar{\nabla}_\nu G^{\mu\nu}(\bar{g})\, d\vec{x}\\
		&= \int_{\partial U_1} \varphi G^{\mu\nu}(g)n_\nu\, d\boldsymbol{x} - \int_{\partial U_2} \varphi G^{\mu\nu}(\bar{g})\bar{n}_\nu\, d\boldsymbol{x}\\
		&= \int_\Sigma \varphi G^{\mu\nu}(g)n_\nu\, d\boldsymbol{x} - \int_\Sigma \varphi G^{\mu\nu}(\bar{g})\bar{n}_\nu\, d\boldsymbol{x}\\
		&= \int_\Sigma \varphi[G^{\mu\nu}]n_\nu\, d\boldsymbol{x}\quad \forall\ \varphi \in C^\infty_c(U),
	\end{align*}
	where $\vec{n}$ denotes the outward normal vector and $\boldsymbol{x}$ denotes the restriction of the $\vec{x}$ coordinates to $\Sigma$. Thus
	\begin{align*}
		[G^{\mu\nu}]n_\nu = 0
	\end{align*}
	if and only if
	\begin{align*}
		\int_{U} G^{\mu\nu}\nabla_\nu\varphi\, d\vec{x} = 0\quad \forall\ \varphi \in C^\infty_c(U).
	\end{align*}
\end{proof}

\subsubsection*{Section 2.7}

\begin{proof}[Proof of Lemma \ref{L1}]
	This proof largely follows an analogous proof provided in \cite{ST1995}. To begin, recall that the speed of a shock is a coordinate dependent quantity that can be interpreted in a special relativistic sense at a point $\vec{p}$ in coordinate systems for which
	\begin{align}
		d\tilde{s}^2 = -d\tilde{t}^2 + d\tilde{r}^2 + \tilde{r}_0^2d\Omega^2,\label{2.37}
	\end{align}
	where $\tilde{r}_0$ is the value of $\tilde{r}$ at $\vec{p}$. In a locally Minkowskian coordinate frame, a speed at $\vec{p}$ transforms according to the special relativistic velocity transformation law when a Lorentz transformation is performed. The shock speed at a point $\vec{p}$ on the shock in a locally Minkowskian frame that is comoving with the interior fluid will now be determined. To this end, let $\hat{r}=\Phi(\hat{t})$ be the position of the shock in $(\hat{t},\hat{r})$ coordinates and let $(\tilde{t},\tilde{r})$ coordinates correspond to a locally Minkowskian system at $\vec{p}$ obtained from $(\hat{t},\hat{r})$ by a transformation of the form:
	\begin{align*}
		\tilde{t} &= \tilde{t}(\hat{t}),\\
		\tilde{r} &= \tilde{r}(\hat{r}),
	\end{align*}
	so that, in $(\tilde{t},\tilde{r})$ coordinates
	\begin{align*}
		d\tilde{s}^2 = -e^{2\varphi}\left(\frac{d\hat{t}}{d\tilde{t}}\right)^2d\tilde{t}^2 + e^{2\psi}\left(\frac{d\hat{r}}{d\tilde{r}}\right)^2d\tilde{r}^2 + \mathscr{R}^2\hat{r}^2d\Omega^2.
	\end{align*}
	Choose $(\tilde{t},\tilde{r})$ so that:
	\begin{align*}
		\frac{d\tilde{t}}{d\hat{t}} &= e^\varphi, & \frac{d\tilde{r}}{d\hat{r}} &= e^\psi.
	\end{align*}
	Then in $(\tilde{t},\tilde{r})$ coordinates at $\vec{p}$ the metric takes the form of (\ref{2.37}). The $(\tilde{t},\tilde{r})$ coordinates represent the class of locally Minkowskian coordinate frames that are fixed relative to the fluid particles of the interior spacetime at the point $\vec{p}$, that is, any two members of this class of coordinate frames differ only by higher order terms that do not affect the calculation of radial velocities at $\vec{p}$. Thus the speed $\dot{\tilde{r}}$ of a particle in $(\tilde{t},\tilde{r})$ coordinates gives the value of the speed of the particle relative to the interior fluid in the special relativistic sense. If the speed of a particle in $(\hat{t},\hat{r})$ coordinates is $\dot{\hat{r}}$, then its geometric speed relative to observers fixed with the interior fluid, and hence also fixed relative to the radial coordinate $\hat{r}$ of the metric $g$ because the fluid is comoving, is equal to
	\begin{align*}
		e^{\psi-\varphi}\dot{\hat{r}},
	\end{align*}
	since
	\begin{align}
		\frac{d\hat{r}}{d\hat{t}} = \frac{d\hat{r}}{d\tilde{r}}\frac{d\tilde{t}}{d\hat{t}}\frac{d\tilde{r}}{d\tilde{t}} = e^{\varphi-\psi}\frac{d\tilde{r}}{d\tilde{t}}.\label{2.38}
	\end{align}
	Now considering the shock wave moves with speed $\dot{\Phi}$, therefore by (\ref{2.38}) the speed of the shock relative to the interior fluid particles must be given by (\ref{2.36}), which completes the proof.
\end{proof}

\begin{proof}[Proof of Proposition \ref{P8}]
	This proof largely follows an analogous proof provided in \cite{ST1995}. Since the shock wave is expanding, the Lax characteristic conditions are given by (\ref{2.39}), and by Lemma \ref{L1}, the shock speed is given by (\ref{2.36}). As we are working in $(\tilde{t},\tilde{r})$ coordinates, $\tilde{\lambda}_{Int}^+$ is already known, so it remains to determine $\tilde{\lambda}_{Ext}^+$. Let $(\hat{v}^0,\hat{v}^1)$, $(\bar{v}^0,\bar{v}^1)$ and $(\tilde{v}^0,\tilde{v}^1)$ denote the non-zero components of the exterior fluid four-velocity given in interior comoving, exterior comoving and interior locally Minkowskian coordinates respectively. Since the aim is to compute the characteristic speed, which is a ratio of two vector components, a tangent vector of any length is sufficient. By writing $(\hat{x}^0,\hat{x}^1)=(\hat{t},\hat{r})$ and $(\bar{x}^0,\bar{x}^1)=(\bar{t},\bar{r})$, then
	\begin{align*}
		\hat{v}^\mu = \frac{\partial\hat{x}^\mu}{\partial\bar{x}^\nu}\bar{v}^\nu = \frac{\partial\hat{x}^\mu}{\partial\bar{x}^0}\bar{v}^0 = \frac{\partial\hat{x}^\mu}{\partial\bar{x}^0}.
	\end{align*}
	In light of this, the speed of the exterior fluid as measured in the interior coordinates $(\hat{t},\hat{r})$ is given by
	\begin{align*}
		\hat{w} = \frac{\hat{v}^1}{\hat{v}^0} = \frac{\partial\hat{x}^1}{\partial\bar{x}^0}\left(\frac{\partial\hat{x}^0}{\partial\bar{x}^0}\right)^{-1} = \frac{\partial\hat{r}}{\partial\bar{t}}\left(\frac{\partial\hat{t}}{\partial\bar{t}}\right)^{-1},
	\end{align*}
	and so, by (\ref{2.38}),
	\begin{align*}
		\tilde{w}=e^{\psi-\varphi}\hat{w}.
	\end{align*}
	This gives the exterior fluid speed in $(\tilde{t},\tilde{r})$ coordinates, and since the sound speed in the exterior spacetime is given by
	\begin{align*}
		\sqrt{\frac{d\bar{p}}{d\bar{\rho}}},
	\end{align*}
	the relativistic addition of velocities formula yields
	\begin{align*}
		\tilde{\lambda}^+_{Ext} = \frac{\tilde{w}+\sqrt{\frac{d\bar{p}}{d\bar{\rho}}}}{1+\tilde{w}\sqrt{\frac{d\bar{p}}{d\bar{\rho}}}},
	\end{align*}
	which completes the proof.
\end{proof}

\subsection{Interval Arithmetic}\label{A.2}

\subsubsection*{Inequality \ref{5.17}}

We begin by showing inequality (\ref{5.17}), that is,
\begin{align*}
	\frac{d}{dG}(A_M-A)\Big\vert_{\substack{A=A_M\\ v=v_m}} \geq 0
\end{align*}
on the interval $0<G<\frac{3}{5}$ for $\sigma=\frac{1}{3}$ and $a=\frac{17}{6}$. The expression $(A_M-A)'$ with $A=A_M$ and $v=v_m$, the conservative choice of $v$, is an odd 131st order polynomial in $G$ with a leading order of 33. We thus define
\begin{align*}
	p_1(x) = p_1\left(\frac{25}{9}G^2\right) = G^{-33}(A_M-A)'(G),
\end{align*}
so that $p_1(x)$ is an order 49 polynomial with leading order 0. Writing the coefficients to one decimal place, we find
\begin{align*}
	p_1(x) &= 67161.1 + 1208390.0x + 1393431.2x^2 - 4372737.8x^3 + 5986240.9x^4\\
	&- 6752795.3x^5 + 7064632.4x^6 - 7007877.7x^7 + 6827093.5x^8 - 6488710.6x^9\\
	&+ 6137919.2x^{10} - 5725506.5x^{11} + 5336513.1x^{12} - 4930984.2x^{13} + 4558174.4x^{14}\\
	&- 4189622.5x^{15} + 3860994.3x^{16} - 3135788.5x^{17} + 2244699.6x^{18} - 1574639.5x^{19}\\
	&+ 1160577.0x^{20} - 854498.4x^{21} + 644314.6x^{22} - 479718.5x^{23} + 361405.8x^{24}\\
	&- 266658.0x^{25} + 196805.5x^{26} - 140476.7x^{27} + 98421.5x^{28} - 64541.8x^{29}\\
	&+ 39206.0x^{30} - 18931.4x^{31} + 3928.6x^{32} + 8213.2x^{33} - 8265.4x^{34}\\
	&+ 5774.5x^{35} - 4029.6x^{36} + 3075.0x^{37} - 2250.7x^{38} + 1747.4x^{39}\\
	&- 1300.2x^{40} + 1005.9x^{41} - 746.5x^{42} + 567.0x^{43} - 411.1x^{44}\\
	&+ 300.0x^{45} - 204.7x^{46} + 136.1x^{47} - 77.8x^{48} + 36.0x^{49}.
\end{align*}
Given that we wish to show $p_1(x)\geq0$ for $0<x<1$, we can make conservative adjustments to this polynomial by cancelling positive lower order coefficients with negative higher order coefficients. In addition, we round down the positive coefficients and round up the negative coefficients. Thus, for $0<x<1$, we notice that the positivity of $\bar{p}_1(x)$ implies the positivity of $p_1(x)$, where
\begin{align*}
	\bar{p}_1(x) &= 67161 + 1208389x + 1393432x^2 - 4372738x^3 + 5986240x^4\\
	&- 6752796x^5 + 7064632x^6 - 7007878x^7 + 6827093x^8 - 6488711x^9\\
	&+ 6137919x^{10} - 5725507x^{11} + 5336513x^{12} - 4930985x^{13} + 4558174x^{14}\\
	&- 4189623x^{15} + 3860994x^{16} - 3135789x^{17} + 2244699x^{18} - 1574640x^{19}.
\end{align*}
Taking the derivative, we obtain
\begin{align*}
	\bar{p}_1'(x) &= 1208389 + 2786862x - 13118214x^2 + 23944960x^3\\
	&- 33763980x^4 + 42387792x^5 - 49055146x^6 + 54616744x^7 - 58398399x^8\\
	&+ 61379190x^9 - 62980577x^{10} + 64038156x^{11} - 64102805x^{12} + 63814436x^{13}\\
	&- 62844345x^{14} + 61775904x^{15} - 53308413x^{16} + 40404582x^{17} - 29918160x^{18},
\end{align*}
and we can use this to obtain a crude analytic lower bound on the derivative by grouping the terms, hence
\begin{align*}
	\min_{0\leq x\leq1}\bar{p}_1'(x) &\geq (1208389+2786862-13118214) + (23944960-33763980)\\
	&+ (42387792-49055146) + (54616744-58398399)\\
	&+ (61379190-62980577) + (64038156-64102805)\\
	&= -31057028.
\end{align*}
This represents the maximum possible decrease of $\bar{p}_1(x)$ on the interval $0<x<1$. Now given that $\bar{p}_1(0)=67161$, providing $\bar{p}_1(x_n)\geq67161$ for each point $x_n$ separated by at most
\begin{align*}
	\frac{67161}{31057028} > \frac{1}{500} = 0.002,
\end{align*}
then we can guarantee the positivity of $\bar{p}_1(x)$ on $0<x<1$. This is checked using Wolfram Mathematica Version 14.1 with the input:
\begin{verbatim*}
P1[x_]:=67161+1208389x+1393431x^2-4372738x^3+5986240x^4
-6752796x^5+7064632x^6-7007878x^7+6827093x^8-6488711x^9
+6137919x^10-5725507x^11+5336513x^12-4930985x^13+4558174x^14
-4189623x^15+3860994x^16-3135789x^17+2244699x^18-1574640x^19

For[x=0,P1[x]>67160&&x<1,x=x+0.002]

Print[x]
\end{verbatim*}
The output is 1, confirming that $\bar{p}_1(x_n)\geq67161$ with $x_n=0.002n$ for each $n\in\{0,1,\dots,499,500\}$. Therefore inequality (\ref{5.17}) holds on the interval $0<G<\frac{3}{5}$.

\subsubsection*{Inequality \ref{5.18}}

We now turn to inequality (\ref{5.17}), that is,
\begin{align*}
	\frac{d}{dG}(A-A_m)\Big\vert_{\substack{A=A_m\\ v=v_M}} \geq 0
\end{align*}
on the interval $0<G<\frac{3}{5}$ for $\sigma=\frac{1}{3}$ and $a=\frac{17}{6}$. The expression $(A-A_m)'$ with $A=A_m$ and $v=v_M$, the new conservative choice of $v$, is similarly an odd 131st order polynomial in $G$ with a leading order of 33. We thus define
\begin{align*}
	p_2(x) = p_2\left(\frac{25}{9}G^2\right) = G^{-33}(A-A_m)'(G),
\end{align*}
so that $p_2(x)$ is also an order 49 polynomial with leading order 0. Again, writing the coefficients to one decimal place unless the coefficient is less than 0.1, we find
\begin{align*}
	p_2(x) &= 2149156.3 - 4468591.8x^1 + 4222794.8x^2 - 3293456.4x^3 + 2598711.7x^4\\
	&- 2053219.8x^5 + 1630902.2x^6 - 1302193.6x^7 + 1038967.6x^8 - 832826.9x^9\\
	&+ 664538.5x^{10} - 531758.6x^{11} + 422419.4x^{12} - 335521.6x^{13} + 263840.6x^{14}\\
	&- 206528.6x^{15} + 143457.3x^{16} - 86328.7x^{17} + 54454.2x^{18} - 37413.4x^{19}\\
	&+ 25471.0x^{20} - 17559.1x^{21} + 11842.8x^{22} - 7987.2x^{23} + 5203.8x^{24}\\
	&- 3338.6x^{25} + 2020.7x^{26} - 1164.5x^{27} + 589.0x^{28} - 243.8x^{29}\\
	&+ 38.7x^{30} + 55.4x^{31} - 43.1x^{32} + 1.7x^{33} + 2.7x^{34}\\
	&- 2.2x^{35} + 1.1x^{36} - 0.9x^{37} + 0.4x^{38} - 0.3x^{39}\\
	&+ 0.1x^{40} - 0.095x^{41} + 0.027x^{42} - 0.016x^{43} - 0.0046x^{44}\\
	&+ 0.0039x^{45} - 0.0075x^{46} + 0.0040x^{47} - 0.0027x^{48} - 0.0021x^{49}.
\end{align*}
We again make conservative adjustments to this polynomial by cancelling positive lower order coefficients with negative higher order coefficients, rounding down the positive coefficients and rounding up the negative coefficients. Thus, for $0<x<1$, the positivity of $\bar{p}_2(x)$ implies the positivity of $p_2(x)$, where
\begin{align*}
	\bar{p}_2(x) &= 2149156 - 4468592x + 4222794x^2 - 3293457x^3 + 2598711x^4\\
	&- 2053220x^5 + 1630902x^6 - 1302194x^7 + 1038967x^8 - 832827x^9\\
	&+ 664538x^{10} - 531759x^{11} + 422419x^{12} - 335522x^{13} + 263840x^{14}\\
	&- 206529x^{15} + 143457x^{16} - 86329x^{17}.
\end{align*}
Taking the derivative, we obtain
\begin{align*}
	\bar{p}_2'(x) &= -4468592 + 8445588x - 9880371x^2 + 10394844x^3 - 10266100x^4\\
	&+ 9785412x^5 - 9115358x^6 + 8311736x^7 - 7495443x^8 + 6645380x^9\\
	&- 5849349x^{10} + 5069028x^{11} - 4361786x^{12} + 3693760x^{13} - 3097935x^{14}\\
	&+ 2295312x^{15} - 1467593x^{16},
\end{align*}
and again we can use this to obtain a crude analytic lower bound on the derivative by grouping the terms, hence
\begin{align*}
	\min_{0\leq x\leq1}\bar{p}_2'(x) \geq -4468592 + (8445588-9880371) = -5903375.
\end{align*}
As like before, this represents the maximum possible decrease of $\bar{p}_2(x)$ on the interval $0<x<1$, and given that $\bar{p}_2(1)=24355$, providing $\bar{p}_2(x_n)\geq24355$ for each point $x_n$ separated by at most
\begin{align*}
	\frac{24355}{5903375} > \frac{1}{250} = 0.004,
\end{align*}
then we can guarantee the positivity of $\bar{p}_2(x)$ on $0<x<1$. This is again checked using Wolfram Mathematica Version 14.1 with the input:
\begin{verbatim*}
P2[x_]:=2149156-4468592x+4222794x^2-3293457x^3+2598711x^4
-2053220x^5+1630902x^6-1302194x^7+1038967x^8-832827x^9
+664538x^10-531759x^11+422419x^12-335522x^13+263840x^14
-206529x^15+143457x^16-86329x^17

For[x=0,P2[x]>24354&&x<1,x=x+0.004]

Print[x]
\end{verbatim*}
The output is again 1, confirming that $\bar{p}_2(x_n)\geq24355$ with $x_n=0.004n$ for each $n\in\{0,1,\dots,249,250\}$. Therefore inequality (\ref{5.18}) holds on the interval $0<G<\frac{3}{5}$.

\subsubsection*{Inequality \ref{5.19}}

We now consider inequality \ref{5.19}, given by
\begin{align*}
	\frac{d}{dG}(v_M-v)\Big\vert_{\substack{v=v_M\\ A=A_M}} \geq 0
\end{align*}
on the interval $0<G<\frac{3}{5}$ for $\sigma=\frac{1}{3}$ and $a=\frac{17}{6}$. The expression $(v_M-v)'$ with $v=v_M$ and $A=A_m$, the conservative choice of $A$, is an even 232nd order polynomial in $G$ with a leading order of 34. We thus define
\begin{align*}
	p_3(x) = p_3\left(\frac{25}{9}G^2\right) = G^{-34}(v_M-v)'(G),
\end{align*}
so that $p_3(x)$ is an order 99 polynomial with leading order 0. Writing the coefficients to one decimal place, we find
\begin{align*}
	p_3(x) &= 484998.3 + 361395.0x - 66569.6x^2 + 166861.7x^3 - 1299.9x^4\\
	&+ 100239.1x^5 - 3234.9x^6 + 64602.3x^7 - 2871.0x^8 + 42298.1x^9\\
	&- 2142.1x^{10} + 27696.3x^{11} - 1601.2x^{12} + 17912.5x^{13} - 1275.0x^{14}\\
	&+ 11297.8x^{15} - 3897.4x^{16} + 3735.6x^{17} + 369.1x^{18} + 2142.1x^{19}\\
	&+ 218.7x^{20} + 1214.4x^{21} + 229.8x^{22} + 652.8x^{23} + 181.5x^{24}\\
	&+ 335.3x^{25} + 127.0x^{26} + 163.0x^{27} + 83.3x^{28} + 74.5x^{29}\\
	&+ 52.5x^{30} + 33.0x^{31} + 38.0x^{32} + 27.3x^{33} + 18.0x^{34}\\
	&+ 13.8x^{35} + 9.6x^{36} + 7.0x^{37} + 4.9x^{38} + 3.5x^{39}\\
	&+ 2.4x^{40} + 1.7x^{41} + 1.2x^{42} + 0.9x^{43} + 0.6x^{44}\\
	&+ 0.4x^{45} + 0.3x^{46} + 0.2x^{47} + 0.1x^{48} + O\left(x^{49}\right),
\end{align*}
with all higher order coefficients less than $0.1$. We immediately see
\begin{align*}
	p_3(x) &\geq 484998.3 + (361395.0x-66569.6x^2) + (166861.7x^3-1299.9x^4)\\
	&+ (100239.1x^5-3234.9x^6) + (64602.3x^7-2871.0x^8) + (42298.1x^9-2142.1x^{10})\\
	&+ (27696.3x^{11}-1601.2x^{12}) + (17912.5x^{13}-1275.0x^{14}) + (11297.8x^{15}-3897.4x^{16}) - 5x^{49}\\
	&\geq 0
\end{align*}
on the interval $0<x<1$, and therefore inequality (\ref{5.19}) holds on the interval $0<G<\frac{3}{5}$.

\subsubsection*{Inequality \ref{5.20}}

Inequality \ref{5.20}, given by
\begin{align*}
	\frac{d}{dG}(v-v_m)\Big\vert_{\substack{v=v_m\\ A=A_m}} \geq 0
\end{align*}
on the interval $0<G<\frac{3}{5}$ for $\sigma=\frac{1}{3}$ and $a=\frac{17}{6}$ will follow similarly to inequality (\ref{5.19}). Again, the expression $(v-v_m)'$ with $v=v_m$ and $A=A_M$, the conservative choice of $A$, is an even 232nd order polynomial in $G$ with a leading order of 34. We thus define
\begin{align*}
	p_4(x) = p_4\left(\frac{25}{9}G^2\right) = G^{-34}(v-v_m)'(G),
\end{align*}
so that $p_4(x)$ is an order 99 polynomial with leading order 0. Again, writing the coefficients to one decimal place, we find
\begin{align*}
	p_4(x) &= 21191189.9 - 20457561.7x + 2548849.3x^2 + 1846153.2x^3 - 2441295.7x^4\\
	&+ 1556401.3x^5 - 1595278.6x^6 + 1021111.5x^7 - 1069470.5x^8 + 689631.7x^9\\
	&- 747916.5x^{10} + 483306.2x^{11} - 539631.7x^{12} + 349697.9x^{13} - 398158.9x^{14}\\
	&+ 259603.2x^{15} + 1747036.7x^{16} - 4163811.0x^{17} + 3679189.6x^{18} - 2513361.1x^{19}\\
	&+ 1596962.7x^{20} - 1153474.2x^{21} + 797367.3x^{22} - 637410.9x^{23} + 459013.5x^{24}\\
	&- 389728.9x^{25} + 287585.4x^{26} - 253413.4x^{27} + 190620.5x^{28} - 171795.7x^{29}\\
	&+ 131448.7x^{30} - 120084.3x^{31} + 94009.2x^{32} - 71266.9x^{33} + 8328.2x^{34}\\
	&+ 8150.7x^{35} - 12764.2x^{36} + 4258.1x^{37} - 6820.7x^{38} + 2054.6x^{39}\\
	&- 4028.7x^{40} + 1224.5x^{41} - 2512.4x^{42} + 834.4x^{43} - 1632.8x^{44}\\
	&+ 607.8x^{45} - 1099.8x^{46} + 455.9x^{47} - 764.7x^{48} - 889.8x^{49}\\
	&+ 1305.2x^{50} - 1470.1x^{51} + O\left(z^{52}\right),
\end{align*}
with all higher order coefficients less than $1000$. We can group these terms in a conservative manner to yield
\begin{align*}
	p_4(x) &\geq 700000 + (20491189.9-20457561.7x)\\
	&+ 100000x^2 + (2448849.3x^2-2441295.7x^4)\\
	&+ 250000x^3 + (1596153.2x^3-1595278.6x^6)\\
	&+ 400000x^5 + (1156401.3x^5-1069470.5x^8)\\
	&+ 250000x^7 + (771111.5x^7-747916.5x^{10})\\
	&+ 150000x^9 + (539631.7x^9-539631.7x^{12})\\
	&+ 50000x^{11} + (433306.2x^{11}-398158.9x^{14})\\
	&+ 349697.9x^{13} + 259603.2x^{15} + 1747036.7x^{16} - 4163811.0x^{17}\\
	&+ (3679189.6x^{18}-2513361.1x^{19}) + (1596962.7x^{20}-1153474.2x^{21})\\
	&+ (797367.3x^{22}-637410.9x^{23}) + (459013.5x^{24}-389728.9x^{25})\\
	&+ (287585.4x^{26}-253413.4x^{27}) + (190620.5x^{28}-171795.7x^{29})\\
	&+ (131448.7x^{30}-120084.3x^{31}) + (94009.2x^{32}-71266.9x^{33}-12764.2x^{36})\\
	&+ (8328.2x^{34}-6820.7x^{38}) + (8150.7x^{35}-4028.7x^{40}) + (4258.1x^{37}-2512.4x^{42})\\
	&+ (2054.6x^{39}-1632.8x^{44}) + (1224.5x^{41}-1099.8x^{46}) + (834.4x^{43}-764.7x^{48})\\
	&+ (607.8x^{45}+455.9x^{47}-889.8x^{49}+1305.2x^{50}-1470.1x^{51}) - 50000x^{52}\\
	&\geq 0,
\end{align*}
noting that the ungrouped terms form a strictly positive polynomial on $0<x<1$. Therefore inequality (\ref{5.20}) holds on the interval $0<G<\frac{3}{5}$.

\end{document}